\newcolumntype{L}[1]{>{\raggedright\let\newline\\\arraybackslash\hspace{0pt}}m{#1}}
\newcolumntype{C}[1]{>{\centering\let\newline\\\arraybackslash\hspace{0pt}}m{#1}}
\newcolumntype{R}[1]{>{\raggedleft\let\newline\\\arraybackslash\hspace{0pt}}m{#1}}
\newtheorem{theorem}{Theorem}
\newtheorem{lemma}{Lemma}
\newcommand*{\E}{\mathrm{E}}
\newcommand*{\Var}{\mathrm{Var}}
\newcommand*{\Prec}{\mathrm{Prec}}
\newcommand*{\Cov}{\mathrm{Cov}}
\newcommand*{\Cor}{\mathrm{Cor}}
\newcommand*{\N}{\mathcal{N}}
\renewcommand*{\vec}[1]{\boldsymbol{#1}}
\newcommand{\mat}[1]{\mathrm{\mathbf{#1}}}
\newcommand*\set[1]{\mathrm{#1}}
\newcommand*\col[1]{\mathcal{#1}}
\newcommand*{\affine}{\widebar{\vec\beta}}
\newcommand{\numcounties}{3,093}
\renewcommand*{\phi}{\varphi}
\renewcommand*{\epsilon}{\varepsilon}
\newcommand{\trans}{\intercal}
\newcommand{\onecov}{C}
\newcommand{\covs}{\bm \onecov}
\newcommand{\independent}{\perp\!\!\!\perp}
\newcolumntype{C}{>{\centering\arraybackslash}X}
\newcolumntype{d}[1]{D{.}{.}{#1}}
\newcolumntype{L}{D{.}{.}{1,2}}
\newcolumntype{P}[1]{>{\centering\arraybackslash}p{#1}}
\DeclareMathOperator*{\sign}{sign}
\DeclareFontFamily{U}{mathx}{\hyphenchar\font45}
\DeclareFontShape{U}{mathx}{m}{n}{
      <5> <6> <7> <8> <9> <10>
      <10.95> <12> <14.4> <17.28> <20.74> <24.88>
      mathx10
      }{}
\DeclareSymbolFont{mathx}{U}{mathx}{m}{n}
\DeclareMathAccent{\widebar}{0}{mathx}{"73}
\title{Mitigating Unobserved Spatial Confounding when Estimating the Effect of Supermarket Access on Cardiovascular Disease Deaths}
\author{Patrick M Schnell$^{1,*}$, and
Georgia Papadogeorgou$^{2}$ \\[10pt]
\small $^1$Division of Biostatistics, College of Public Health, The Ohio State University, Columbus, OH, USA \\
\small $^2$Department of Statistical Science, Duke University, Durham NC, USA \\
\small $^*$schnell.31@osu.edu}
\date{}
\newcommand{\mb}{\bm}
\newcommand{\XtXinv}{\left(\mat X^\trans \mat X\right)^{-1}}
\newcommand{\G}{\mat G}
\newcommand{\h}{\mat H}
\newcommand{\Q}{\mat Q}
\begin{document}
\tikzstyle{line} = [draw, -latex']

\maketitle

\begin{abstract}
Confounding by unmeasured spatial variables has received some attention in the spatial statistics and causal inference literatures, but concepts and approaches have remained largely separated.
In this paper, we aim to bridge these distinct strands of statistics by considering unmeasured spatial confounding within a causal inference framework, and estimating effects using outcome regression tools popular within the spatial literature.
First, we show how using spatially correlated random effects in the outcome model, an approach common among spatial statisticians, does not necessarily mitigate bias due to spatial confounding, a previously published but not universally known result.
Motivated by the bias term of commonly-used estimators, we propose an affine estimator which addresses this deficiency.
We discuss how unbiased estimation of causal parameters in the presence of unmeasured spatial confounding can only be achieved under an untestable set of assumptions which will often be application-specific. We provide a set of assumptions which describe how the exposure and outcome of interest relate to the unmeasured variables, and we show that this set of assumptions is sufficient for identification of the causal effect based on the observed data when spatial dependencies can be represented by a ring graph. We implement our method using a fully Bayesian approach applicable to any type of outcome variable.
This work is motivated by and used to estimate the effect of county-level limited access to supermarkets on the rate of cardiovascular disease deaths in the elderly across the whole continental United States.
Even though standard approaches return null or protective effects, our approach uncovers evidence of unobserved spatial confounding, and indicates that limited supermarket access has a harmful effect on cardiovascular mortality.
\end{abstract}

\textit{Keywords:} Causal inference; Cardiovascular disease; Food access; Markov random field; Spatial confounding; Unmeasured confounding

\section{Introduction}
\label{sec:introduction}

Over 17 million deaths were attributed to cardiovascular disease (CVD) worldwide in 2016, and the prevalence of CVD among people aged 20 or older in the United States that same year was 48\% \citep{aha2019heart}.
Poor nutrition and high body mass index are major risk factors of CVD \citep{aha2019heart}, and there is evidence that these risk factors are influenced by the availability of nearby supermarkets \citep{powell2007associations, laraia2004proximity}, which have historically had a higher prevalence of heart-healthy foods \citep{sallis1986san, pearce2008contextual}.
In a prospective cohort study of individuals who had undergone cardiac catheterization, living in an area (census tract) with low income and poor food access was associated with an increased risk of myocardial infarction or death \citep{kelli2019living}.

Here, our goal is to cast this question within a causal inference framework, and quantify the effect of county-level supermarket availability on the rate of CVD mortality among the elderly (65+ years old) in the United States. For that purpose, we compile a data set including mortality, store, and demographic information for all counties in the continental United States (see also \cref{sec:data}).
Demographic characteristics such as income have been previously associated with CVD risk factors \citep{kelli2017association}, and might be confounders of the effect of supermarket proximity and access on CVD mortality. Even though a number of demographic variables are included in our data set, the relationship of interest is possibly confounded by unobserved or difficult-to-define variables such as regional culture relating to personal vehicles, diet, and general health-consciousness, or state-level support for people with disabilities. Such variables could represent predictors of the exposure which influence where people live, whether or not they own a vehicle, or where businesses choose to locate, and could also represent predictors of the outcome such as how much people exercise, their stress levels, or what food they choose to eat regardless of supermarket availability.
Furthermore, these potentially unobserved demographic variables are expected to be spatially correlated, in that nearby counties are expected to have similar values of these unmeasured covariates.

Therefore, we are faced with estimating the causal effect of an exposure on an outcome using spatially-referenced, observational data, and under the threat of unmeasured spatial confounding. To address this challenge, we combine spatial statistics tools and causal inference methodology within a common framework. Even though some attention has been given to causal inference topics in the spatial statistics literature \citep{paciorek2010importance, Hodges2010, hughes2013dimension, hanks2015restricted}, and to spatial topics in the causal inference literature \citep{Verbitsky-savitz2012, Keele2015, papadogeorgou2016adjusting}, there is a substantial gap in the intersection of the two fields.

In classic spatial statistics, regression models are often augmented to include spatially correlated random effects in order to ``account'' or ``adjust'' for the spatial dependence in the outcome model residuals. However, there is substantial confusion about what exactly these spatial models are capable of accounting for \citep{hanks2015restricted}.
In some settings, spatial mixed models are employed to estimate the relationship between an exposure and outcome without conditioning on spatial information. In this context, \cite{Hodges2010} and \cite{hughes2013dimension} proposed including a spatial random effect that is orthogonal to the exposure of interest.
Other times, it is asserted that spatial models adjust for unobserved covariates which have a spatial dependence structure \citep{Congdon2013, Lee2015a}.
Nevertheless, the usual spatial models do not in general eliminate bias due to unobserved confounders, even when the residual variance components are known \citep{paciorek2010importance}.
Recently, \cite{keller2019selecting} discuss the interpretation of estimates from regression models that progressively include spatial basis functions of higher complexity, and they conclude that increasing adjustment might even lead to bias amplification.

From a different perspective, causal inference methodology with spatial data and in the presence of unmeasured spatial confounding has been quite limited, and, to our knowledge, it has been restricted to classic causal inference tools. Within a regression discontinuity framework,  \cite{Keele2015} match treated to control units  separated by a boundary minimizing geographical distance of matched pairs and balancing observed covariates. Relatedly, \cite{papadogeorgou2016adjusting} proposed matching treated to control units on a criterion incorporating both propensity scores and geographical distance. Although these approaches can, in some cases, address the problem of interest to spatial statisticians, they are not immediately compatible with models commonly used in spatial data analysis which are most often grounded in outcome regression. An exception is found in \cite{thaden2018structural} where the authors propose a structural equation modeling approach treating the spatial variable as a confounder in a geoadditive model in order to eliminate bias from the unmeasured spatial variable.

In the causal inference literature, unmeasured confounding has been most often dealt with in the realm of sensitivity analysis. Sensitivity analysis is a powerful approach which aims to quantify the robustness of estimated effects to different amounts of unmeasured confounding \citep{rosenbaum1983assessing, Rosenbaum2002, imbens_rubin_2015, VanDerWeele2017}. However, sensitivity analysis does not directly adjust effect estimates for the presence of such confounders, which is what the methodology presented here and the works referenced above aim to achieve.

In this paper, we seek to bolster the bridge between spatial data analysis and causal inference. In order to do so, we consider unmeasured confounding within a formal causal inference framework and examine estimation approaches grounded on models and tools often employed by spatial statisticians.
We start by focusing on continuous outcomes and linear models, studying the bias of commonly-used estimators. We propose a model-based approach to estimate the effect of a change in the exposure on an outcome of interest in the presence of unmeasured spatial confounding. Our approach is designed to easily incorporate popular tools in spatial statistics such as hierarchical and linear mixed models.
We explain that identification and estimation of the causal parameter in the presence of unmeasured confounding requires untestable assumptions regarding the unmeasured confounders and their relationship with the treatment and outcome of interest.
In general, these assumptions have to be application-specific, identification of the causal parameter needs to be evaluated separately for each set of assumptions, and the proposed estimator would have to be adapted to alternative identifying assumptions.
For continuous treatments (referred to as \textit{exposures}), we provide one set of assumptions that is sufficient for identification of the causal exposure-response curve, and one that is not. Importantly, our results illustrate that, when spatial dependencies can be represented using a ring graph, the components in our estimator involving the unmeasured confounder can be identified based solely on spatial dependencies in the observed data.
Within the context of our study, we extend our approach to non-continuous outcomes and generalized linear mixed models employing a fully-Bayesian approach, and we carefully discuss the plausibility of the causal assumptions. 
While our development is in the context of areal data, refinements in the context of point-referenced data are possible and are discussed where applicable.

In \cref{sec:data} we present a detailed description of our data set and present preliminary analyses using non-spatial and commonly-used spatial regression models that yield suspect results.
In \cref{sec:causal}, we define the causal estimand in terms of potential outcomes for continuous exposures and discuss commonly invoked identifiability assumptions when the observed covariates include a sufficient confounding adjustment set.
The proposed methodology is introduced in \cref{sec:estimators} within the context of linear models. There, we re-derive the result by \cite{paciorek2010importance} stating that commonly-used spatial regression models do not recover the estimands of interest in the presence of unmeasured spatial confounding. Motivated by the bias of commonly used estimators, we propose the affine estimator, and we provide a set of assumptions relating the exposure and outcome to the unmeasured variables based on which the causal parameter is identifiable from observed data.
In \cref{subsec:affine_bayesian}, we extend the affine estimator in the context of non-linear models and suggest using a Bayesian approach.
The estimator is compared to the currently-used estimators under various generative mechanisms via simulation in \cref{sec:simulation}.
In \cref{sec:example} we discuss the plausibility of our assumptions within the context of our study, and we use the affine estimator to estimate the county-level effect of poor supermarket availability on CVD mortality. Our study illustrates the potential of the affine estimator in mitigating bias from unmeasured spatial confounders, returning effect estimates that are qualitatively different from the ones in \cref{sec:data}, and more in line with subject-matter knowledge. We conclude with a discussion in \cref{sec:discussion}.

\section{County-level supermarket availability and CVD mortality}
\label{sec:data}

We compile a data set including mortality, store availability, demographic and behavioral data on \numcounties{} out of 3,109 counties and county equivalents in the 48 contiguous states and the District of Columbia.
For each areal unit (county, or county equivalent), supermarket availability is defined as the proportion of housing units during 2006 that are more than 1 mile from the nearest supermarket or large grocery store and do not have a car, obtained from the Food Environment Atlas, June 2012 release \citep{usda2012food}.
County-level population and cardiovascular disease mortality count (ICD-10 codes I00--I99) during 2007 for residents 65 years old and over were obtained from the United States Centers for Disease Control and Prevention (CDC) WONDER query system \citep{cdc2018underlying}. 
Due to privacy constraints, county death counts below 10 are censored by CDC WONDER.
Figure~\ref{fig:maps} displays the exposure and crude outcome relative risk, without covariate-assisted estimation or smoothing. Demographic information was acquired from the 2000 Census and includes, among others, information on urbanicity, poverty, and population mobility. Covariate information also includes estimates of 2006 smoking rates derived from CDC Behavioral Risk Factor Surveillance System data \citep{dwyer2014cigarette}.
Appendix~\ref{app_sec:data} provides additional information on the data sources, data collection, and data processing pipeline, including links to the publicly-available data sets, and a table including names and descriptive statistics of available covariates.

\begin{figure}[!b]
    \centering
    \includegraphics[scale = 0.9]{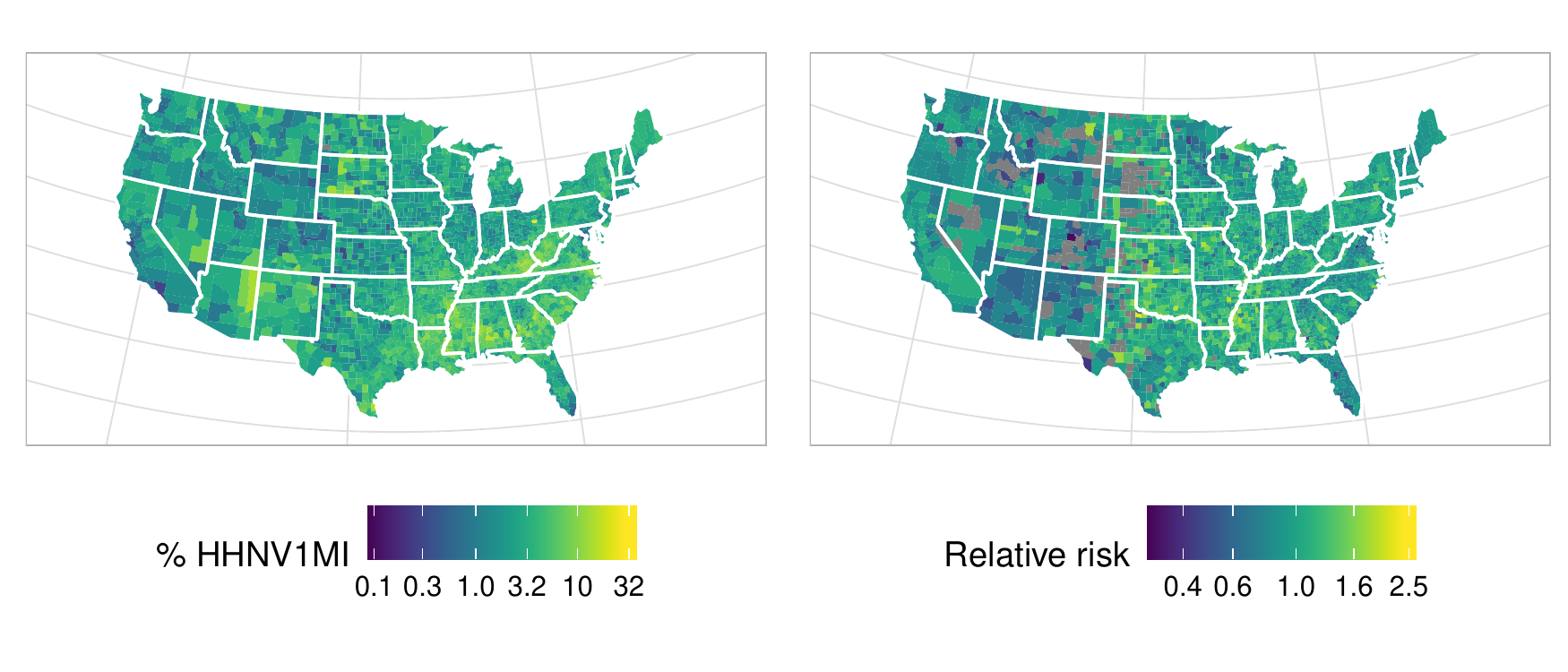}
    \caption{Percent of households with no vehicle and more than 1 mile from a supermarket or large grocery store (left, \% HHNV1MI), and observed relative risk of CVD deaths in the 65+ age range (right, relative risk).}
    \label{fig:maps}
\end{figure}

At this point, we consider two common analyses investigating the relationship between limited county-level access to supermarkets on cardiovascular deaths in the elderly. For the first analysis, we model CVD mortality counts as a Poisson-distributed outcome using a log link with the exposure and all covariates listed in \cref{app_tab:table1} as predictors.
Internal standardization was implemented by using log expected death count as the offset: the population age 65+ in each county was multiplied by the overall crude rate in the same age range.
All covariates were standardized, and the model was fit within the Bayesian paradigm under Gaussian priors with mean 0 and standard deviation 10 on regression coefficients.
Results are presented on original scales unless otherwise noted.
Samples from the posterior distribution were obtained via a Gibbs sampler, with a Metropolis-Hastings block update for all regression coefficients.
The Gibbs sampler was run for 10,000 iterations after a 1,000-iteration burn-in.
Censored outcomes were imputed subject to the known privacy constraint.

The second analysis we implemented is a common analysis method for areal spatial data. We included a spatially correlated random effect for county, $\vec{U}$, in the linear predictor, and assumed it follows a conditional autoregressive (CAR) structure \citep{besag1974spatial}, i.e., 
\begin{equation}
  U_i | \vec{U}_{-i} \sim \mathcal{N}\left[ \phi_U \sum_{j \in \partial_i} U_j / |\partial_i|, (\tau_U |\partial_i|)^{-1}\right],
\end{equation}
where $\partial_i$ and $|\partial_i|$ are the set and number of $i$'s neighbors, respectively.
A multivariate normal representation of the distribution of $\vec{U}$ is then available as
\begin{equation}
  \vec{U} \sim \mathcal{N}\left[\vec{0}, \tau_U^{-1} (\mat{D} - \phi_U \mat{W})^{-1} \right],
  \label{eq:multivariate_CAR}
\end{equation}
where $w_{ij} = 1$ if $i$ and $j$ are neighbors, and 0 otherwise, and $\mat{D}$ is diagonal with entries $|\partial_i|$ \citep{banerjee2004hierarchical}.
A uniform prior on $(-1, 1)$ was used for $\phi_U$ and a gamma prior with shape and rate parameters equal to 5 was used for $\tau_U$, jointly restricted to require the precision matrix of $\vec{U}$ to be positive definite.

Based on these two models, we acquired what the estimated effect of a 1 percentage point increase in households with poor supermarket access on cardiovascular mortality would be if each model was specified correctly and was sufficiently adjusted for confounding variables.
The non-spatial estimate (analysis 1) indicated that increasing poor supermarket availability is protective of CVD deaths, with an estimated relative risk of 0.968 (95\% CI 0.962 to 0.973). A protective effect of poor supermarket access is not consistent with either theoretical or data-driven understanding of the phenomenon, enforcing our belief that the result is at least partially due to unobserved, or poorly adjusted-for confounders.
The estimate from the spatial model (analysis 2) is effectively null with an estimated relative risk of 0.999 (95\% CI 0.988 to 1.011).
Although the spatial estimate differs from the non-spatial estimate in both location and credible interval width, we show in the next section that the spatial estimate does not necessarily mitigate bias from unobserved confounders.

\section{Causal estimands and classic identifiability assumptions}
\label{sec:causal}

Broadly speaking, the causal inference literature places substantial emphasis on defining target quantities of interest, referred to as \emph{estimands}, and determining sufficient assumptions under which such estimands (which include unobservable quantities) are identifiable based on the observed data.
We begin by defining estimands of interest following the potential outcome framework formalized by \cite{Rubin1974} and extended to continuous exposures by \cite{hirano2004propensity}.
A necessary condition for an exposure $Z$ to have an effect on an outcome $Y$ is that $Z$ is temporally precedent.
We make the stable unit treatment value assumption (SUTVA, \cite{Rubin1980}) which states that there is a single version of each treatment level and there is no interference between units.
Based on SUTVA, we can use $Y_i(z)$ to represent the value that would have been observed at location $i$ had it received exposure $z \in \col{Z}$, where $\col Z$ includes all possible values of the continuous $Z$, and $i = 1, 2, \dots, n$.
Then, $Y_i(z)$ is the \emph{potential outcome} for location $i$ at exposure level $z$, and unit $i$'s observed outcome $Y_i$ is the potential outcome for the observed level of the treatment, $Y_i = Y_i(Z_i)$.

The most common estimands for continuous treatments are the \emph{population average exposure-response curve} (PAERC) $\mu(z) = \E[Y_i(z)], z \in \col Z$, and the expected rate of change in the outcome for an infintensimal change in the exposure around $z$, $\mu'(z)$. Since $\mu(z)$ represents the average outcome value over the whole population had {\it everyone} experienced exposure $z$, it is clear that $\mu(z)$ includes unobserved quantities, and assumptions need to be made to ensure identifiability and to estimate it using data.
The positivity and no unmeasured confounding assumptions (referred to together as the ignorability assumption) form a sufficient set of assumptions for identifiability of $\mu(z)$. Positivity states that all units can experience any $z \in \mathcal Z$, and the no unmeasured confounding assumption states that there exist \textit{measured} covariates $\covs$ which satisfy that, conditional on $\covs$, the observed exposure $Z$ is independent of the potential outcomes $Y(z)$, denoted as $Z \independent Y(z) | \covs, z \in \col{Z}$. (See Appendix~\ref{app_sec:causal} for a discussion on identifiability of $\mu(z)$ based on these assumptions.)

Confounders $\covs$ are generally thought of as temporally precedent to the exposure $Z$ and as common predictors of $Z$ and $Y$, as shown in \cref{fig:causal-diagram}. Since temporal order of variables matters in drawing causal conclusions, observed data are conceived as if generated in the following order: $[\covs]$, $[Z|\covs]$ and $[Y | Z, \covs]$. If the identifiability conditions of positivity and no unmeasured confounding are met in the observed data, estimation can proceed by imitating the data generating mechanism for the exposure $Z | \covs$, known as the propensity score \citep{rosenbaum1983central}, the data generating mechanism for the outcome $Y | Z, \covs$ \citep[e.g.,][]{hill2011bayesian, hahn2018regularization}, or both \citep{robins1995semiparametric, hahn1998role, zigler2014uncertainty, zhou2019penalized}.
In order to adhere to common approaches of spatial statistics which emphasize analytical models imitating the outcome generative model, our primary focus is on modeling $Y | Z, \covs$.

\begin{figure}[!t]
\centering
	\large{\begin{tikzpicture}[%
		->,
		>=stealth,
		node distance=1cm,
		pil/.style={
			->,
			thick,
			shorten =2pt,}
		]
		\node (1) {$\covs$};
		\node[right=of 1] (2) {$Z$};
		\node[right=of 2] (3) {$Y$};
		\draw [->] (1.east) -- (2.west);
		\draw [->] (2.east) -- (3.west);
	    \draw [->] (1) to [out=30, in=150] (3);
	\end{tikzpicture}}
	\caption{Assumed causal diagram for the generative model. The vector \texorpdfstring{$C$}{$C$} may represent a collection of multiple confounders.}
  \label{fig:causal-diagram}
\end{figure}
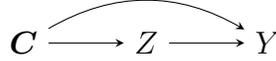

Even though confounding adjustment is necessary to draw causal conclusions, $\covs$ might include components that are not measured, hence violating the no unmeasured confounding assumption.
Denote $\covs = (\covs^m, \covs^u)$ representing the measured and unmeasured components, respectively.
At this point, we assume that at least some of the variables in $\covs^u$ are spatial and refer to \cref{subsec:identifiability} for a further discussion on this requirement.
We refer to a variable as ``spatial'' if the correlation of the variable for two observations depends on their geographic locations. For areal data like the ones in our study, this could refer to adjacency of counties.
For point referenced data, it could refer to the geographical distance of two points.

In this section and the next, we discuss unobserved spatial confounders in the case of continuous outcomes and linear models. Focusing on this setting allows for straightforward application of theory from least squares estimation of regression coefficients and restricted maximum likelihood estimation of variance parameters. We return to non-Gaussian outcomes for the simulation study and data analysis, where we employ a fully Bayesian approach. Assume here that potential outcomes arise in the following manner:
\begin{equation}
  \label{eq:dgm-general}
  Y_i(z) = \eta(z, \covs_i^m) + g(\covs_i^u) + \epsilon_i(z),
\end{equation}
for some function $\eta$, and $\epsilon_i(z)$ a mean zero random variable and independent of $\covs$.
In \cref{eq:dgm-general}, $\covs^u$ is assumed to not interact with $Z$ and $\covs^m$.  We denote $U = g(\covs^u)$, representing the cumulative contribution of all unobserved covariates. Since at least some components of $\covs^u$ are spatial, $U$ also has a spatial correlation structure. Without loss of generality, we may assume $\E[g(\covs^U)] = 0$ by absorbing any non-zero mean into the the intercept in $\eta(z, \covs^m$).

\section{The affine estimator in linear models: Addressing omitted variable bias of classic estimators}
\label{sec:estimators}

In this section, we discuss how the classic approaches to estimation within the spatial statistics literature are biased for estimating causal quantities in the presence of unmeasured confounders, and in the context of linear regression. The bias results derived below are in line with results in \cite{paciorek2010importance}, and they motivate the affine estimator, which is designed to explicitly remove the bias of the existing estimators. An extension to non-linear models within the Bayesian framework is presented in \cref{subsec:affine_bayesian}.

\subsection{Omitted variable bias of ordinary and generalized least squares estimators}

Let $\mat{X} = (\bm 1 \ | \ \bm Z \ | \ \covs^m)$ be the design matrix containing an intercept, the exposure $\vec{Z}$, and measured covariates, and let $\mat{X}_{(-z)}$ be the design matrix including an intercept and measured confounders, but not the exposure $\vec{Z}$. For simplicity of presentation, we assume that the causal exposure response curve is linear, $\eta(z, \covs_i^m) = \vec{\beta}_{(-z)}^\trans \vec{x}_{i,(-z)}  + \beta_z z$, which will be relaxed in \cref{sec:semiparametric}. This simplification implies that $\mu'(z) = \beta_z$, corresponding to the usual linear regression coefficient targeted in the spatial statistics literature.
Using vector notation let $\vec Y = (Y_1, Y_2, \dots, Y_n)^\trans$, with $\vec Z, \vec U, \vec \epsilon$ defined analogously. Then, {\it if} all of $\vec Y, \vec Z, \vec U$, and $\mat{X}_{(-z)}$ were observed, estimation of the regression model
\begin{equation}
  \label{eq:dgm}
  \vec{Y} = \mat{X}_{(-z)} \vec{\beta}_{(-z)} + \vec{Z} \beta_z + \vec{U} + \vec{\epsilon},
\end{equation}
would lead to consistent estimation of the causal effect through estimation of $\beta_z$.

However, the above model cannot be directly used in settings where $\vec{U}$ is not measured. In \cref{eq:dgm}, $\vec{Z}$ and $\vec{U}$ are correlated, but $\vec{\epsilon}$ is independent of $(\vec{Z}, \vec{U})$.
Thus $\vec{U}$ and $\vec{\epsilon}$ is a partition of the variability in $\vec{Y}$ not due to $\vec{Z}$ into one component ($\vec{\epsilon}$) which is independent of $\vec{Z}$ and one ($\vec{U}$) which is not. 
If $\vec{U}$ is correlated with $\vec{Z}$ and is omitted from the outcome regression, the \emph{ordinary least squares} (OLS) estimator of $\vec \beta = (\vec \beta_{(-z)}^\trans, \beta_z)$, $\widehat{\vec \beta}$, will be biased.  This is evident by examining the conditional expectation of $\widehat{\vec \beta}$:
\begin{equation}
\label{eq:ev-beta-hat}
\begin{aligned}
\E\big(\widehat{\vec{\beta}} | \mat{X}\big)
&= \E \big[ \XtXinv \mat{X}^\trans \vec{Y} | \mat{X} \big]
= \vec{\beta} +\XtXinv \mat{X}^\trans \E(\vec{U} | \mat{X}).
\end{aligned}
\end{equation}
Considering $\E\big(\widehat{\vec \beta}\big) = \E\big[\E\big(\widehat{\vec \beta} | \mat{X} \big) \big]$, we see that $\widehat{\vec \beta}$ will be biased for $\vec\beta$ since the second term will, in general, be non-zero for correlated $\vec{U}, \vec{Z}$.

When $\vec{U}$ is omitted from the regression model, the component of $\vec{U}$ not attributed to $\mat{X}$ will be incorporated in the residuals.
Since $\vec{U}$ is spatially structured, residuals of the regression of $\vec{Y}$ on $\mat{X}$ will also be spatially correlated.
In an effort to account for residual spatial correlation, spatial linear mixed models are often adopted.
Typically, such models represent mechanisms similar in form to \cref{eq:dgm}, but in which all right-hand-side variables are \textit{assumed} to be independent, and some assumptions are made about the form of $\Var[\vec{U}] = \Var[\vec U | \mat{X}]$.
These models aim to explain the spatial correlation in the residuals and they are often effective at improving efficiency. However, they do not necessarily alleviate the omitted variable bias \citep{paciorek2010importance}.
If $\Var[\vec{Y} | \mat{X}]$ (which depends on $\Var[\vec{U}]$) is known, the \emph{generalized least squares} (GLS) estimator of $\vec{\beta}$ is 
\begin{equation}
  \label{eq:est-lmm}
  \widetilde{\vec{\beta}} = \{ \mat{X}^\trans (\Var[\vec{Y} | \mat{X}])^{-1} \mat{X} \}^{-1} \mat{X}^\trans (\Var[\vec{Y} | \mat{X}])^{-1} \vec{Y}.
\end{equation}
with conditional expected value
\begin{equation}
  \label{eq:est-lmm-ev}
  \E\big( \widetilde{\vec{\beta}} | \mat{X}\big) = \vec{\beta} +  \{ \mat{X}^\trans (\Var[\vec{Y} | \vec{Z}])^{-1}\mat{X} \}^{-1} \mat{X}^\trans (\Var[\vec{Y} | \mat{X}])^{-1} \E[\vec{U} | \mat{X}].
\end{equation}
Therefore, even if $\Var[\vec{Y} | \mat{X}]$ is known,  $\widetilde{\vec{\beta}}$ remains biased.
This result indicates that including a spatial random effect in the regression model does not necessarily mitigate or eliminate bias arising from unmeasured spatial confounders.

\subsection{The affine estimator to account for omitted spatial variables}
\label{sec:our_estimator}

The results presented above establish that spatial correlation in the outcome model residuals might arise due to spatial predictors of $\vec{Y}$, and commonly used approaches to estimate $\beta_1 = \mu'(z)$ are biased in the presence of unmeasured confounding by a spatial variable $\vec{U}$.
It is now clear that mitigating bias from unmeasured spatial variables cannot be achieved based solely on an outcome regression model without making additional assumptions, nor by harvesting spatial information found solely in the outcome model residuals.

An investigation of the formulas in \cref{eq:ev-beta-hat} and \cref{eq:est-lmm-ev} shows that bias of both least squares estimators arises from the non-zero correlation between $\vec{U}$ and $\vec{Z}$, leading to a non-zero $\E[\vec U | \mat X]$. Inspired by the form of the bias, we propose an estimator that includes a component that depends on $U$. Consider the \emph{affine estimator}:
\begin{equation}
  \label{eq:est-joint}
  \widebar{\vec{\beta}} = \{ \mat{X}^\trans (\Var[\vec{Y} | \mat{X}])^{-1} \mat{X} \}^{-1} \mat{X}^\trans (\Var[\vec{Y} | \mat{X}])^{-1} \{ \vec{Y} - \E[\vec{U} | \mat{X}] \},
\end{equation}
which replaces $\vec{Y}$ by $\vec{Y} - \E[\vec{U} | \mat{X}]$ and is unbiased if $\E[\vec{U} | \mat{X}]$ is known, or more practically, consistent if $\E[\vec{U} | \mat{X}]$ is consistently estimated.

Since $\vec{U}$ is unmeasured, direct modeling of $\E[\vec U | \mat X]$ based on traditional estimation methods is {\it not} possible.
Hence, identifiability of this component and our ability to calculate the affine estimator require additional assumptions. In \cref{subsec:construction}, we provide a set of assumptions based on Gaussian Markov random field theory which pertain to the joint distribution of $(\vec U, \vec Z) | \mat{X}_{(-z)}$.
Based on these assumptions, we discuss an approach to calculating the affine estimator in the context of restricted maximum likelihood in \cref{subsec:affine_reml}.
Then, in \cref{subsec:identifiability} we show that these assumptions form a sufficient set for identification of the components of $\vec U$ on which the affine estimator is based, and therefore the estimation procedure is sound. The identifiability results illustrate that identification is achieved by exploiting the spatial correlation structure in the exposure and outcome model residuals which is driven by the unmeasured spatial variable.

\subsection{A sufficient set of assumptions}
\label{subsec:construction}

In this section, we present a set of assumptions which pertain to both the spatial and causal aspect of the affine estimator, and are summarized in \cref{tab:assumptions}. We proceed with these for ease of illustration, and because they seem plausible within our study setting (see \cref{sec:example-assumptions}), but note that different or weaker assumptions for identification of $\E[\mb U|\mat{X}]$ are likely possible  (see \cref{subsec:identifiability} and \cref{sec:discussion}).

\subsubsection{A Gaussian Markov random field construction of the joint distribution}

In viewing the model from a spatial perspective  and to better align to the spatial modeling literature, we assume that the marginal distributions of $\vec{U}$ and $\vec{\epsilon}$ are Gaussian with mean zero, independent of the measured covariates $\mat{X}_{(-z)}$, and that $(\vec{U}, \vec{Z}) | \mat{X}_{(-z)}$ is multivariate normal. We see the assumption that $\vec U$ is independent of $\mat{X}_{(-z)}$ as without loss of generality, since the same procedure could be alternatively followed for $\vec U - \mathbb{P}(\vec U | \mat{X}_{(-z)})$, where $\mathbb{P}(\vec U | \mat{X}_{(-z)})$ is the projection of $\vec U$ on the column space of $\mat{X}_{(-z)}$. This is also supported by results from simulated scenarios under which $U$ and $\mat{X}_{(-z)}$ are correlated (see \cref{sec:simulation}).
Further, we recognize that joint normality might be a strong assumption and we discuss an approach to relaxing it in \cref{sec:discussion}.
We make the following assumptions about the joint distribution of $(\vec U, \vec Z) | \mat{X}_{(-z)}$.
\begin{enumerate}
\item \textbf{Cross-Markov property:} $p(Z_i | \vec{Z}_{-i}, \vec{U}, \mat{X}_{(-z)}) = p(Z_i | \vec{Z}_{-i}, U_i, \mat{X}_{(-z)})$,
\item \textbf{Constant conditional correlation:} $\Cor(U_i, Z_i | \vec{U}_{-i}, \vec{Z}_{-i}, \mat{X}_{(-z)}) = \rho$.
\end{enumerate}
The first assumption states that, conditional on measured covariates and the values of $Z$ at all other locations, $Z_i$ depends on $\vec U$ only through its value at location $i$, $U_i$. Thus, it accommodates correlation between nearby treatments, but it does not allow $U_j$ to directly affect the value of $Z_i$ for $i\neq j$. The second assumption states that the conditional correlation between $U_i$ and $Z_i$ does not vary by location.
In the joint distribution of $(\vec U, \vec Z) | \mat{X}_{(-z)}$, these assumptions can be incorporated in the precision matrix (see Appendix~\ref{app:partial-corr} for derivations). Specifically, if
\begin{equation}
  \label{eq:model-joint}
  \begin{pmatrix}
    \vec{U} \\
    \vec{Z}
  \end{pmatrix}
  \sim \N\left[
    \begin{pmatrix}
      \vec{0} \\
      \mat{X}_{(-z)} \vec{\gamma}
    \end{pmatrix},
    \begin{pmatrix}
      \mat{G} & \mat{Q} \\
      \mat{Q}^\trans & \mat{H}
    \end{pmatrix}^{-1}
  \right],
\end{equation}
the cross-Markov assumption is equivalent to diagonal $\mat{Q}$, and along with the constant conditional correlation assumption they imply that
\begin{equation}
  \label{eq:q-spec}
  \begin{aligned}
    q_{ij} &= \left\{\begin{array}{lr}
                       -\rho \sqrt{g_{ii} h_{ii}}, & i = j, \\
                       0, & i \neq j.
                     \end{array}\right.
  \end{aligned}
\end{equation}

Given the above framework, the joint model for $\vec{U}$ and $\vec{Z}$ is completed by specifying $\mat{G}$ and $\mat{H}$, the precision matrices of $\vec{U} | (\vec{Z}, \mat{X}_{(-z)})$ and $\vec{Z} | (\vec{U}, \mat{X}_{(-z)})$ respectively, up to some \textit{unknown} parameters that will be estimated from the data.
For areal data like the ones in \cref{sec:example}, we adopt conditional autoregressive structures (CAR; \cite{besag1974spatial}) for $\mat{G}$ and $\mat{H}$, a common assumption in standard spatial analysis models.
Then, the precision matrices $\mat{G}$ and $\mat{H}$ are assumed to share the same neighborhood structure which is encoded in the matrices $\mat{D}$ and $\mat{W}$ of \cref{eq:multivariate_CAR},
but are allowed to differ by their precision and spatial dependence parameters $(\tau_U, \phi_U)$ and $(\tau_Z, \phi_Z)$. Based on \cref{eq:q-spec}, the assumed CAR structure leads to $\mat{Q} = -\rho\sqrt{\tau_U \tau_Z}\ \mat{D}$.

In the analysis of point-referenced data, the precision matrices of $\vec U | (\vec Z, \mat{X}_{(-z)})$ and $\vec Z | (\vec U, \mat{X}_{(-z)})$ can be specified based on a correlation function decaying in geographical distance. In either case, since $\mb U$ includes all unmeasured spatial variables $\covs^u$, the correct specification of its precision matrix $\mat G$ becomes harder for a larger number of unmeasured spatial covariates. We note again here that, since $\vec U$ is unmeasured, estimating the components of the joint distribution in \cref{eq:model-joint} that contribute to the affine estimator {\it cannot} be based on traditional modeling approaches, and instead is based on harvesting information from the spatial structure in exposure and outcome model residuals (as we see in \cref{subsec:affine_reml} and \cref{subsec:identifiability}).

\begin{table}[!t]
\centering \small
\caption{Set of assumptions based on which the causal exposure-response curve derivative is identifiable using observed data and can be estimated using the affine estimator.}
\begin{tabular}{p{6.5cm} p{9cm}}
\\
\hline \hline
Causal Assumptions    \\ \hline \\
Temporal Order & The exposure is temporally precedent to the outcome \\[5pt]
SUTVA    & No interference between units, no hidden levels of the treatment, $Y_i = Y_i(Z_i)$ \\[5pt]  
No unmeasured non-spatial confounding & $Z \independent Y(z) |\covs^m, U, z \in \col{Z}$ \\[5pt]
Positivity$^*$ & $P(Z = z | \covs^m, U) > 0 , z \in \col{Z}$, which implies that: \\[5pt]
Spatial scale restriction &  $Z$ has variation at a smaller spatial scale than that of $U$ \\ \\
Structural Assumptions \\ \hline \\
Outcome Additivity & The exposure and measured covariates do not interact with the unmeasured covariates. \\ \\
$E[\vec U | \vec Z]$ identification \\ \hline \\
Normality & $Z, U, \epsilon$ are jointly normal, conditional on $\mat{X}_{(-z)}$ \\[5pt]
Cross-Markov & $Z_i \!\perp\!\!\perp \vec U_{-i} | U_i, \vec Z_{- i}, \mat{X}_{(-z)} $, where\\
& $\vec U_{-i} = (U_1, U_2, \dots, U_{i -1}, U_{i + 1}, \dots, U_n)^T$, $\vec Z_{-i}$ defined similarly \\[5pt]
Conditional correlation & $\text{Cor}(U_i, Z_i | \vec U_{-i}, \vec Z_{-i}, \mat{X}_{(-z)})$ is constant \\[5pt]
Precision matrices & The precision matrices of $\vec{U} | \vec{Z}, \mat{X}_{(-z)}$ and $\vec{Z} | \vec{U}, \mat{X}_{(-z)}$ are of CAR form \\[5pt]
Mean Specification & $E[\vec Y | \mat X, \vec U]$ and $E[\vec Z | \mat X_{(-z)}, \vec U]$ are correctly specified.
\\ 
\hline\hline
\end{tabular}
\label{tab:assumptions}
\begin{flushleft} \footnotesize
$^*$For continuous exposures, positivity can be defined in terms of the probability {\it density} function of $Z$ conditional on measured and unmeasured variables.
\end{flushleft}
\end{table}

\subsubsection{Spatial scale restriction for the unmeasured spatial confounder}
\label{subsec:scale_restriction}

In order to draw causal conclusions using our approach, the no-unmeasured and positivity assumptions still need to hold, conditional on the measured covariates $\covs^m$, and the unmeasured $U$. The assumption of positivity implies that estimation of the causal effect of $Z$ on $Y$ in the presence of $U$ is only possible if there is variability in $Z$ within levels of $U$. If the spatial scale of $U$ is smaller than that of $Z$, the positivity assumption will be violated, since, loosely speaking, there may be ``strata'' of $U$ within which only one value of $Z$ is possible. Therefore, from a causal perspective, we assume that the spatial scales of the exposure and spatial confounder do not violate positivity of the treatment assignment conditional on the unmeasured spatial confounder and the measured covariates.  In \cref{subsec:mediation}, we also discuss how the spatial scale restriction is also useful in settings where spatial variables mediate the effect of interest.

The spatial scale restriction has been studied from a spatial perspective. \cite{paciorek2010importance} shows that the bias and variance of spatial model estimators depend on the relative spatial scales of the exposure and the residual including the confounder, $\epsilon + U$. \cite{paciorek2010importance} recommends only fitting spatial models when there is exposure variation on a spatial scale smaller than that of the unmeasured confounder, essentially ensuring positivity. From a spatial perspective, the spatial scale restriction ensures that we do not mistakenly attribute all spatial variability of the outcome residuals to the unmeasured spatial confounder when it is truly due to the exposure.

The spatial scale restriction can be enforced through the precision matrices $\mat{G}, \mat{H}$ in \cref{eq:model-joint}.
For geostatistical data, the spatial scale of dependence is often an explicit modeling parameter, as in \cite{paciorek2010importance}.
For a conditional autoregressive model of areal data, the autocorrelation parameters $\phi_Z$ and $\phi_U$ do not have strict interpretations as spatial scale parameters, though the restriction $\phi_Z < \phi_U$ plays a similar role in identifying variance parameters.

\subsection{The affine estimator within a restricted likelihood framework}
\label{subsec:affine_reml}

In this section we describe estimation within a restricted likelihood framework. We do so because it allows for straightforward illustration of how model components correspond to components in the bias results of \cref{sec:estimators} and the affine estimator in \cref{eq:est-joint}. Further, it allows us to easily discuss identifiability of the model parameters in \cref{subsec:identifiability}.
In \cref{subsec:affine_bayesian}, we describe a fully-Bayesian approach to estimation which is applicable for linear and non-linear models, and which we follow for the remainder of this paper.

\subsubsection{Linear effect estimator}
\label{subsec:linear_estimation}

We start by assuming the linear structure in \cref{eq:dgm}. Using the conditional distribution $\mb U | \mat{X}$ acquired from \cref{eq:model-joint}, the joint model for the observed data (integrating $\mb U | \mat{X}$ out) can be factored as
\begin{equation}
  \label{eq:observation-model-factored}
  \begin{aligned}
    \vec{Y} | \mat{X} &\sim \N[\mat{X} \vec{\beta} - \mat{G}^{-1} \mat{Q} (\vec{Z} - \mat{X}_{(-z)}\vec{\gamma}), \mat{G}^{-1} + \mat{R}^{-1}], \\
    \vec{Z}|\mat{X}_{(-z)} &\sim \N[\mat{X}_{(-z)} \vec{\gamma}, (\mat{H} - \mat{Q}^\trans \mat{G}^{-1} \mat{Q})^{-1}].
  \end{aligned}
\end{equation}
where $\mat R^{-1} = \mathrm{Cov}(\vec \epsilon)$ (see Appendix~\ref{app_sec:marginal_variances} for the derivation).
From \cref{eq:observation-model-factored}, we see that the likelihood depends on $\vec U$ through the components of the precision matrix in \cref{eq:model-joint}. Note that, even though our focus is in estimating parameters of the outcome model ($\vec \beta$), an exposure model is also adopted to provide information on the spatial structure of $\vec{U}$. (This is related to many settings in causal inference where incorporating information from the exposure model improves estimation of causal effects \citep[e.g.][]{wilson2014confounder, belloni2014inference, antonelli2019high}.)

Following a common approach to estimation for mixed models, variance parameters are estimated based on the restricted likelihood derived from \cref{eq:observation-model-factored}, and the estimates are used to calculate the bias-adjusted affine estimator $\widebar{\vec{\beta}}$ in \cref{eq:est-joint}.
Defining
\begin{equation}
\label{eq:rl-components}
    \begin{aligned}
        \mat{M}
        &=
        \begin{pmatrix}
            \mat{G}^{-1} + \mat{R}^{-1} & \mat{0} \\
            \mat{0} & (\mat{H} - \mat{Q}^\trans \mat{G}^{-1} \mat{Q})^{-1}
        \end{pmatrix}, \\
        \mat{C}
        &=
        \begin{pmatrix}
            \mat{X} & \mat{G}^{-1} \mat{Q} \mat{X}_{(-z)} \\
            \mat{0} & \mat{X}_{(-z)}
        \end{pmatrix}, \\
        \vec{\nu}
        &=
        \begin{pmatrix}
            \vec{Y} + \mat{G}^{-1} \mat{Q} \vec{Z} \\
            \vec{Z}
        \end{pmatrix}, \text{ and } \\
        \vec{\theta}
        &=
        \begin{pmatrix}
            \vec{\beta} \\
            \vec{\gamma}
        \end{pmatrix},
    \end{aligned}
\end{equation}
we can write the joint distribution of $(\vec{Y}, \vec{Z})$ as
\begin{equation}
    f(\vec{Y}, \vec{Z} | \vec{\beta}, \vec{\gamma})
    \propto|\mat{M}|^{-\frac{1}{2}} \exp\left[
        -\frac{1}{2} (\vec{\nu} - \mat{C}\vec{\theta})^\trans \mat{M}^{-1} (\vec{\nu} - \mat{C}\vec{\theta})
    \right],
\end{equation}
and the restricted likelihood as
\begin{equation}
\label{eq:res-lik}
    \begin{aligned}
    RL
    &\propto \left[|\mat{M}| \cdot |\mat{C}^\trans \mat{M}^{-1} \mat{C}| \right]^{-1/2} 
     \exp\left[-\frac{1}{2}
        \begin{array}{r}
          \vec{\nu}^\trans
      \left( \mat{M}^{-1} - \mat{M}^{-1} \mat{C} (\mat{C}^\trans \mat{M}^{-1} \mat{C})^{-1} \mat{C}^\trans \mat{M}^{-1} \right)
          \vec{\nu}
        \end{array}
    \right].
    \end{aligned}
\end{equation}
If $\widehat{\mat{M}}$, $\widehat{\mat{C}}$, and $\widehat{\vec{\nu}}$ are maximizers of the restricted likelihood in \eqref{eq:res-lik}, we calculate $(\widebar{\vec{\beta}}, \widebar{\vec{\gamma}}) = (\widehat{\mat{C}}
^\trans \widehat{\mat{M}}^{-1} \widehat{\mat{C}})^{-1} \widehat{\mat{C}}^\trans \widehat{\mat{M}}^{-1} \widehat{\vec{\nu}}$. (Readers interested in the REML approach can find additional information in Appendix~\ref{app_sec:restricted_likelihood}.)

The restricted likelihood formulation allows us to make illuminating connections between our approach, the mixed effects models often used in spatial statistics, and the bias results of existing approaches in \cref{sec:estimators}. If $\rho = 0$, the matrix $\mat Q$ is zero, the model in \cref{eq:model-joint} reduces to the case where $\vec U, \vec Z$ are independent, and the restricted likelihood estimation method would lead to the estimator $\widetilde{\vec \beta}$. A non-zero correlation $\rho$ between $\vec U$ and $\vec Z$ leads to the inclusion of the $-\mat G^{-1}\mat Q$ component in the coefficient of $\vec Z$, corresponding to the bias correction term $\E[\vec U | \mat{X}] = -\mat G^{-1}\mat Q (\vec Z - \mat{X}_{(-z)}\vec{\gamma})$.

\subsubsection{Semi-parametric effect estimator}
\label{sec:semiparametric}

To better accommodate continuous exposures, we can flexibly model the exposure-response relationship using penalized regression splines. Penalized regression splines may be represented as linear mixed models \citep{ruppert2003semiparametric}, allowing for the linear effect assumption in \cref{eq:dgm} to be relaxed to
\begin{equation}
  \label{eq:dgm-nonlinear}
  \vec{Y} = \vec{1} \beta_0 + \vec{f}(\vec{Z}) + \mat{X}_{(-z)} \vec{\beta}_{(-z)} + \vec{U} + \vec{\epsilon},
\end{equation}
where $\vec{f}(\vec{Z}) = (f(Z_1), \ldots, f(Z_n))^\trans$ and $f$ being a smooth function of $Z$.
Our chosen radial basis penalized spline model for $f$ may then be written as
\begin{equation}
  \label{eq:spline}
  \widebar{f}(z) = \sum_{a = 1}^{A} \beta_a z^a + \sum_{k=1}^{K} l_k |z - \xi_k|^A,
\end{equation}
where $A$ is the degree of the spline and the $\xi_k$ are pre-specified knots.
Letting
\begin{equation}
  \mat{X} =
  \begin{pmatrix}
    1 & z_1^1 & \cdots & z_1^A \\
    \vdots & \vdots & \ddots & \vdots \\
    1 & z_n^1 & \cdots & z_n^A
  \end{pmatrix},
  \quad 
  \mat{L} =
  \begin{pmatrix}
    |z_1 - \xi_1|^A & \cdots & |z_1 - \xi_K|^A \\
    \vdots & \ddots & \vdots \\
    |z_n - \xi_1|^A & \cdots & |z_n - \xi_K|^A
  \end{pmatrix},
\end{equation}
and $\mat{V} = \psi^{-1} \mat{L} \mat{L}^\trans + \mat{G}^{-1} + \mat{R}^{-1}$, where $\psi > 0$ is a roughness penalty, the restricted likelihood is as in \cref{eq:res-lik} with updated components $\mat C$ and $\mat M$.
Letting $\mat{T} = (\mat{X} \; \mat{L})$ and $\mat{A}$ be a diagonal matrix with the first $A + 1$ elements equal to $0$ and the rest equal to $1$ (corresponding to penalization of the $\vec \beta$ and $\vec l$, respectively), the estimate of $\vec{\theta} = (\beta_0, \beta_1, \ldots, \beta_A, l_1, l_2, \ldots, l_K)$ is $\widebar{\vec{\theta}} = \left(\mat{T}^\trans \widebar{\mat{V}}^{-1} \mat{T} + \psi \mat{A} \right)^{-1} \mat{T}^\trans \widebar{\mat{V}}^{-1} \left(\vec{Y} - \widebar{\mat{B}} \vec{Z} \right)$.

\subsection{Identifiability of parameters}
\label{subsec:identifiability}

It is evident from the form of the affine estimator in \cref{eq:est-joint} and the restricted likelihood in \cref{eq:res-lik} that calculating the estimator depends on being able to estimate components of the relationship of the unmeasured confounder with the exposure and outcome of interest. Therefore, it is natural to wonder whether these components can be identified, and if so, which assumptions are key in driving identifiability and which can be relaxed.
We provide two examples: one in which identifiability can be proved analytically and one in which model components are not identifiable.
The details of how identifiability is achieved or lost in these examples illuminate the key assumptions and their roles.

\subsubsection{Identifiability of model components for a ring graph}
\label{subsec:identifiability_ring}

The first example is one in which identifiability can be analytically proved. We consider the setting without measured variables since including them complicates the notation without providing any additional insight, and results extend trivially. We also assume that the spatial dependence can be represented in a ring graph of $n$ locations depicted in \cref{fig:ring}, with CAR specifications for $\vec{U} | \vec{Z}$ and $\vec{Z} | \vec{U}$.
This spatial structure and model yields the precision matrix
\begin{equation}
    \mat{G}_n = \tau_U
  \begin{pmatrix}
    2 & -\phi_U & & & -\phi_U \\
    -\phi_U & 2  & -\phi_U & & \\
    & \ddots & \ddots & \ddots & \\
    & & -\phi_U & 2 & -\phi_U & \\
    -\phi_U & & & -\phi_U & 2
  \end{pmatrix}
\end{equation}
for $\vec{U} | \vec{Z}$ and similar for $\mat{H}_n$ for $\vec{Z} | \vec{U}$.
For simplicity, we assume that $\E[\vec{Z}] = \vec{0}$ (marginally over $\vec{U}$) and that $\E[\vec{Y} | \vec{Z}, \vec{U}] = \beta_Z \vec{Z} + \vec{U}$.
We present two results: the first discussing parameter identifiability based on $\vec{Z}$ alone (\cref{thm:identification-z}), and the second based on $(\vec{Y}, \vec{Z})$ (\cref{thm:identification-y}). Importantly, these identifiability results allow for $\vec U$ to be completely unmeasured.

\begin{theorem}
\label{thm:identification-z}
In the scenario defined in this section, it can be determined whether or not $\rho \phi_U = 0$ by observing $\vec{Z}$.
Further, if $\rho \phi_U \neq 0$, then $(\tau_Z, \phi_Z, \phi_U, |\rho|)$ is also identifiable by observing $\vec{Z}$.
\end{theorem}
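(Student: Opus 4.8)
The plan is to work entirely with the marginal distribution of $\vec Z$, whose precision matrix is $\mat{H}_n - \mat{Q}^\trans \mat{G}_n^{-1} \mat{Q}$. Since the ring graph is circulant and $\mat{Q} = -\rho\sqrt{\tau_U\tau_Z}\,\mat{D}$ with $\mat D = 2\mat I$, all four matrices $\mat G_n, \mat H_n, \mat Q$ and hence $\Var[\vec Z]^{-1}$ are circulant, so they are simultaneously diagonalized by the discrete Fourier basis. I would first write down the eigenvalues: $\mat G_n$ has eigenvalues $\tau_U(2 - 2\phi_U\cos\omega_k)$ and $\mat H_n$ has eigenvalues $\tau_Z(2-2\phi_Z\cos\omega_k)$ where $\omega_k = 2\pi k/n$, $k=0,\dots,n-1$. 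Then the eigenvalues of the marginal precision of $\vec Z$ are
\begin{equation}
  \lambda_k = \tau_Z(2 - 2\phi_Z\cos\omega_k) - \frac{4\rho^2\tau_Z(1-\cos\omega_k)^2}{2 - 2\phi_U\cos\omega_k}\cdot\frac{\tau_U\tau_Z}{\tau_U\tau_Z}\cdot\frac{1}{?}
\end{equation}
— more carefully, $\mat Q^\trans\mat G_n^{-1}\mat Q$ has eigenvalues $\rho^2\tau_U\tau_Z\cdot 4 / [\tau_U(2-2\phi_U\cos\omega_k)] = 2\rho^2\tau_Z/(1-\phi_U\cos\omega_k)$, so $\lambda_k = 2\tau_Z(1-\phi_Z\cos\omega_k) - 2\rho^2\tau_Z/(1-\phi_U\cos\omega_k)$. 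The observable quantities are the spectral values $\lambda_k$ as functions of $c_k := \cos\omega_k$, which for $n\ge 5$ take at least three distinct values in $[-1,1]$ (including $c=1$ and, for even $n$, $c=-1$).

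Next I would treat $\lambda(c) = 2\tau_Z(1-\phi_Z c) - 2\rho^2\tau_Z/(1-\phi_U c)$ as a rational function of the real variable $c$ that is pinned down on the finite set $\{c_k\}$; because a rational function of bounded degree is determined by enough sample points, $\lambda(\cdot)$ itself is identified as a function on $[-1,1]$ (here the ring structure, giving several distinct $\cos\omega_k$, is doing the real work — this is the analogue of ``differing structures of $(\mat D-\phi_Z\mat W)$ and $\mat D(\mat D-\phi_U\mat W)^{-1}\mat D$'' mentioned after the struck-out passage). To detect whether $\rho\phi_U = 0$: if $\rho=0$ or $\phi_U=0$ the second term is the constant $-2\rho^2\tau_Z$, so $\lambda(c)$ is affine in $c$; conversely if $\rho\phi_U\neq 0$ the second term is a genuinely non-constant rational function with a pole at $c = 1/\phi_U \notin[-1,1]$, so $\lambda(c)$ is not affine. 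Hence affineness of the (identified) map $c\mapsto\lambda(c)$ is equivalent to $\rho\phi_U = 0$, and this is decidable from $\vec Z$ — that gives the first claim.

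For the second claim, assume $\rho\phi_U\neq0$, so we have identified the non-affine rational function $\lambda(c) = 2\tau_Z - 2\tau_Z\phi_Z c - 2\rho^2\tau_Z/(1-\phi_U c)$. I would clear the denominator to get a polynomial identity $(1-\phi_U c)\lambda(c) = 2\tau_Z(1-\phi_Z c)(1-\phi_U c) - 2\rho^2\tau_Z$, and match coefficients of $1, c, c^2$: the $c^2$ coefficient gives $\tau_Z\phi_Z\phi_U$, the $c^1$ coefficient gives a combination of $\tau_Z\phi_Z, \tau_Z\phi_U$, and the constant gives $\tau_Z(1-\rho^2)$; equivalently, read off the pole location $c^* = 1/\phi_U$ (hence $\phi_U$), the residue at the pole (giving $\rho^2\tau_Z$), the slope of the affine part (giving $\tau_Z\phi_Z$), and the value at $c=0$ or at infinity (giving $\tau_Z$). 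From these one solves successively for $\phi_U$, then $\tau_Z$, then $\phi_Z$, then $|\rho|$; the sign of $\rho$ is genuinely not recoverable from $\vec Z$ alone (it enters only through $\rho^2$), consistent with the ``$|\rho|$'' in the statement. The main obstacle is the bookkeeping in the first two steps: verifying that for every admissible $n$ in the theorem's scope the sampled abscissae $\{\cos(2\pi k/n)\}$ contain enough distinct points to pin down a degree-$\le 2$-over-degree-$\le 1$ rational function (three distinct points suffice, and $n\ge 5$, or even $n\ge 4$ with care, delivers them), and checking the few degenerate edge cases ($\phi_U$ or $\phi_Z$ at the boundary of positivity) do not collapse the argument; everything after that is linear algebra on circulant matrices plus elementary rational-function interpolation.
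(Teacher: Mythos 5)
Your proposal is correct, and it reaches the conclusion by a genuinely different route from the paper. The paper works in the spatial domain: it computes $\lim_{n\to\infty}\Prec[\vec Z]_{ij}$ lag by lag (using the tridiagonal-determinant and cofactor lemmas of the appendix), detects $\rho\phi_U=0$ from the vanishing of all entries beyond the tri-diagonal, reads $\phi_U$ off the geometric decay rate $\phi_U/(1+\sqrt{1-\phi_U^2})$ of the far-off-diagonal entries, and then solves a three-equation system for $(\tau_Z,\phi_Z,|\rho|)$. You instead diagonalize everything in the Fourier basis of the circulant ring and study the spectral function $\lambda(c)=2\tau_Z(1-\phi_Z c)-2\rho^2\tau_Z/(1-\phi_U c)$, detecting $\rho\phi_U=0$ via affineness in $c$ and recovering the parameters from the affine asymptote, the pole location $1/\phi_U$, and the residue. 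These are exactly Fourier-dual descriptions of the same object — the paper's decay rate is precisely the signature of your pole — so both proofs are sound, but yours buys exact finite-$n$ expressions with no matrix lemmas and yields identifiability of the parametrized precision matrix already at fixed moderate $n$, whereas the paper's limit-based argument only delivers identification asymptotically (and implicitly leans on the growing number of location pairs at each lag to make the precision entries estimable). The paper's spatial-domain argument, on the other hand, does not depend on circulant symmetry and so suggests how the result might extend to irregular neighborhood structures like the county adjacency graph, and it connects directly to the interpretation of the cross-Markov assumption as local confounding. Two small repairs to your write-up: a type-$(2,1)$ rational function needs four, not three, distinct interpolation abscissae, so you should require $n\ge 7$ (or simply pass to the limit, where $\{\cos(2\pi k/n)\}$ is dense in $[-1,1]$); and your first display should be deleted in favor of the corrected eigenvalue formula that follows it.
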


\begin{proof}
The proof relies on a few matrix lemmas which are stated and given proof outlines in Appendix~\ref{app:identifiability}.
We have that $\Prec[\vec{Z}] = \tau_Z \left[\mat{H}_n - 4\rho \mat{G}_n^{-1}\right]$ (marginally over $\vec{U}$), and
\begin{equation}
    \begin{aligned}
    \lim_{n\to\infty} \Prec[\vec{Z}]_{ij}
      &= \begin{cases}
      \tau_Z \left[2 -  2\rho ^ 2 \frac{1}{\sqrt{1-\phi_U^2}} \right], & i = j, \\[10pt]
      \tau_Z \left[-\phi_Z -  2\rho ^ 2 \frac{\phi_U}{\sqrt{1-\phi_U^2}\left(1 + \sqrt{1-\phi_U^2}\right)} \right], & |i - j| = 1, \\[12pt]
      \tau_Z \left[ 0 -  2\rho ^ 2 \frac{\phi_U^{|i-j|}}{\sqrt{1-\phi_U^2}\left(1 + \sqrt{1-\phi_U^2}\right)^{|i-j|}} \right], & |i - j| > 1.
      \end{cases}
    \end{aligned}
    \label{eq:limit_prec}
\end{equation}
First, note that for any lag $l$, the number of pairs of locations with $|i - j| = l$ grows linearly with $n$.
It can be determined whether or not $\rho \phi_U = 0$ because $\lim_{n\to\infty} \Prec[\vec{Z}]_{ij} = 0$ for all $(i, j)$ such that $|i - j| > 1$ if and only if $\rho \phi_U = 0$.
If $\rho \phi_U \neq 0$, then for $|i - j| > 1$ and $|i^\prime - j^\prime| = |i - j| + 1$ the ratio $\lim_{n\to\infty} \Prec[\vec{Z}]_{ij} / \Prec[\vec{Z}]_{i^\prime j^\prime}$ depends only on $\phi_U$ and is bijective, thus $\phi_U$ is identified.
With $\phi_U$ identified, the three cases in \cref{eq:limit_prec} form a system of equations solvable for $(\tau_Z, \phi_Z, |\rho|)$.
\end{proof}

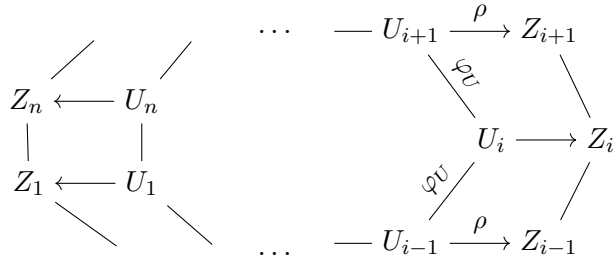
\begin{figure}[!b]
	\centering
	\begin{tikzpicture}[paths/.style={-}]
	
	\node[] (U1) {$U_1$};
	\node[above= 0.5 of U1] (Un) {$U_n$};
	\draw[-] (U1) -- (Un);
	\node[left = 0.8 of U1] (Z1) {$Z_1$};
	\draw[->] (U1) -- (Z1);
	\node[left = 0.8 of Un] (Zn) {$Z_n$};
	\draw[->] (Un) -- (Zn);
	\draw[-] (Z1) -- (Zn);
	
	\node[below=0.5 of U1] (useless1) {};
	\node[right = 0.8 of useless1] (useless12) {};
	\node[right=0.2 of useless12] (dots1) {$\dots$};
	\draw[-] (U1) -- (useless12);
	\node[left = 1 of useless12] (useless13) {};
	\draw[-] (Z1) -- (useless13);
	
	\node[above = 0.5 of Un] (uselessn) {};
	\node[right = 0.5 of uselessn] (uselessn2) {};
	\node[right= 0.5 of uselessn2] (dotsn) {$\dots$};
	\draw[-] (Un) -- (uselessn2);
	\node[left = 1 of uselessn2] (uselessn3) {};
	\draw[-] (Zn) -- (uselessn3);

	\node[right = 0.1 of dotsn] (uselessn3) {};
	\node[right = 0.5 of uselessn3] (Uip) {$U_{i + 1}$};
	\draw[-] (uselessn3) -- (Uip);
	\node[right = 0.8 of Uip] (Zip) {$Z_{i + 1}$};
	\draw[->] (Uip) -- (Zip) node [midway,above] (TextNode5) {\small $\rho$};
	
	\node[below=1 of Zip] (uselessui) {};
	\node[left=0.3 of uselessui] (Ui) {$U_i$};
    \draw (Uip) -- (Ui) node [midway, above, sloped] (TextNode) {\small $\phi_U$};

	\node[right = 0.8 of Ui] (Zi) {$Z_i$};
	\draw[->] (Ui) -- (Zi);
	
	\node[below = 2.2 of Zip] (Zim) {$Z_{i - 1}$};
	\node[left = 0.8 of Zim] (Uim) {$U_{i - 1}$};
    \draw (Uim) -- (Ui) node [midway, above, sloped] (TextNode2) {\small $\phi_U$};
	\draw[->] (Uim) -- (Zim) node [midway,above] (TextNode4) {\small $\rho$};
	\node[left = 0.5 of Uim] (uselessim) {};
	\draw[-] (uselessim) -- (Uim);
	
	\draw[-] (Zi) -- (Zim);
	\draw[-] (Zi) -- (Zip);
	
	\end{tikzpicture}
\caption{Graph representation of ring with the cross-Markov property and dependence parameters.}
\label{fig:ring}
\end{figure}

Based on \cref{thm:identification-z}, if $\rho\phi_U = 0$ some parameters are not identifiable by only observing $\vec Z$. That is because, when $\rho = 0$, $Z$ and $U$ are uncorrelated, hence $Z$ cannot provide any information on $U$, and when $\phi_U = 0$, the unmeasured variable is not spatial and the variability in $Z$ cannot be decomposed accordingly (a situation we examine closely in \cref{subsec:non_spatial}).
In contrast, when $\rho\phi_U \neq 0$, a number of spatial parameters are identified based solely on the vector of treatments $\vec Z$. For intuition about why the unobserved variable's spatial dependence parameter $\phi_U$ (but not its precision $\tau_U$) can be identified by examining the behavior of $\Prec[\vec{Z}]$ away from the tri-diagonal, recall that off of the tri-diagonal, $i$ and $j$ are not neighbors, and that the precision in \cref{eq:limit_prec} tells us about the strength of dependence between $Z_i$ and $Z_j$ given the value of $\vec{Z}$ at all other locations.
Our cross-Markov assumption states that, {\it conditionally on $\vec{U}$}, the values of $\vec{Z}$ at non-neighboring locations are independent given the values of $\vec{Z}$ at other locations.
If $Z_i$ and $Z_j$ are not independent conditional only on $\vec{Z}$ (and not on $\vec{U}$), this dependence has to arise through paths in \cref{fig:ring} that pass through $\vec{U}$.
The strength of dependence is a function of $\rho$, the strength of connection between $U$ and $Z$ at a given location (which is independent of the distance between $i$ and $j$), and $\phi_U$ which determines how quickly the dependence of $U_i, U_j$ attenuates with distance $|i - j|$. These dependencies are graphically represented in \cref{fig:ring} where the dependence between $Z_{i - 1}$ and $Z_{i + 1}$, conditional on $\vec{Z}$ at other locations but marginally over $\vec{U}$, arises from the paths $Z_{i - 1}-U_{i -1} - U_i - U_{i + 1} - Z_{i + 1}$ and $Z_{i-1} - U_{i-1} - U_{i-2} - \cdots - U_{1} - U_{n} - \cdots U_{i + 1} - Z_{i+1}$, and the dependency due to the latter path diminishes as $n$ becomes large.
We can identify $\phi_U$ off of the tri-diagonal by examining this attenuation, and $\rho$ is separable from $\tau_Z$ only when examining the tri-diagonal as well.

\begin{theorem}
\label{thm:identification-y}
The parameter $\phi_U$ is identifiable by observing $(\vec{Y}, \vec{Z})$
Further, if $\phi_U \neq 0$ then the full parameter $(\beta_Z, \tau_Z, \phi_Z, \tau_U, \phi_U, \rho, \tau_{\epsilon})$ is identifiable by observing $(\vec{Y}, \vec{Z})$.
\end{theorem}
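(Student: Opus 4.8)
The plan is to use \cref{thm:identification-z} for everything $\vec Z$ alone can deliver and to extract the remaining parameters from the conditional law of $\vec Y$ given $\vec Z$. Here ``observing $(\vec Y,\vec Z)$'' means knowing their joint Gaussian law, hence in particular $\Var[\vec Z]$, the conditional variance $\Var[\vec Y\mid\vec Z]$, and the regression matrix $\mat B$ defined by $\E[\vec Y\mid\vec Z]=\mat B\vec Z$, i.e.\ $\mat B=\Cov(\vec Y,\vec Z)\,\Var[\vec Z]^{-1}$. Specializing \cref{eq:observation-model-factored} to the ring with no measured covariates, with $\E[\vec Y\mid\vec Z,\vec U]=\beta_Z\vec Z+\vec U$, $\mat R=\tau_\epsilon\mat I$, $\mat D=2\mat I$, and (by \cref{eq:q-spec}) $\mat Q=-2\rho\sqrt{\tau_U\tau_Z}\,\mat I$, we get $\Var[\vec Y\mid\vec Z]=\mat G_n^{-1}+\tau_\epsilon^{-1}\mat I$ and $\mat B=\beta_Z\mat I-\mat G_n^{-1}\mat Q=\beta_Z\mat I+2\rho\sqrt{\tau_Z/\tau_U}\,\tau_U\mat G_n^{-1}$, where $\mat G_n$ is the CAR precision with parameters $(\tau_U,\phi_U)$. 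I would record once, citing the matrix lemmas of Appendix~\ref{app:identifiability} exactly as in the proof of \cref{thm:identification-z}, that $(\mat G_n^{-1})_{ij}=\tau_U^{-1}c_{|i-j|}(\phi_U)$ where $c_\ell(\phi_U)$ depends only on $\phi_U$ and $\ell$, decays geometrically in $\ell$, is nonzero for every $\ell$ iff $\phi_U\neq0$, and (as $n\to\infty$) has ratio $c_{\ell+1}(\phi_U)/c_\ell(\phi_U)$ equal to a strictly monotone function of $\phi_U$.

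First I would settle the unconditional claim that $\phi_U$ is identifiable from $(\vec Y,\vec Z)$. If $\rho\phi_U\neq0$, \cref{thm:identification-z} already identifies $\phi_U$ from $\vec Z$ alone. Otherwise $\rho\phi_U=0$, and the off-diagonal entries of $\Var[\vec Y\mid\vec Z]$ coincide with those of $\mat G_n^{-1}$, namely $\tau_U^{-1}c_\ell(\phi_U)$; hence $\Var[\vec Y\mid\vec Z]$ is a scalar matrix iff $\phi_U=0$, and when it is not scalar the limiting ratio of consecutive off-diagonal entries is a strictly monotone function of $\phi_U$, which pins $\phi_U$ down.

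For the full-parameter claim, assume $\phi_U\neq0$ and split on $\rho$. If $\rho\neq0$ (so $\rho\phi_U\neq0$), \cref{thm:identification-z} gives $(\tau_Z,\phi_Z,\phi_U,|\rho|)$ from $\vec Z$; then, $\phi_U$ being known, any off-diagonal entry of $\Var[\vec Y\mid\vec Z]=\mat G_n^{-1}+\tau_\epsilon^{-1}\mat I$ determines $\tau_U$ (the $c_\ell(\phi_U)$ being nonzero), its diagonal then determines $\tau_\epsilon$, any off-diagonal entry of $\mat B$ determines $\rho$ together with its sign (consistent with the already-known $|\rho|$, since everything else in $2\rho\sqrt{\tau_Z/\tau_U}\,\tau_U\mat G_n^{-1}$ is now known), and the diagonal of $\mat B$ finally yields $\beta_Z$. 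If instead $\rho=0$, then $\mat Q=\mat0$, so $\Var[\vec Z]^{-1}=\mat H_n$ is exactly the CAR precision with parameters $(\tau_Z,\phi_Z)$ and exposes both of them directly; $\Var[\vec Y\mid\vec Z]=\mat G_n^{-1}+\tau_\epsilon^{-1}\mat I$ with $\phi_U\neq0$ yields $(\phi_U,\tau_U,\tau_\epsilon)$ from the off-diagonal decay rate, off-diagonal magnitude, and diagonal respectively; $\mat B=\beta_Z\mat I$ yields $\beta_Z$; and $\rho=0$ is forced because $\rho\phi_U=0$ (read off $\vec Z$ via \cref{thm:identification-z}) together with $\phi_U\neq0$ leaves no alternative.

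The main obstacle is entirely the spectral/inverse analysis of the circulant-like ring precision matrices underlying the claims about $c_\ell(\phi_U)$ — that $\mat G_n^{-1}$ has strictly nonzero, geometrically decaying off-diagonal entries with a decay rate bijective in $\phi_U$, uniformly in $n$ and in the $n\to\infty$ limit, and that this structure survives adding the nugget $\tau_\epsilon^{-1}\mat I$. These are precisely the matrix lemmas already invoked in the proof of \cref{thm:identification-z}, so the incremental work here is organizational: the case split on whether $\rho$ or $\phi_U$ vanishes, and the observation that it is the conditional mean $\E[\vec Y\mid\vec Z]$, rather than second moments alone, that separates $\beta_Z$ from $\sign(\rho)$.
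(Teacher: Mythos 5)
Your proposal is correct and follows essentially the same route as the paper's proof: use \cref{thm:identification-z} for what $\vec Z$ alone delivers, read $\phi_U$, $\tau_U$, and $\tau_\epsilon$ off the decay and magnitude of $\Var[\vec Y\mid\vec Z]=\mat G_n^{-1}+\tau_\epsilon^{-1}\mat I$, split on $\rho=0$ versus $\rho\neq0$, and separate $\beta_Z$ from the confounding term in $\E[\vec Y\mid\vec Z]$ using the fact that $\mat G_n^{-1}$ is not scalar when $\phi_U\neq0$. The only differences are minor bookkeeping (you recover $\tau_U$ from the conditional variance before turning to the regression matrix, and you add an unnecessary but harmless case split for the standalone $\phi_U$ claim, which the paper handles directly via $\Var[\vec Y\mid\vec Z]$ in all cases).
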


\begin{proof}
Note that $\E[\vec{Y} | \vec{Z}]$ is identified irrespective of $\Var[\vec{Y} | \vec{Z}]$, and that by Theorem~\ref{thm:identification-z} we can identify whether $\rho \phi_U = 0$ by observing $\vec{Z}$.

We start by showing that $\phi_U$ is identifiable by observing $(\vec Y, \vec Z)$. Since $\Var[\vec{Y} | \vec{Z}] = \mat{G}_n^{-1} + \tau_{\epsilon}^{-1} \mat{I}_n$, by noting the similarity between the expressions for $\Var[\vec{Y} | \vec{Z}]$ and $\Prec[\vec{Z}]$ in \eqref{eq:limit_prec}, we have
\begin{equation}
\begin{aligned}
    \lim_{n\to\infty} \Var[\vec{Y}|\vec{Z}]_{ij} = 
    \tau_U^{-1} \frac{1}{2\sqrt{1-\phi_U^2}} \left(\frac{\phi_U}{1+\sqrt{1-\phi_U^2}}\right)^{|i-j|}, && i \neq j.
    \end{aligned}
\label{eq:var_y_giv_z}
\end{equation}
Therefore, $\lim_{n\to\infty} \Var[\vec{Y} | \vec{Z}]_{ij} = 0$ for $i \neq j$ if and only if $\phi_U = 0$. Since we can identify $\Var[\vec{Y} | \vec{Z}]$, $\phi_U$ is identifiable by $\lim_{n\to\infty} \Var[\vec{Y}|\vec{Z}]_{ij} / \Var[\vec{Y}|\vec{Z}]_{i^\prime j^\prime}$ for $|i - j| > 0$ and $|i^\prime - j^\prime| = |i - j| + 1$.

Next, assume that $\phi_U \neq 0$, and we show that the remaining parameters are identifiable by observing $(\vec{Y}, \vec{Z})$. We first note that since
$\E[\vec{Y} | \vec{Z}] = \left(\beta_Z - \rho\sqrt{\tau_Z / \tau_U}
\right) \vec{Z}$, the {\it combined} coefficient of $\vec Z$, $\beta_Z - \rho\sqrt{\tau_Z / \tau_U}$ is identifiable.
Since we can identify whether $\rho\phi_U \neq 0$ from observing $\vec Z$ (\cref{thm:identification-z}), and since we have here that $\phi_U \neq 0$, we can identify whether $\rho = 0$, which allows us to consider the cases where $\rho = 0$ and $\rho \neq 0$ separately.

If $\rho = 0$, the combined coefficient of $\vec Z$ is equal to $\beta_Z$, and $\beta_Z$ is identified (which is expected since for $\rho = 0$ there is no confounding by $\vec{U}$).
Additionally, we can return to \eqref{eq:limit_prec} to identify $(\tau_Z, \phi_Z)$ from the first two cases.
Finally, with $\phi_U \neq 0$ identified, $\tau_U$ can be identified from the off-diagonal elements of $\Var[\vec{Y} | \vec{Z}]$ in \cref{eq:var_y_giv_z}.

If $\rho \neq 0$, we can identify $(\tau_Z, \phi_Z, \phi_U, |\rho|)$ by Theorem~\ref{thm:identification-z}. Recall that $\E[\vec{Y} | \vec{Z}] = \big( \beta_Z - \mat{G}_n^{-1} \mat{Q}_n \big) \vec{Z}$ and $-\mat{G}_n^{-1}\mat{Q}_n = 2\rho\sqrt{\frac{\tau_Z}{\tau_U}}\left(\tau_U\mat{G}_n^{-1}\right)$, where $\tau_U \mat{G}_n^{-1}$ does not depend on $\tau_U$.
Since $\phi_U \neq 0$, $\mat{G}_n$ is not a scalar matrix, and the terms $\vec{Z}$ and $\tau_U \mat{G}_n^{-1}\vec{Z}$ are known and not collinear.
Thus we can separately identify their coefficients $\beta_Z$ and $\rho\sqrt{\tau_Z / \tau_U}$.
Since $\tau_U$ and $\tau_Z$ are both positive, and $(\tau_Z, |\rho|)$ has been previously identified, we can identify $\sign(\rho)$, and $\tau_U$.

In both cases ($\rho = 0$ or $\rho \neq 0$), $\tau_{\epsilon}$ can be identified from the diagonal elements of $\Var[\vec{Y} | \vec{Z}]$.

\end{proof}


Theorems  \ref{thm:identification-z} and \ref{thm:identification-y} combined establish that when spatial dependencies can be represented using a ring, and the unmeasured variable is spatial ($\phi_U \neq 0$) and is truly a confounder ($\rho \neq 0$), the effect of the exposure $\vec{Z}$, $\beta_Z$, is identifiable.

The above results and the details of the proofs indicate that the cross-Markov property is the critical assumption in identifying $\beta_Z$.
This is not to say that we can remove the more parametric or the constant conditional correlation assumptions, but rather that these assumptions are most likely not the only or weakest ones allowing for identification.
In turn, the cross-Markov assumption can be viewed as a relaxation of the usual assumption of no unobserved confounders, expressing that confounding by unobserved spatial variables is local.
To see this, the assumption may be rewritten as $Z_i \independent \vec{U}_{-i} | (\vec{Z}_{-i}, U_i, \mat{X}_{(-z)})$, expressing that conditional on measured covariates, the value of $U$ at location $i$, and the value of $Z$ everywhere else, $U_k$ with $k \neq i$ is not a predictor of $Z_i$ and therefore does not confound the relationship between $Z_i$ and $Y_i$.

Although we have proved these asymptotic results for a ring of locations, the key requirement on the structure of a graph with large connected components is that there are enough pairs of locations at varying lags.
This requirement does not seem problematic for, e.g., counties in the United States, as in our study in \cref{sec:example}.

\subsubsection{A non-sufficient set of assumptions: unmeasured non-spatial confounders}
\label{subsec:non_spatial}

Within framework \cref{eq:dgm-general}, any set of assumptions that suffice for identification of $\E[\vec U | \mat{X}]$ would also allow for identification of the causal parameter $\mu'(z)$. 
In \cref{subsec:identifiability_ring} we showed that for adjacency structures described as a ring of growing size, identifiability is achieved when the unmeasured confounder is spatial $(\phi_U \neq 0)$. Here, we establish that identifiability is lost when the unmeasured confounder is not spatial $(\phi_U = 0)$ under any adjacency structure.

Assume that $U_i$ are independent and identically distributed random variables (exhibiting no spatial structure) with $\mat{G} = \tau_U \mat{I}$. Further, assume that $\mb Z$ and $\mb \epsilon$ are not spatially structured with precision matrices $\mat{H} = \tau_Z \mat{I}$ and $\mat{R} = \tau_\epsilon \mat{I}$. Then, the parameter vector $(\tau_U, \tau_Z, \tau_\epsilon, \rho)$ in the restricted likelihood is reduced to $(\sigma^2, \phi) = (\tau_U^{-1} + \tau_\epsilon^{-1}, \tau_Z (1-\rho^2))$, and the parameters in $(\tau_U, \tau_\epsilon)$ and in $(\tau_Z, \rho)$ are not separately identifiable (the mathematical derivations are included in Appendix~\ref{app:non-spatial}).
Thus, when there is no spatial structure, $\widebar{\vec{\beta}} = (\mat{X}^\trans \mat{X})^{-1} \mat{X}^\trans (\vec{Y} - \rho \sqrt{\frac{\tau_Z}{\tau_U}}\vec{Z})$ is not identifiable either.

This result is intuitively obvious. If $\vec U$ is not spatially structured, there is no information in the observed data to differentiate outcome model residuals' variability due to $U$ from that due to $\epsilon$, and similarly nothing to differentiate intrinsic variability in $Z$ from variability due to $U$.
In such case, $\E[\vec{U} | \mat{X}]$ is not identifiable based on observed data indicating that adjustment for $U$ is not possible if the unmeasured confounders do not exhibit spatial structure. This is in line with recent work showing that latent variable approaches cannot be used to acquire identifiability of causal parameters without additional assumptions \citep{damour2019multiple,ogburn2019comment}.


\subsection{Spatially correlated mediating variables}
\label{subsec:mediation}

From model \cref{eq:observation-model-factored} and the form of the restricted likelihood in \cref{eq:res-lik}, it is evident that information about the elements $(\G, \Q)$ in the bias correction term $-\G^{-1}\Q(\vec Z - \mat{X}_{(-z)}\vec{\gamma})$ is found in the spatial variability of exposure and outcome models' residuals.
However, the estimation procedure cannot differentiate between spatial structure arising from spatial confounders (temporally precedent of $Z$) or variables found on the causal pathway between $Z$ and $Y$ (mediators). 
If $\vec{Z}$ and $\vec{Y}$ are measured within a small time window, it may be reasonable to assume that there are no spatial covariates mediating the effect of $Z$ on $Y$, and for that reason, our estimates correspond to estimates of $\beta_z = \mu'(z)$.

On the other hand, in the presence of spatial intermediate variables, estimates using the affine estimator might more closely resemble the direct effect of $Z$ on $Y$, not due to changes to the spatial mediators \citep{Baron1986}. In this setting, the spatial scale restriction provides some protection against adjustment for spatial mediators since variation in spatial scales smaller than that of the exposure is not adjusted for. 
Therefore, the spatial scale restriction allows for unmeasured spatial confounder bias mitigation while protecting us from adjusting for variables on the causal pathway between $Z$ and $Y$.

\section{The affine estimator in non-linear settings: A Bayesian implementation}
\label{subsec:affine_bayesian}

The REML framework has allowed us to analytically investigate bias and identifiability, by allowing us to integrate out the distribution of the unmeasured confounder from the observed data likelihood. However, such analytical approach is less applicable to non-continuous outcomes and non-linear models. Here, we extend the affine estimator to non-linear models within the Bayesian framework.

\subsection{Estimation of causal parameters with non-linear models}
\label{subsec:other_outcomes}

When $\mu(z) = \E[Y(z)]$ is the estimand of interest, and the outcome model specifies $\E(Y | Z, \covs^m, U)$,
$\mu(z)$ can be written as
\(\displaystyle \E_{\covs^m, U} \left[ E(Y | Z = z, \covs^m, U) \right] \) under assumptions. This is often referred to as the g-formula, or g-computation \citep{robins1986new}.
If the outcome model is linear without exposure-covariate interactions, the regression coefficient for the exposure can be directly interpreted as an estimate of the causal quantity $\mu'(z)$. However, the exposure's coefficient cannot be directly causally interpreted in the case of non-linear models, even when the model is correctly specified.  
For example, in the context of logistic regression with binary outcomes, the coefficient of the exposure $\beta_z$ is not equal to $\mu'(z)$, and a linear specification of the exposure-response relationship in the linear predictor of the logistic regression does {\it not} imply a linear $\mu(z)$.
Therefore, in non-linear outcome models, we need to proceed with care when translating estimated coefficients to estimates of causal quantities, and an integration step (over the distribution of confounders in the target population) needs to be employed in order to acquire estimates of $\mu(z)$ from a non-linear model fit.

Poisson models for count outcomes like the one in our study are an exception. In such models, the parameter $\beta_z$ (or $\exp(\beta_z)$ which is often used in Poisson models) can be interpretable as causal, but for an estimand that is slightly different from $\mu'(z)$.
Let $P_i$ be the population at risk at location $i$, and $Y_i(z)$ be the potential outcome at location $i$ if the exposure was set to $z$.
Consider the PAERC defined in terms of the {\it standardized} outcome rate as $\E[P^{-1} Y(z)]$. Under a structural model similar to \cref{eq:dgm-general} for linear $\eta$, assume that $[Y(z) | \vec{X}_{(-z)}, U] \sim \text{Poisson} \Big(P \exp\{ \beta_z z + \beta_{(-z)}^\trans \bm X_{(-z)} + U \} \Big)$. Then, the PAERC can be written as
\begin{equation}
\E_{\vec{X}_{(-z)}, U} \left[ \exp\{ \beta_z z + \beta_{(-z)}^\trans \bm X_{(-z)} + U \} \right] =
\exp\{ \beta_z z \} \E_{\vec{X}_{(-z)}, U} \left[ \exp\{\beta_{(-z)}^\trans \bm X_{(-z)} + U \} \right].
\label{eq:poisson_estimand}
\end{equation}
From \cref{eq:poisson_estimand}, the coefficient $\beta_z$ can be interpreted as the log relative standardized rate for a one-unit exposure change, $\log \big\{ E[P^{-1} Y(z + 1)] \ / \ E[P^{-1} Y(z)] \big\}$. Equivalently, $\beta_z$ can be interpreted as the instantaneous effect of the exposure in the relative scale as $\partial \{ \log \E[P^{-1} Y(z)] \} / \partial z$.
Note that this is substantially different from the standard interpretation of estimated coefficients in Poisson models, and $\beta_z$ {\it cannot} be used as an estimate of $\log \big\{ E[ Y(z + 1) / Y(z)] \big\}$.

\subsection{Bayesian implementation of the affine estimator}

The presentation above indicates that, for non-linear models where non-collapsibility is an issue and estimated coefficients do not always estimate causal quantities, estimation might require an explicit model for $Y$ conditional on measured covariates {\it and} the unmeasured component.
This is straightforwardly achieved for the affine estimator within a Bayesian implementation, for which $\vec U$ is viewed as a missing variable that is iteratively imputed through a Gibbs sampler.
Therefore, placing the affine estimator within the Bayesian paradigm does not require marginalization over $\vec{U}$, which allows for estimation outside the realm of linear regression.
This is exploited in \cref{sec:simulation} and \cref{sec:example} where we consider a count outcome and a Poisson model with the log link and linear predictor $\eta_i = o_i + \vec{x}_i^\trans \vec{\beta} + u_i$, where $o_i$ is the offset in the usual sense, and sampling from the conditional posterior distribution of $\vec \beta$ can be performed without modifying standard algorithms for Poisson regression. 

Apart from its generalizability to non-linear models, the Bayesian approach has a number of additional benefits over the REML approach, including computational gains. The Bayesian implementation is computationally more efficient thanks to the conditional nature of Gibbs sampling which allows us to take advantage of sparsity in $\mat{P} = \Prec[\vec{U}, \vec{Z}]$. For example, the log full conditional density of the dependence parameters is (up to an additive constant)
\begin{equation}
    \frac{1}{2}\left[
    |\mat{P}| -
    \vec{u}^\trans \mat{G} \vec{u} - 2 \vec{u}^\trans \mat{Q} (\vec{z} - \mat{X}_{(-z)} \vec{\gamma}) - 
    (\vec{z} - \mat{X}_{(-z)} \vec{\gamma})^\trans \mat{H} (\vec{z} - \mat{X}_{(-z)} \vec{\gamma})
    \right] + \log p(\tau_U, \tau_Z, \phi_U, \phi_Z, \rho),
\end{equation}
where $\mat{P}$, $\mat{G}$, $\mat{H}$, and $\mat{Q}$ are sparse matrices depending on $(\tau_U, \tau_Z, \phi_U, \phi_Z, \rho)$, and $p(\tau_U, \tau_Z, \phi_U, \phi_Z, \rho)$ is the prior for these parameters. In contrast, the REML approach requires inverting $\mat{G}$, $\mat{R}$, $(\mat{H}-\mat{Q}^\trans\mat{G}^{-1}\mat{Q})$, $\mat{M}$, and $\mat{C}^\trans \mat{M} \mat{C}$ in \cref{eq:rl-components} and \cref{eq:res-lik} at each evaluation of the restricted likelihood.

Another advantage of the Bayesian approach is that it is easier to incorporate non-Gaussian exposures by distinguishing between $\vec{Z}$ in \cref{eq:model-joint} and the exposure parameterization in the outcome model.
For example, in our analysis of the food access data in \cref{sec:example}, we replace $Z$ with $\log Z$ in \cref{eq:model-joint} to make the assumption of joint normality more plausible, while using $Z$ in the outcome model to retain the desired interpretation of regression coefficients on the original percentage point scale. Of course, that comes with the caveat that our assumptions are now based on the transformation of the exposure variable.
The exposure model and outcome model may be further decoupled by, e.g., assuming joint normality of $\vec{U}$ and a latent variable in a probit model of a binary exposure.

\subsection{A regularization prior on the precision matrix of \protect\texorpdfstring{$(\vec{U}, \vec{Z})$}{$(U, Z)$}}
\label{sec:regularization}

The estimation of the joint precision matrix $\mat{P}$ of $(\vec{U}, \vec{Z})$ is critical in mitigating bias due to the unobserved spatial confounder $\vec{U}$.
However, the present setting is a ``low-information'' one, as we neither observe $\vec{U}$ directly nor obtain independent replicates.
In such settings, the restricted likelihood may have maxima at the boundary of allowed values.
For example, \cite{chung2013avoiding} noted that it is not unusual in random effects meta-analysis for the REML estimate of the between-study standard deviation to be zero, and suggested regularizing the REML estimate by multiplying the restricted likelihood by a weakly-informative gamma prior for the between-study variance.
Along another thread, \cite{won2013condition} considered estimating the covariance matrix in high-dimensional settings where maximum likelihood estimates of such covariance matrices are often ill-conditioned and cannot be inverted accurately.
They propose a constrained maximum likelihood approach using the constraint $\kappa(\mat{\Sigma}) \leq \kappa_{max}$, where $\kappa(\mat{\Sigma})$ is the condition number (the ratio of the largest to smallest eigenvalue) and $\kappa_{max}$ is pre-specified.
They note that this optimization is equivalent to maximizing the likelihood times an exponential prior on $\kappa(\mat{\Sigma})$ left-truncated at 1.

We have observed that this problem manifests in the fully Bayesian implementation as occasional failure of the MCMC sampler to converge.
For our purposes, we adopt a truncated exponential prior for $\kappa(\mat{P})$ with  rate 1/10 and range $(1, \infty)$ which directly addresses the ill-conditioning problem.
With this specification, the difference in log prior density between $\kappa(\mat{P}) = 1$ and $\kappa(\mat{P}) = 100$ is $9.9$.

\section{Simulation study}
\label{sec:simulation}

\subsection{Linear effect}

We perform simulations to compare the affine estimator to the non-spatial and spatial random effect estimators under several generative models (GMs).
Under all GMs, we consider a single measured covariate, $X$, generated uniformly on $(-1/2, 1/2)$. We assume that the mean of $\vec{Z} | \mat{X}_{(-z)}$ is $\vec{X}$.
Four GMs reflect $(\vec U, \vec Z) | \mat{X}_{(-z)}$ generation according to \cref{eq:model-joint} and \cref{eq:q-spec}, with $\vec{U} | \vec{Z}, \mat{X}_{(-z)}$ and $\vec{Z} | \vec{U}, \mat{X}_{(-z)}$ being one-dimensional CAR models.
The within-variable dependence parameters are denoted by $\phi_U$ and $\phi_Z$, and precision parameters by $\tau_U = \tau_Z = 1$.
The first model we consider is the unconfounded GM (GM 1), where $\vec{U}$ and $\vec{Z}$ are independent of each other ($\rho = 0$), but still spatially structured with $\phi_U = 0.5$ and $\phi_Z = 0.2$.
The unmeasured variable $\vec{U}$ is still predictive of the outcome, hence inducing spatial correlation in the observed outcomes.
For the remaining three CAR models, we specify cross-variable dependence ($\rho = 0.3$), and vary the within-variable dependence parameters $(\phi_U, \phi_Z)$ at $(0.5, 0.2)$ for GM 2, representing a confounder at a larger spatial scale than the exposure, $(0.2, 0.5)$ for GM 3, representing a confounder at a smaller spatial scale than the exposure violating our causal assumptions, and $(0.35, 0.35)$ for GM 4, where confounder and exposure vary at the same spatial scale.
The fifth and sixth GMs represent situations in which the analysis model is mis-specified.
For GM 5, $\vec U$ was generated such that its \textit{marginal} distribution  is a one-dimensional CAR model with $\phi_U = 0.5$ and $\tau_U = 1$, and $\vec{Z} | \vec{U}, \mat{X}_{(-z)} \sim \N[\vec{U} + \vec{X}, \mat{I}]$. Therefore, in this GM, the model \cref{eq:model-joint} is mis-specified in that $\mat{G}$ does not describe the true precision matrix of $\vec{U} | \vec{Z}, \mat{X}_{(-z)}$, and the assumption of constant conditional correlation is violated.
However, the precision matrix of $\vec{Z} | \vec{U}, \mat{X}_{(-z)}$ is still correctly specified, and the cross-Markov property holds.
For GM 6, $\tan(U)$ takes the place of $U$ in \cref{eq:model-joint}, so that the joint normality assumption on $(\vec{U}, \vec{Z}) | \mat{X}_{(-z)}$ is violated.
In all six GMs, the potential outcomes are generated as $Y_i(z) \overset{iid}{\sim} \mathrm{Poisson}\left[\exp\left\{z + X_i + U_i \right\}\right]$.

\begin{table}[!b]
\small
\caption{Simulation results from 500 data sets of size $n = 300$. The -RS suffix indicates estimators with the restriction $\phi_Z \leq \phi_U$. \vspace{8pt}}
\label{tab:sim}
  \centering
  \begin{tabularx}{0.9\textwidth}{rlrd{-2}d{-2}d{-2}d{-2}}
    \toprule
    &
    \multicolumn{1}{l}{Mechanism} & \multicolumn{1}{c}{Estimator}  & \multicolumn{1}{P{5em}}{Bias} & \multicolumn{1}{P{5em}}{Std. Err.} & \multicolumn{1}{P{5em}}{RMSE} & \multicolumn{1}{P{5em}}{95\% CI \newline Coverage} \\
    \midrule
    GM 1 & Unconfounded
 & Non-spatial       & 0.00 & 0.14 & 0.14 & 0.62 \\
 & & Spatial       & 0.01 & 0.12 & 0.12 & 0.93 \\
 & & Spatial-RS    & 0.02 & 0.12 & 0.12 & 0.93 \\
 & & Affine    & 0.03 & 0.27 & 0.27 & 0.98 \\
 & & Affine-RS & 0.03 & 0.29 & 0.29 & 0.96 \\
            \addlinespace
            \addlinespace
    GM 2 & Large-scale 
            & Non-spatial       & 0.37 & 0.15 & 0.40 & 0.02 \\
 & confounder & Spatial       & 0.37 & 0.12 & 0.39 & 0.05 \\
 & & Spatial-RS    & 0.37 & 0.12 & 0.39 & 0.04 \\
 & & Affine    & 0.24 & 0.34 & 0.41 & 0.94 \\
 & & Affine-RS & 0.14 & 0.30 & 0.33 & 0.95 \\
            \addlinespace
            \addlinespace
    GM 3 & Large-scale 
            & Non-spatial       & 0.33 & 0.14 & 0.36 & 0.03 \\
 & exposure & Spatial       & 0.35 & 0.11 & 0.36 & 0.03 \\
 & & Spatial-RS    & 0.34 & 0.10 & 0.35 & 0.04 \\
 & & Affine    & 0.33 & 0.25 & 0.42 & 0.91 \\
 & & Affine-RS & 0.23 & 0.36 & 0.42 & 0.83 \\
            \addlinespace
            \addlinespace
    GM 4 & Same scales
            & Non-spatial       & 0.34 & 0.15 & 0.38 & 0.03 \\
 & & Spatial       & 0.35 & 0.11 & 0.37 & 0.05 \\
 & & Spatial-RS    & 0.35 & 0.11 & 0.37 & 0.06 \\
 & & Affine    & 0.29 & 0.27 & 0.39 & 0.93 \\
 & & Affine-RS & 0.18 & 0.32 & 0.36 & 0.89 \\
            \addlinespace
            \addlinespace
    GM 5 & Non-constant
            & Non-spatial       & 0.35 & 0.11 & 0.37 & 0.00 \\
 & conditional & Spatial       & 0.37 & 0.06 & 0.37 & 0.00 \\
 & correlation & Spatial-RS    & 0.37 & 0.06 & 0.37 & 0.00 \\
 & & Affine    & 0.38 & 0.27 & 0.47 & 0.57 \\
 & & Affine-RS & 0.22 & 0.20 & 0.30 & 0.75 \\
\addlinespace
\addlinespace
GM 6 & Non-normal
& Non-spatial       & 0.24 & 0.08 & 0.26 & 0.06 \\
 & joint & Spatial       & 0.24 & 0.08 & 0.26 & 0.14 \\
 & distribution & Spatial-RS    & 0.24 & 0.08 & 0.26 & 0.15 \\
 & & Affine    & 0.18 & 0.18 & 0.26 & 0.94 \\
 & & Affine-RS & 0.08 & 0.17 & 0.19 & 0.96 \\
    \bottomrule
  \end{tabularx}
\end{table}

Under each GM we generate 500 data sets of size $n = 300$ and fit the non-spatial, spatial, and affine estimators.
When assumptions on the forms of variances are required, we assume CAR structures, and for the affine estimator we assume that $\mat{Q}$ is of the Markov form \cref{eq:q-spec}.
Linear predictor models for the exposure and outcome are correctly specified.
For the spatial and affine estimators, we evaluate variations with and without the restriction that $\phi_Z < \phi_U$ discussed in \cref{subsec:scale_restriction}, with the restricted estimators denoted by (-RS).
For regression coefficients we used Gaussian priors with mean 0 and standard deviation 10.
For the Spatial-RS, Affine, and Affine-RS estimators we used the regularization prior discussed in \cref{sec:regularization}.
This prior couples the distributions  of $\vec{U}$ and $\vec{Z}$ which is not usually a feature in spatial analyses, and so is not used for the Non-spatial and Spatial estimators.
Instead, for the Non-spatial and unrestricted Spatial estimators flat priors were used for all variance parameters.
In all cases the precision matrix $\mat{P}$ was restricted to be positive definite.
Due to the computational cost of computing the condition number of the precision matrix $\mat{P}$ when evaluating the prior for variance parameters, we used an approximation to the condition number acquired by the same model and parameter values on a four-location, one-dimensional ring instead of the 300-location line (the first and last locations are also neighbors).
Posterior samples were drawn using 10,000 Gibbs sampler iterations after 1000 burn-in iterations.

Table~\ref{tab:sim} displays the simulation results in terms of bias, standard deviation, and root mean squared error (RMSE) of posterior means across data sets, and empirical coverage of 95\% equal-tail credible intervals.
We first note that there is minimal difference between the Spatial and Spatial-RS estimators, even when the spatial scale assumption is violated in the large-scale exposure scenario.
Additionally, the Spatial and Spatial-RS estimators have similar biases to the Non-spatial estimator.
For the unconfounded GM 1, all estimators are unbiased, all spatial estimators have approximately correct confidence interval coverage, but both affine estimators have much larger standard errors and therefore RMSE.
As expected due to misspecification of the dependence structure, the posterior distributions from the Non-spatial models are too concentrated and therefore the credible intervals are anti-conservative.
In all GMs with confounding (2--6), the Affine-RS estimator mitigates bias relative to the Non-spatial estimator, whereas the Spatial and Spatial-RS estimators do not.
The unrestricted Affine estimator generally mitigates bias to a lesser extent, especially in the large-scale exposure, same-scales, and non-constant conditional correlation scenarios.
When the restricted scale assumption is correct, the Affine-RS estimator has a smaller standard error than the unrestricted Affine estimator.
Additionally, the Affine-RS estimator has a smaller RMSE than all other estimators except in the case of a large-scale exposure where its scale restriction is false.
In the presence of unobserved confounding, both affine estimators have credible interval coverage rates far superior to the other estimators.
Both have approximately nominal coverage rates in the independent, large-scale confounder, and non-normal joint distribution scenarios.
Analogous simulation results for maximum a posteriori (restricted likelihood multiplied by a prior) estimation in the Gaussian outcome are available in Appendix~\ref{app:sim-map}.

Although our model specifies that $\vec{U}$ is independent of $\vec X$, we conducted a smaller simulation for a scenario in which the large-scale unmeasured confounder in GM 2 is correlated with the measured confounder, by specifying that $\E[\vec U | \vec X] = \vec X$.
We simulated 100 data sets and fit the same models (i.e., without a mean model for $\vec{U}$) as in the previous simulations.
The results in estimating $\beta_z$ were similar to those under GM 2 (Appendix~\ref{app:sim-xucorr}) but the estimates of $\beta_x$ were biased upward (not shown). This is expected since the part of $U$ that is correlated with $X$ is captured and adjusted for with the inclusion of $X$ in the outcome model, and the affine estimator targets the component of $U$ that is orthogonal to $X$.

\subsection{Nonlinear effect}

The bias-variance trade-off observed between the spatial and affine estimators in the linear case was also observed for a non-linear effect.
We generated 500 data sets of size 300 where $\vec U, \vec Z$ are generated from \cref{eq:model-joint} with $(\tau_U, \phi_U, \tau_Z, \phi_Z, \rho) = (1, 0.5, 1, 0.2, 0.3)$, and $Y$ is a Poisson variable with log link and linear predictor in the form of the right-hand side of \cref{eq:dgm-nonlinear}, with $(\beta_0, \beta_x) = (0, 0)$ and $f(z) = 2 / (1 + e^{-6z}) - 1$.
Therefore, the true effect curve is an anti-symmetric sigmoid with asymptotes $-1$ and $1$.
We fit the restricted-scale semiparametric spatial and constrained affine estimators using a penalized cubic spline model with a radial basis and used the same priors as in the linear simulations.
Inference was based on 10,000 posterior draws after 5,000 burn-in iterations.

Figure~\ref{fig:sim-nonlinear} displays a graphical summary of the simulation results.
For the most part, both estimators capture the general shape of the mean response curve.
However, the spatial estimator is biased toward more extreme estimates as the exposure deviates from 0, and this bias is mitigated by the constrained affine estimator.
On the other hand, the constrained affine estimator exhibits substantially greater variability, especially for exposure ranges with limited available data (away from an exposure value of 0).
There also appears to be an asymmetry in that when the true log PAERC is negative the bias of the spatial estimator is more pronounced and the constrained affine estimator is more effective at mitigating bias than when the log PAERC is positive.
This asymmetry is likely due to the non-linear relationship between the model's linear predictor and the expected outcome according to the Poisson likelihood.

\begin{figure}[!t]
	\centering
	\includegraphics[scale=0.8]{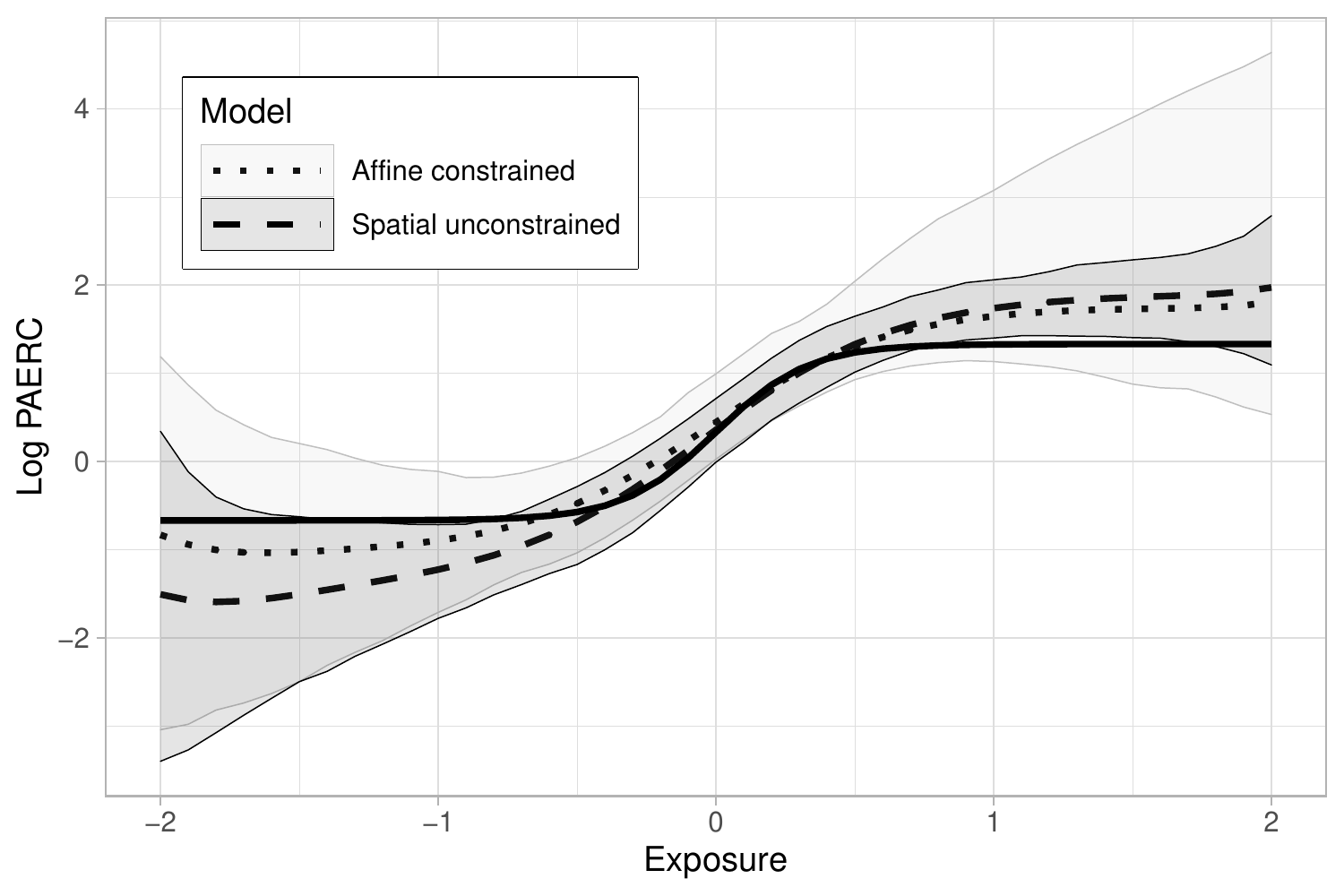}
	\caption{Nonlinear effect simulation results. Mean and pointwise 95\% sampling intervals of posterior mean log population average exposure-response curve from 500 data sets of size $n = 300$. True log PAERC indicated by solid black curve.}
	\label{fig:sim-nonlinear}
\end{figure}

\section{Estimating the county-level effects of poor supermarket availability on CVD mortality}
\label{sec:example}

Here, we use the affine estimator in order to estimate the county-level effect of poor supermarket availability on CVD mortality. We consider the affine estimator with and without the spatial scale restriction, and we also consider the spatial random effect estimator with the spatial scale restriction, an extension  to the results shown in \cref{sec:data}. In the outcome model, we include the exposure on the percentage point scale to aid in interpretation of its coefficient. However, we replace $\vec{Z}$ with $\log \vec{Z}$ in the joint model \cref{eq:model-joint} to better satisfy the condition of joint normality.
For the Spatial-RS model we restrict $\rho$ to be zero, and for the Spatial-RS and Affine-RS models we apply the constraint $\phi_U \geq \phi_Z$. Note that the prior distribution for the Spatial-RS model differs substantially from that of the unconstrained Spatial model.
In all three models we use a similar approximation to the condition number prior of \cref{sec:regularization} that was used in the simulation study: rather than computing the condition number on the full joint precision matrix $\mat{P}$, we use the analog of $\mat{P}$ derived from a $4 \times 4$ regular grid.
Posterior distributions from all models were simulated by retaining 10,000 Gibbs sampler iterations after 1,000 burn-in iterations.
The affine-RS model took approximately 1.5 hours to fit on a laptop for a sample size of $n = \numcounties$.

In \cref{sec:example-assumptions} we examine the assumptions underlying the affine estimator, and in \cref{sec:example-results} we report summaries of the posterior distribution of the causal effect estimates.

\subsection{Examining the plausibility of the assumptions in the context of our study}
\label{sec:example-assumptions}

A number of assumptions, previously presented in Table~\ref{tab:assumptions}, are necessary to identify the causal effect of interest in the presence of unmeasured spatial confounders.

Temporal ordering and SUTVA are standard assumptions in causal inference and are necessary to define our causal effect.
The temporal ordering of the exposure and outcome is immediate satisfied since the exposure data were compiled from 2000 and 2006 data sets, while the mortality outcome data were compiled from 2007 reports.
SUTVA is expected to hold, at least approximately, since we can assume that the county-level effect of poor supermarket access on CVD mortality is due to individual-level causal effects, and the home address county listed on death certificates in 2007 corresponds well to the deceased person's county of residence from 2000 to 2006.

Outcome additivity and the appropriateness of the joint normality and CAR assumptions for $(\vec{U}, \vec{Z})|\mat{X}_{(-z)}$ are modeling assumptions that may be at least partially addressed via standard diagnostics.
Maps of Pearson residuals based on posterior mean parameters and plots of those residuals versus linear predictors indicated no visually apparent residual spatial correlation or non-linearity in either the log exposure or outcome models in the affine-RS approach.
A scatterplot of the joint distribution of the residual log exposure after adjusting for covariates versus mean imputed confounder $\vec{U}$ appeared Gaussian.

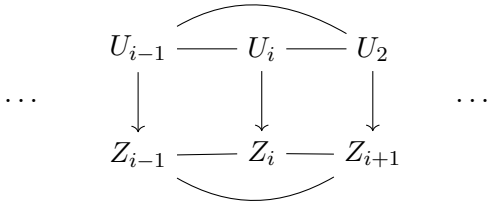
\begin{figure}[!b]
	\centering
	\begin{tikzpicture}
	\node[text centered] (U1) {$U_{i-1}$};
	\node[right = 0.8 of U1, text centered] (Ui) {$U_i$};
	\node[right = 0.8 of Ui, text centered] (U2) {$U_2$};
	
	\node[left = 0.4 of U1] (useless1) {};
	\node[left = 0.2 of useless1] (useless11) {};
	\node[right = 0.4 of U2] (useless2) {};
	\node[right = 0.2 of useless2] (useless21) {};
	\node[below = 0.4 of useless11, text centered] (dots1) {$\dots$};
	\node[below = 0.4 of useless21, text centered] (dots2) {$\dots$};

	\node[below = 0.8 of U1, text centered] (Z1) {$Z_{i-1}$};
	\node[below = 0.8 of Ui, text centered] (Zi) {$Z_i$};
	\node[below = 0.8 of U2, text centered] (Z2) {$Z_{i + 1}$};
	
	\draw[->] (U1) -- (Z1);
	\draw[->] (U2) -- (Z2);
	\draw[->] (Ui) -- (Zi);
	\draw[-] (U1) -- (Ui);
	\draw[-] (U2) -- (Ui);
	\draw[-] (U1) to [out=30, in=150] (U2);
	\draw[-] (Z1) -- (Zi);
	\draw[-] (Z2) -- (Zi);
	\draw[-] (Z1) to [out=-30, in=-150] (Z2);
	\end{tikzpicture}
\caption{Graph representation of the cross-Markov property $p(Z_i | \vec{Z}_{-i}, \vec{U}) = p(Z_i | \vec{Z}_{-i}, U_i)$.}
\label{fig:cross_markov_graph}
\end{figure}

The cross-Markov and constant conditional correlation assumptions are assumptions about the relationship between unobserved confounders and the exposure, to which standard diagnostics are not applicable. The plausibility of these assumptions depends heavily on the application and hypothesized confounders. For example, consider an unmeasured variable representing cultural preference toward purchasing prepared food from restaurants versus cooking at home. This variable might act as a confounder in our study, since an increase in such a preference could both depress the demand for and availability of supermarkets, and might drive food choices independent of supermarket availability. The cross-Markov property for this variable (an illustration of which is shown in Figure~\ref{fig:cross_markov_graph}) allows for such cultural preferences to have complex dependence structures across locations. However, grocery store accessibility within a county $i$, $Z_i$, is only allowed to depend on such cultural preferences only through its value within the county (conditional on the grocery accessibility in all other locations). This assumption is reasonable for large counties where the food culture in neighboring counties does not directly influence the demand for (and eventual availability of) supermarkets except through its correlation with the food culture within the county itself.
Given the cross-Markov property, the constant conditional correlation assumption implies that the strength of the relationship between this aspect of food culture and supermarket availability is constant (conditional on other observed variables in $\mat{X}_{(-z)}$).
This can be seen by noting that
\begin{equation}
\label{eq:constant-conditional-correlation-regression}
    \E[Z_i | \vec{Z}_{-i}, U_i] = \vec{x}_{(-z), i}^\trans \vec{\gamma} + \frac{\phi_Z}{|\partial_i|} \sum_{j \in \partial_i} (Z_j - \vec{x}_{(-z),j}^\trans \vec{\gamma}) + \rho \sqrt{\frac{\tau_U}{\tau_Z}} U_i.
\end{equation}
Thus if both the cross-Markov and constant conditional correlation hold, $\sqrt{\tau_U / \tau_Z} U_i$ above behaves like an additive predictor of $Z_i$ with regression coefficient $\rho$.
A plot of residuals from the regression implied by \cref{eq:constant-conditional-correlation-regression} did not indicate any departures from linearity.

An assumption on the spatial scales of unobserved confounders is also critical for reliable identification of causal effects.
We can evaluate this assumption within the model by examining the joint posterior distribution of $(\phi_U, \phi_Z)$.
In our case, the posterior distribution of $\phi_U - \phi_Z$ from the unconstrained Affine model was approximately Gaussian with mean $0.091$, standard deviation $0.024$, and first percentile $0.030$.
The posterior from the Affine-RS model was similar, indicating that the assumption is satisfied within the scope of the model.

\subsection{Estimating the effect of poor supermarket availability on CVD mortality}
\label{sec:example-results}

\begin{figure}[!b]
    \centering
    \includegraphics[scale = 0.8]{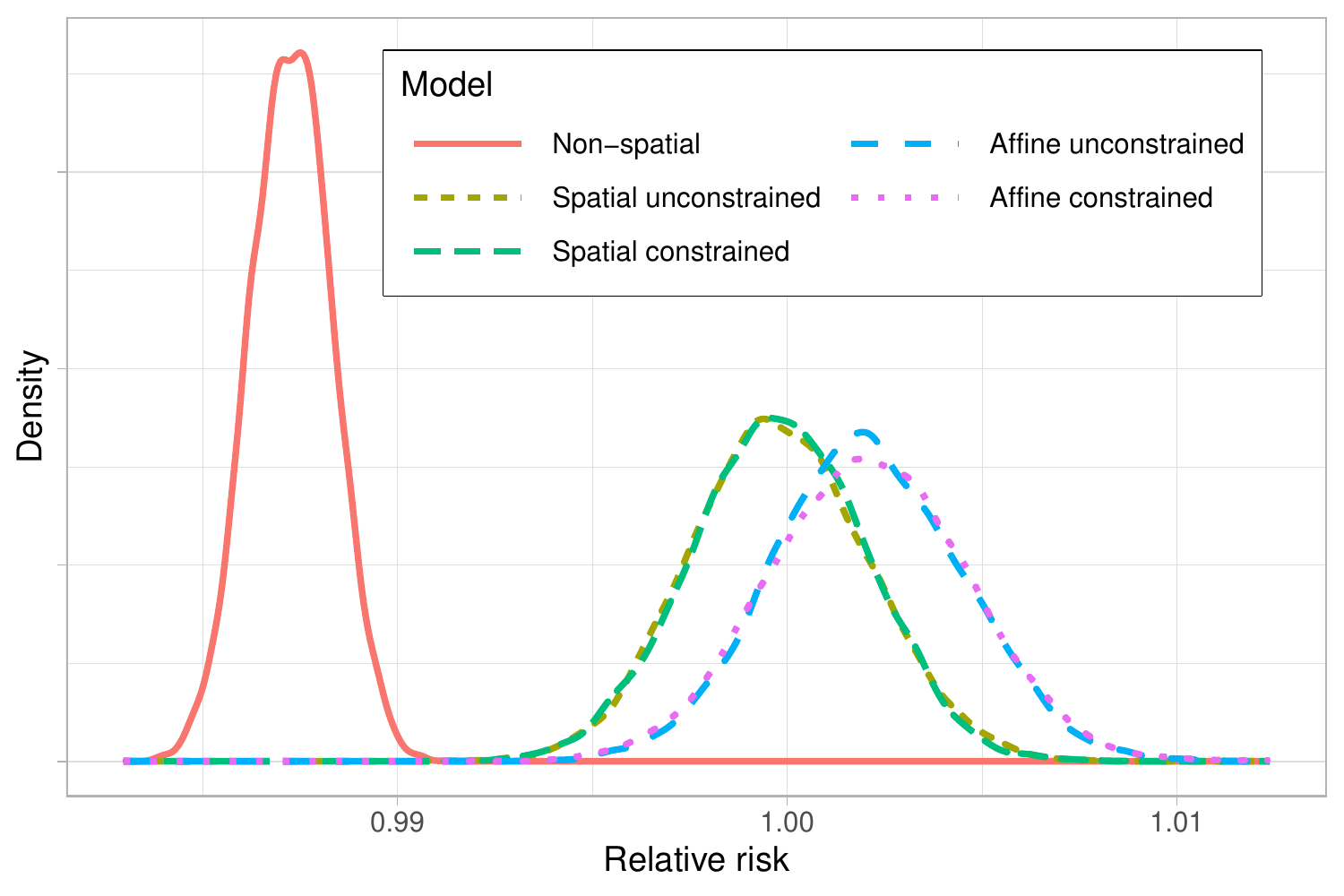}
    \caption{Posterior densities of county-level relative risk of CVD mortality in the 65+ age range due to a 1 percentage point increase in the proportion of households with no vehicle and more than 1 mile from a supermarket or large grocery store.}
    \label{fig:posterior_estimates}
\end{figure}

Figure~\ref{fig:posterior_estimates} displays the posterior distribution of the exponentiated exposure coefficient from the Non-spatial, Spatial, Spatial-RS, Affine, and Affine-RS models. As noted in \cref{subsec:other_outcomes}, this quantity can be interpreted as the relative expected risk of CVD mortality among the population in the 65+ age range due to a one percentage point increase in poor supermarket access in a randomly-chosen county.
The agreement of the posterior distribution of $\phi_U-\phi_Z$ in the spatially restricted and unrestricted models implies that the posterior densities from the Spatial and Spatial-RS models closely coincide (both posterior geometric means $0.999$, 95\% CIs $(0.988, 1.011)$), as do those from the affine and affine-RS models (posterior geometric mean $1.005$, 95\% CI $(0.993, 1.018)$ and $1.005$ $(0.992, 1.018)$, respectively).
As we saw in \cref{sec:data}, the non-spatial model reports a definitive, protective effect of poor supermarket availability (posterior mean $0.968$, 95\% CI $(0.962, 0.973)$). In contrast, all four spatial models return a smaller and potentially null effect. The Spatial and Spatial-RS models return approximately null effects with posterior probability of a relative risk smaller than 1 equal to 0.54 and 0.53, respectively. In contrast, the Affine and Affine-RS models estimate that poor supermarket access might have a harmful effect on CVD mortality, with posterior probability of a relative risk greater than 1 equal to $0.8$ and $0.79$, respectively.

The change in the point estimate between the spatial random effect models and the affine models is largely attributable to the posterior distribution of $\rho$. This distribution is skewed slightly left, with a posterior mean of $-0.020$ and 95\% CI $(-0.040, -0.001)$, indicating confounding by the latent variable $\vec{U}$, even though the posteriors of $\exp(\beta_Z)$ from the Spatial(-RS) and Affine(-RS) models overlap. Among posterior draws from the Affine-RS model, the correlation between $\rho$ and $\exp(\beta_Z)$ is $-0.448$, indicating that unobserved spatial confounding is likely to bias the spatial estimator downwards.

\begin{figure}[!t]
    \centering
    \includegraphics[scale = 0.8]{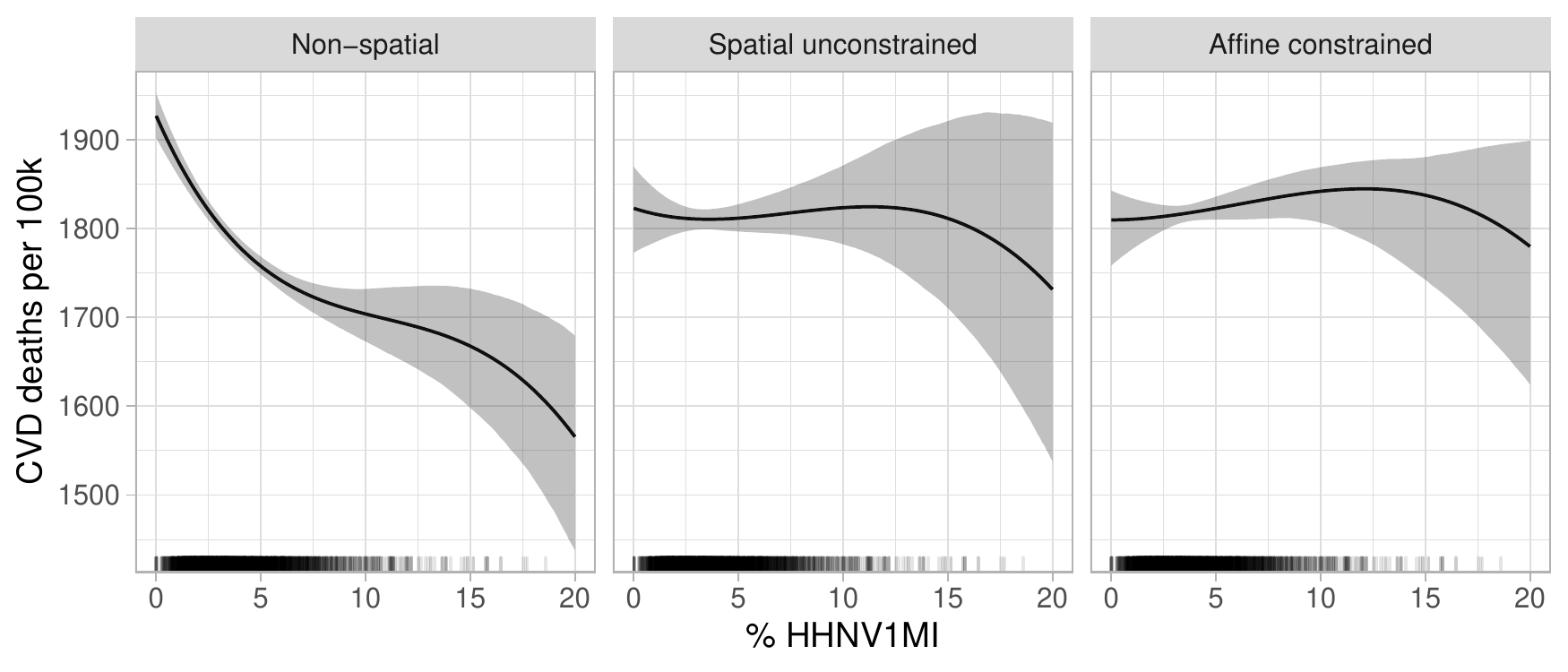}
    \caption{Posterior geometric mean and pointwise 95\% credible bounds for the population average exposure-response curve.
    Rugs indicate observed exposures.
    Results from the constrained spatial and unconstrained affine models are similar to those from the unconstrained spatial and constrained affine models, respectively.}
    \label{fig:posterior_bands}
\end{figure}

We also fit semiparametric versions of each model following the approach in \cref{sec:semiparametric}.
Due to the non-uniform distribution of exposures, we used a truncated cubic basis for the penalized spline.
We retained 10,000 MCMC iterations after 5,000 burn-in iterations.
The affine-RS model took approximately 2 hours to fit.
Results are shown in \cref{fig:posterior_bands} and are in broad agreement with those from the generalized linear models.
The non-spatial model indicates a protective effect of increasing the proportion of households with poor supermarket access on CVD mortality across the observed exposure range. In contrast, the spatial and affine models indicate null and weakly harmful effects, respectively.
In all models, the posterior geometric mean of PAERC indicates a protective effect at extremely high levels of poor supermarket access (above 15\%), though this is likely an artifact of extrapolation of the spline basis since data are sparse in that range, and the credible bands widen dramatically.

\section{Discussion}
\label{sec:discussion}

By positing a joint model for the exposure and unmeasured spatially-correlated confounders, we were able to extend commonly-used spatial data analysis tools to mitigate bias due to such confounders.
In contrast to existing approaches that indicate a potentially protective effect of poor supermarket access on cardiovascular disease deaths among the elderly at the county level,  the proposed approach leads to estimated effects and inferences that are more plausible and in line with subject-matter knowledge, which indicate that poor supermarket access is likely to be harmful on cardiovascular health.
Our study also contributes to the literature on racial and socio-economic disparities.
Much of the recent research on food access and health makes reference to \emph{food deserts}, census tracts with low income and poor food access \citep{usda2012food}.
At the neighborhood level, supermarkets are more prevalent in high-income neighborhoods and in neighborhoods with predominantly white residents compared to those whose residents have lower incomes or are black \citep{morland2002neighborhood}.

At the same time, there exist limitations to our study that extend outside the scope of unobserved confounders.
County-level effects of supermarket access can be extremely relevant for policy making and local planning, but they cannot be directly translated to individual-level effects that may also be of substantial scientific interest.
Additionally, although outcomes were measured in the 65+ age range, the exposure and covariates were generally measured across all age ranges.
Finally, it is unlikely that effects of poor supermarket access would manifest in differences in mortality in the short-term (e.g., the following year).
Instead, effects are more likely to be cumulative over timespans on the order of many years.

From a statistical perspective, our approach to mitigating bias from unobserved spatial confounders is rooted in the causal inference framework and exploits spatial statistics tools that can be used to \emph{directly adjust} for structured unmeasured confounding.
We hope that our work contributes to the growing bridge between spatial data analysis and causal inference. The methodology is intended to be amenable to researchers accustomed to the usual spatial statistics literature, but could potentially be useful in situations calling for mixed models more generally, with appropriate modifications.
For that reason, the proposed approach may be widely applicable to scenarios in spatial statistics, time-series analyses, and spatio-temporal settings.

One of the key assumptions in drawing causal conclusions is that of positivity. In the presence of spatial confounders, and for positivity to hold, the spatial scale of the confounder must be larger than that of the treatment. Checking the robustness of estimated effects using the affine estimator with and without the spatial scale restriction can provide intuition on the plausibility of positivity due to unmeasured spatial confounders. This should, however, be employed with care, if spatial mediating variables are expected to exist.

A natural question that arises is whether, and at what occasions, methodology that directly adjusts for unmeasured confounding should be preferred over classical sensitivity analysis. We find that, in settings with structured data, such as spatial and temporal data, unmeasured confounders will also be expected to be structured. In those situations, we find that methodology that directly adjusts for these variables can provide more accurate effect estimates, and strengthen the causal claim of scientific conclusions. An interesting line of future research could extend sensitivity analysis to settings where methods like the one presented here alleviate bias from structured confounders, and sensitivity of results in the presence of unstructured covariates is evaluated.


Based on the structural model \cref{eq:dgm-general}, we discussed a set of assumptions that allowed for identification of the bias correction term using only the observed data, while providing a simple expression for the expected value of the confounder conditional on the exposure: joint normality of the spatial confounder and exposure of interest, the cross-Markov and constant conditional correlation assumptions.
There are several ways in which these assumptions could be relaxed.
First, joint normality may be relaxed by assuming joint normality of an underlying random effect process, with the realizations of both the exposure and covariates arising from other distributions.
For example, a latent probit model could be used to assess the effects of binary exposures.
Furthermore, it may be possible to use more flexible, nonparametric alternatives to the multivariate normal form of the random effect structure, such as spatial Dirichlet processes, both in point-referenced \citep{gelfand2005bayesian} and areal \citep{kottas2008modeling} data.
However, questions of identifiability in less parametric settings will likely be difficult to answer.
The cross-Markov assumption may be relaxed by, for example, treating the joint distribution of the exposure and confounder as a multivariate conditional autoregressive process \citep{gelfand2003proper} and expanding the allowable neighbor relations. The constant conditional correlation assumption may be relaxed by allowing the conditional correlation to vary smoothly in space or based on the number of neighboring locations. In any case, formal treatment of general requirements on the (a) spatial dependence structure (such as the ring graph in \cref{subsec:identifiability_ring}), including (b) the cross-Markov structure specifying the conditional dependence between the exposure and the unmeasured confounder, and (c) the conditional correlation between the unmeasured confounder and exposure, allowing for identification of the causal effect in the presence of unmeasured confounding is an interesting topic of future research.
On a more technical note, priors that place a positive probability on $\tau_U^{-1} = 0$ could be allowed.

Furthermore, even though the structural model in \cref{eq:dgm-general} allows for arbitrary interactions among the exposure and measured covariates (allowing, for example, for treatment effect heterogeneity), it assumes that there are no interactions between measured and unmeasured covariates. The extent to which this assumption can be relaxed is an interesting line of future work, especially in the light of recent results in causal inference for unmeasured confounding \citep{damour2019multiple,ogburn2019comment}. Therefore, an interesting question that arises is: Can we harvest the spatial information of the data to mitigate bias from unmeasured confounders without imposing structural assumptions?

We consider this to be the most pressing topic of future study: what are the general conditions under which causal effects are identifiable in the presence of unmeasured spatial confounding, and to what extent is bias mitigation robust to model misspecification?
Within the context of each study, researchers would need to verify whether the set of reasonable (within their context) assumptions suffices for identification of the bias correction term, while we have illustrated situations in which the causal effect is and is not identifiable.

\section*{Acknowledgements}
We would like to thank Dr. Lucas R.F. Henneman for his contributions to the original version of the manuscript. We would also like to thank Dr. Jim Hodges for his valuable thoughts and input during various stages of this paper. The work presented here was partially supported by grants UL1TR002733 and R01MH118927 from the National Institutes of Health.

\clearpage

\section*{Appendices}
\allowdisplaybreaks

	\appendix
	\renewcommand{\theequation}{\thesection.\arabic{equation}}
	\renewcommand{\thefigure}{\thesection.\arabic{figure}}

	\numberwithin{figure}{section}
	\numberwithin{table}{section}
	\addcontentsline{toc}{section}{Appendices}
	\renewcommand{\thesection}{\Alph{section}}
	\numberwithin{equation}{section}
	\setcounter{figure}{0}    
	\setcounter{table}{0}

	\section{Additional information on the creation of the data set}
	\label{app_sec:data}
	
	We compile a county-level data set including supermarket availability, cardiovascular mortality, demographic and behavioral information for 3,093 out of 3,109 counties or county equivalents in the continental United States.
	
	\paragraph{Supermarket availability}
	Store data were acquired from the United States Department of Agriculture \citep{ver2009access}. For each county (or county equivalent), supermarket availability is defined as the proportion of housing units that are more than 1 mile from the nearest supermarket or large grocery store and do not have a car, obtained from the Food Environment Atlas, June 2012 release \citep{usda2012food}.
	The data are produced from the United States Department of Agriculture and Economic Research Service June 2009 Report to Congress on access to affordable and nutritious food \citep{ver2009access}.
	The data set was compiled from a 2006 directory of supermarkets and large grocery stores within  the continental United States and 2000 Census data.
	A supermarket or large grocery store was defined as stores that had at least 2 million USD in annual sales and contained all the major food departments found in a traditional  supermarket.

	\paragraph{Cardiovascular mortality data}
	County-level data on cardiovascular mortality data were obtained from the United States Centers for Disease Control and Prevention (CDC) WONDER query system \citep{cdc2018underlying}.
	We obtained county-level population and CVD death counts, both in the 65+ age range, during 2007.
	A CVD death was defined as a death registered with ICD-10 codes I00--I99 (diseases of the circulatory system) as the underlying cause of death.
	Due to privacy constraints, county death counts below 10 are censored by CDC WONDER.
	Death counts were internally standardized to the total CVD death rate in the 65+ population among all United States counties (including those outside of the continental US), which was not influenced by censoring.

	\paragraph{Covariate information}
	Zip code level demographic information was acquired from the 2000 Census. The median household value was set to N/A for 448 (out of 40,646) zip codes for which the Census measurement was equal to zero, but the zip codes population size was positive. We identified the county (or county equivalent) to which a zip code belongs, and set county-level demographic measurements equal to the mean of the zip code-level measurements within the county, excluding missing values.
	Except for the Census information, we also acquired smoking rates from small-area estimates acquired using the CDC Behavioral Risk Factor Surveillance System data \citep{dwyer2014cigarette}.
	All covariates along with short description and descriptive statistics are shown in \cref{app_tab:table1}.

	\paragraph{}
	Sixteen out of 3,109 counties are excluded from the analysis, leaving 3,093.
	Broomfield County, Colorado, having been created in 2001, is missing from the supermarket availability data.
	Loving County, Texas (2000 population 67) does not have a median home value recorded in the census data.
	Covington city, Emporia city, and Lexington city, Virginia are independent cities (not part of the surrounding counties) with populations on the order of 5,000 and do not appear in the census data set.
	Eleven contiguous counties in southwestern Georgia are missing demographic information in the census data set for an unknown reason.
	
	\paragraph{Links to data sources and sets}
	\begin{enumerate}
		\item Mortality data can be acquired from the CDC WONDER system at \url{https://wonder.cdc.gov} by specifying the following query:
		
		\begin{table}[H]
			\centering
			\small
			\begin{tabular}{ll}
				\toprule
				Parameter & Value \\
				\midrule
				Dataset & Underlying Cause of Death, 1999--20017 \\
				ICD-10 Codes & I00--I99 (Diseases of the circulatory system) \\
				Ten-Year Age Groups & 65--74 years; 75--84 years; 85+ years \\
				Year/Month & 2007 \\
				Group By & County \\
				Show Totals & True \\
				Show Zero Values & True \\
				Show Suppressed & True \\
				Calculate Rates Per & $100,000$ \\
				Rate Options & Default intercensal populations for years 2001--2009 (except Infant Age Groups) \\
				\bottomrule
			\end{tabular}
			\caption{CDC WONDER Online Database query, January 9, 2020.}
			\label{tab:my_label}
		\end{table}
		
		\item Food access data can be aquired from USDA Food Environment Atlas at \url{https://www.ers.usda.gov/data-products/food-environment-atlas/}. We used the June 2012 version.

		\item Complete datasets, including remaining county information are available at in the online data supplement.
	\end{enumerate}

	{ \small
		\renewcommand*{\arraystretch}{1.4}
		\begin{longtable}{llccc}
			\caption{Available covariate information. \\  $^\dagger$Log-transformed in analysis} \\
			\label{app_tab:table1}
			Name & Description & Q1 & Median & Q3 \\
			\hline
			PctUrban & Percentage of population in urban areas & $4.2$ & $16$ & $37$\\
			PctWhite & Percentage of white population & $79$ & $93$ & $97$ \\
			PctBlack & Percentage of black population & $0.21$ & $1.1$ & $8.4$ \\
			PctHisp & Percentage of hispanic population & $0.73$ & $1.5$ & $4.2$ \\
			PctHighSchool & Percentage of population that attended high school & $31$ & $36$ & $41$ \\
			MedianHHInc$^\dagger$ & Median household income ($\times 1000$ USD) & $30$ & $34$ & $39$ \\
			PctPoor  &  Percentage of impoverished population & $9.5$ & $13$ & $18$ \\
			PctFemale & Percentage of female population & $49$ & $50$ & $51$ \\
			PctMovedIn5 &  Percentage of population having lived in area for less than 5 years & $36$ & $40$ & $46$ \\
			PctOccupied & Percentage of housing units that are occupied & $80$ & $87$ & $91$ \\
			MedianHValue$^\dagger$  & Median value of owner occupied housing ($\times 1000$ USD) & $54$ & $72$ & $95$ \\
			PopPerSQM$^\dagger$ & Population per square mile & $20$ & $58$ & $240$ \\
			TotPop$^\dagger$ & Total county population ($\times 1000$) & $12$ & $31$ & $94$ \\
			smokerate & Percentage of population that smokes & $23$ & $26$ & $29$ \\
			\hline
		\end{longtable}
	}

	\section{Identifiability of causal estimands}
	\label{app_sec:causal}
	
	Here, we review causal identifiability of $\mu(z) = E[Y(z)]$, the expected value of the potential outcome for a fixed treatment $z$ over some population, for a binary treatment $z \in \mathcal{Z} = \{0, 1\}$. Note that we do not observe $Y(z)$ for everyone and $\mu(z)$ is an expectation including many unobserved quantities.
	
	A causal estimand is referred to as identifiable under a set of assumptions if it can be written as a function of observables. For $\mu(z)$, on set of assumptions is (1) consistency of potential outcomes, (2) positivity and (3) no unmeasured confounding, since
	\begin{align*}
		\mu(z) = E[Y(z)] = E\{ E[Y(z) | W] \} = E\{ E[Y(z) | Z = z, W]  \} = E\{ E[Y | Z = z, W]\},
	\end{align*}
	where the third equation holds because of the no unmeasured confounding assumption, and the fourth equation holds because of the causal consistency assumption. So $\mu(z)$ is written as a function of the \textit{observed} outcomes among those with $Z = z$, and for that reason it is identifiable.
	
	
	\section{Implications of dependence assumptions on the joint precision matrix}
	\label{app:partial-corr}
	
	Here we show how the cross-Markov and constant conditional correlation assumptions determine the matrix $\mat{Q}$.
	We suppress dependence on $\mat{X}_{(-z)}$ for simplicity.
	Let
	\begin{equation}
		\mat{P}  = \begin{pmatrix}
			\mat{G} & \mat{Q} \\
			\mat{Q}^\trans & \mat{H}
		\end{pmatrix}
	\end{equation}
	be the joint precision matrix of $(\vec{U}, \vec{Z})$.
	Then, since the precision matrix of coordinates of a Gaussian variable conditional on other coordinates is obtained by dropping the rows and columns corresponding to those other coordinates,
	\begin{equation}
		\Var\left[\begin{pmatrix}U_i \\ Z_j \end{pmatrix} \middle| \vec{U}_{-i}, \vec{Z}_{-j}\right]
		= \begin{pmatrix}
			p_{u_i u_i} & p_{u_i z_j} \\
			p_{z_j u_i} & p_{z_j z_j}
		\end{pmatrix}^{-1}
		= \frac{1}{p_{u_i u_i}p_{z_j z_j} - p_{u_i z_j}p_{z_j u_i}}
		\begin{pmatrix}
			p_{z_j z_j} & -p_{z_j u_i} \\
			-p_{u_i z_j} & p_{u_i u_i}
		\end{pmatrix},
	\end{equation}
	Where $i$ is may or may not be equal to $j$, $p_{u_i z_j}$ is the element of $\mat{P}$ in the row corresponding to $u_i$ and column corresponding to $z_j$, and similar for other coordinates.
	In particular,
	\begin{equation}
		\begin{aligned}
			\Var[U_i | \vec{U}_{-i}, \vec{Z}_{-j}]
			&= \frac{p_{z_j z_j}}{p_{u_i u_i}p_{z_j z_j} - p_{u_i z_j}p_{z_j u_i}}, \\
			\Var[Z_j | \vec{U}_{-i}, \vec{Z}_{(-j)}]
			&= \frac{p_{u_i u_i}}{p_{u_i u_i}p_{z_j z_j} - p_{u_i z_j}p_{z_j u_i}}, \\
			\Cov[U_i, Z_j | \vec{U}_{-i}, \vec{Z}_{-j}]
			&= \frac{-p_{u_i z_j}}{p_{u_i u_i}p_{z_j z_j} - p_{u_i z_j}p_{z_j u_i}}, \\
			\Cor[U_i, Z_j | \vec{U}_{-i}, \vec{Z}_{-j}]
			&= \frac{-p_{u_i z_j}}{\sqrt{p_{u_i u_i} p_{z_j z_j}}}.
		\end{aligned}
	\end{equation}
	If $i \neq j$, the cross-Markov assumption $p(Z_i | \vec{Z}_{-i}, \vec{U}) = p(Z_i | \vec{Z}_{-i}, U_i)$ implies that $\Cor[U_i, Z_j | \vec{U}_{-i}, \vec{Z}_{-j}] = 0$, which in turn implies that $p_{u_i, z_j}$, an arbitrary off-diagonal element of $\mat{Q}$, is zero.
	Additionally, the constant conditional correlation assumption $\Cor(U_i, Z_i | \vec{U}_{-i}, \vec{Z}_{-i}, \mat{X}_{(-z)}) = \rho$ implies that $p_{u_i z_i} = -\rho \sqrt{p_{u_i u_i} p_{z_i z_i}}$.

	\section{Restricted likelihood}
	\label{app_sec:restricted_likelihood}

	\subsection{Formalization}
	\label{app_sec:marginal_variances}
	
	For simplicity, we assume no covariates in the exposure model, but covariates may be included in the outcome model by inclusion in $\mat{X}$.
	Based on \cref{eq:model-joint} the covariance matrix is
	\[ \begin{pmatrix} \G & \Q \\ \Q^T & \h\end{pmatrix}^{-1} = \begin{pmatrix}
	\G^{-1} + \G^{-1}\Q(\h - \Q^T\G^{-1}\Q)^{-1}\Q^T\G^{-1} & - \G^{-1}\Q(\h - \Q^T \G^{-1} \Q)^{-1} \\
	- (\h - \Q^{T}\G^{-1}\Q)^{-1}\Q^T\G^{-1} & (\h - \Q^T\G^{-1}\Q)^{-1}
	\end{pmatrix}. \]
	Based on the properties of the multivariate normal distribution,
	$$
	\bm U | \bm Z \sim N(\bm \mu_{\bm U | \bm Z}, \Sigma_{\bm U | \bm Z})
	$$
	where
	\begin{align*}
		\bm \mu_{\bm U | \bm Z} & = - \G^{-1}\Q(\h - \Q^T \G^{-1} \Q)^{-1}(\h - \Q^T\G^{-1}\Q)^{-1} \bm Z  = - \G^{-1}\Q \bm Z \\
		\Sigma_{\bm U | \bm Z} &=
		\G^{-1} + \G^{-1}\Q(\h - \Q^T\G^{-1}\Q)^{-1}\Q^T\G^{-1} - \\
		&\hspace{20pt} \G^{-1}\Q(\h - \Q^T \G^{-1} \Q)^{-1} (\h - \Q^T\G^{-1}\Q) (\h - \Q^{T}\G^{-1}\Q)^{-1}\Q^T\G^{-1} \\
		& = \G^{-1}.
	\end{align*}
	
	\noindent Based on the above, the marginal variance of $\vec{Z}$ is $(\mat{H} - \mat{Q}^\trans \mat{G}^{-1} \mat{Q})^{-1}$, and the $\Var[\vec{Y} | \vec{Z}] = \Var[\vec{U} | \vec{Z}] + \Var[\vec{\epsilon} | \vec{Z}] = \mat{G}^{-1} + \mat{R}^{-1}$. Further, $E[\vec U | \vec Z] = - \mat G^{-1}\mat Q \vec Z$ leading to the following outcome model integrating $\vec{U} | \vec Z$ out:
	\begin{equation}
		\label{app_eq:outcome_model}
		\vec{Y} | \vec{Z}
		\sim \N\big( \mat{X} \vec{\beta} + \E[\vec{U} | \vec{Z}], \Var[\vec{U} | \vec{Z}] + \Var[\vec{\epsilon}]\big)
		= \N \big(\mat{X} \vec{\beta} - \mat{G}^{-1} \mat{Q} \vec{Z}, \mat{G}^{-1} + \mat{R}^{-1}\big).
	\end{equation}
	
	The full data likelihood can be factored as $f(\vec{y}, \vec{u}, \vec{z} | \vec{\beta}) = f(\vec{y} | \vec{u}, \vec{z}; \vec{\beta}) f(\vec{u} | \vec{z}) f(\vec{z})$.
	Using the outcome model in \eqref{app_eq:outcome_model} and defining $\mat{B} = -\mat{G}^{-1} \mat{Q}$, $\mat{A} = \Var[\vec{Z}] = (\mat{H} - \mat{Q}^\trans \mat{G}^{-1} \mat{Q})^{-1}$, and $\mat{V} = \mat{G}^{-1} + \mat{R}^{-1}$, we have
	\begin{equation}
		f(\vec{Y} | \vec{Z}; \vec{\beta})
		\propto |\mat{V}|^{-1/2} \exp\left[ -\frac{1}{2} \left\{ (\vec{Y} - \mat{B} \vec{Z}) - \mat{X}\vec{\beta} \right\}^\trans \mat{V}^{-1} \left\{ (\vec{Y} - \mat{B} \vec{Z}) - \mat{X}\vec{\beta} \right\} \right],
	\end{equation}
	leading to the following restricted likelihood conditional on $\vec{Z}$,
	\begin{equation}
		r(\vec{Y} | \vec{Z})
		\propto \left( |\mat{V}| \cdot |\mat{X}^\trans \mat{V}^{-1} \mat{X}| \right)^{-\frac{1}{2}}
		\exp\left[ -\frac{1}{2}
		(\vec{Y} - \mat{B} \vec{Z})^\trans
		\left\{ \mat{V}^{-1} - \mat{V}^{-1} \mat{X} (\mat{X}^\trans \mat{V}^{-1} \mat{X}) \mat{X}^\trans \mat{V}^{-1}\right\}
		(\vec{Y} - \mat{B} \vec{Z})\right].
	\end{equation}
	Since $f(\vec{Z})$ does not depend on $\vec{\beta}$, we can write the full restricted likelihood as
	\begin{equation}
		\begin{aligned}
			RL
			&= r(\vec{Y} | \vec{Z}) f(\vec{Z}), \\
			&\propto \left[|\mat{V}| \cdot |\mat{A}| \cdot |\mat{X}^\trans \mat{V}^{-1} \mat{X}| \right]^{-1/2} \\
			&\phantom{==}\times \exp\left[-\frac{1}{2} \left\{
			\begin{array}{r}
				(\vec{Y} - \mat{B} \vec{Z})^\trans
				\left( \mat{V}^{-1} - \mat{V}^{-1} \mat{X} (\mat{X}^\trans \mat{V}^{-1} \mat{X})^{-1} \mat{X}^\trans \mat{V}^{-1} \right)
				(\vec{Y} - \mat{B} \vec{Z}) \\
				+ \vec{Z}^\trans \mat{A}^{-1} \vec{Z}
			\end{array}
			\right\}\right].
		\end{aligned}
	\end{equation}
	
	\subsection{Conservative bounds on the conditional correlation}
	\label{app_sec:rho_bound}
	
	In order to ensure positive definiteness of the precision matrix \(\displaystyle 
	\mat{P}  = \begin{pmatrix}
	\mat{G} & \mat{Q} \\
	\mat{Q}^\trans & \mat{H}
	\end{pmatrix} \), $\rho$ has to be constrained. Even though no convenient form of such constraint is available, a conservative one is given by
	\begin{equation}
		\label{eq:rho_constraint}
		|\rho| < \frac{\min[ \min_i \{ \lambda_{\mat{G},i} \}, \min_i \{ \lambda_{\mat{H},i} \} ]}{\sqrt{\max_i \{ g_{ii} h_{ii} \}}},
	\end{equation}
	where $\lambda_{\mat{G},i}$ and $\lambda_{\mat{H},i}$ are the $i$th eigenvalues of $\mat{G}$ and $\mat{H}$, respectively.
	To establish that, let
	\begin{align*}
		\mat{S} = \begin{pmatrix}
			\mat{G} & \mat{0} \\
			\mat{0} & \mat{H}
		\end{pmatrix}
		\quad \textrm{ and } \quad
		\mat{T}
		= \begin{pmatrix}
			\mat{0} & \mat{Q} \\
			\mat{Q} & \mat{0}
		\end{pmatrix},
	\end{align*}
	we constraint
	$\mat S, \mat T$ such that for any vector $\vec{v} \neq \vec{0}$ of length $2n$, $\vec{v}^\trans \mat{P} \vec{v} = \vec{v}^\trans \mat{S} \vec{v} + \vec{v}^\trans \mat{T} \vec{v} > 0$.
	Let $\set{C} = \{\vec{v} : |\vec{v}| = 1\}$.
	It suffices to show that $\min_{\vec{v} \in \set{C}} \vec{v}^\trans \mat{S} \vec{v} > -\min_{\vec{v} \in \set{C}} \vec{v}^\trans \mat{T} \vec{v}$.
	Note that $\min_{\vec{v} \in \set{C}} \vec{v}^\trans \mat{S} \vec{v}$ is the minimum eigenvalue of $\mat{S}$, and similarly for $\vec{T}$. Also, since $\mat S$ is block diagonal, $\min_i \{\lambda_{\set{S}_i}\} = \min[ \min_i \{\lambda_{\set{G}_i}\}, \min_i \{\lambda_{\set{H}_i}\}]$.
	The eigenvalues of $\mat{T}$ are the roots of $|\lambda \mat{I}_{2n} - \mat{T}| = |\lambda \mat{I}_n| \cdot |\lambda \mat{I}_n - \lambda^{-1} \mat{Q} \mat{Q}| = \prod_{i=1}^{n} (\lambda^2 - \rho^2 g_{ii} h_{ii})$, i.e., $\lambda = \pm \rho \sqrt{g_{ii} h_{ii}}$.
	Thus $-\min_{\vec{v} \in \set{C}} \vec{v}^\trans \mat{T} \vec{v} = |\rho| \sqrt{\max_i \{g_{ii} h_{ii}\}}$, and
	\begin{align*}
		|\rho| < \frac{\min[ \min_i \{ \lambda_{\mat{G},i} \}, \min_i \{ \lambda_{\mat{H},i} \} ]}{\sqrt{\max_i \{ g_{ii} h_{ii} \}}}
	\end{align*}
	guarantees positive definite $\mat{P}$.
	
	\subsection{Approximate standard errors accounting for correlation between parameter estimates}
	\label{sec:se}
	
	For the spatial estimator $\widetilde{\vec\beta}$, approximate standard errors are often constructed assuming known variance parameters: \( \displaystyle \widehat{\Var}\big(\widetilde{\vec{\beta}}\big) \approx \big(\mat{X}^\trans \widehat{\mat{V}}^{-1} \mat{X}\big)^{-1}\).
	We do not recommend applying this idea directly to $\widebar{\vec{\beta}}$ due to the fact that the estimates of $\rho$ and $\beta_z$ are strongly correlated. We account for such correlation with a small modification.
	If all variance parameters except $\rho$ are treated as known, then
	\begin{equation}
		\vec{Y} | \mat{X} \sim \N[\mat{X} \vec{\beta} - \rho \mat{G}^{-1} \mat{Q}^* \vec{Z}, \mat{G}^{-1} + \mat{R}^{-1}],
	\end{equation}
	where $\mat{Q}^*$ is diagonal with elements $q_{ii}^* = -\sqrt{g_{ii}h_{ii}}$ known and independent of $\rho$.
	Then, treating $\rho$ exclusively as a coefficient, we can write $\mat{D} = [\mat{X} | -\mat{G}^{-1} \mat{Q}^* \vec{Z}]$ via concatenation, and obtain an estimated variance $\widehat{\Var}\Big[\Big( \widebar{\vec{\beta}}, \widebar{\rho}\Big)\Big] = \Big(\widehat{\mat{D}}^\trans \widehat{\mat{V}}^{-1} \widehat{\mat{D}}\Big)^{-1}$, from which an estimate of the variance of $\widebar{\vec \beta}$ can be acquired. Based on the estimated variance of $\affine$, Wald-type confidence intervals can be obtained.
	
	Approximate standard errors for the semi-parametric estimator may be obtained similarly by augmenting $\mat{M}$ with $-\mat{G}^{-1} \mat{Q}^* \vec{Z}$ and $\vec{\theta}$ with $\rho$.

	\section{Identifiability results}
	
	\subsection{Matrix results supporting identifiability on the ring graph}
	\label{app:identifiability}
	
	We call a matrix \emph{STDC} if it is symmetric, tridiagonal, and the diagonal, subdiagonal, and superdiagonal are all constant vectors, i.e.,
	\begin{equation}
		\mat{S}_n(a, b) =
		\begin{pmatrix}
			a & b & & & \\
			b & a  & b & & \\
			& \ddots & \ddots & \ddots & \\
			& & b & a & b & \\
			& & & b & a
		\end{pmatrix}.
	\end{equation}
	Let $\set{R}_n$ be the ring graph of order $n$, and $\mat{A}_n(\phi)$ be the unscaled CAR precision matrix
	\begin{equation}
		\mat{A}_n(\phi)
		=
		\begin{pmatrix}
			2 & -\phi & & & -\phi \\
			-\phi & 2  & -\phi & & \\
			& \ddots & \ddots & \ddots & \\
			& & -\phi & 2 & -\phi & \\
			-\phi & & & -\phi & 2
		\end{pmatrix}.
	\end{equation}
	That is, $\mat{A}_n(\phi)$ is the matrix $\mat{S}_n(2, -\phi)$ with the upper-right and lower-left entries modified to be $-\phi$.
	
	\begin{lemma}
		\label{lem:car-det-recurrence}
		For an unscaled CAR precision matrix $\mat{A}_n(\phi)$ of a ring graph $\set{R}_n$,
		\begin{equation}
			\label{eq:car-det-recurrence}
			\begin{aligned}
				\det[\mat{A}_n(\phi)]
				&= 2 \det[\mat{S}_{n-1}(2, \phi)] -
				2\phi^2 \left\{
				\det[\mat{S}_{n-2}(2, \phi)]
				+\phi ^ {n - 2}
				\right\}.
			\end{aligned}
		\end{equation}
	\end{lemma}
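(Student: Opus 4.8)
Write $D_k := \det[\mat{S}_k(2,-\phi)]$ for the tridiagonal Toeplitz determinant; conjugating by $\diag(1,-1,1,-1,\dots)$ shows $D_k = \det[\mat{S}_k(2,\phi)]$, so it suffices to establish $\det[\mat{A}_n(\phi)] = 2D_{n-1} - 2\phi^2 D_{n-2} - 2\phi^n$, which is the stated identity once $-2\phi^2 D_{n-2} - 2\phi^n$ is rewritten as $-2\phi^2(D_{n-2}+\phi^{n-2})$. The plan is a Laplace (cofactor) expansion of $\det[\mat{A}_n(\phi)]$ along its first row, which for $n \ge 4$ has exactly three nonzero entries: $a_{11}=2$, $a_{12}=-\phi$, and the wrap-around entry $a_{1n}=-\phi$.

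First I would dispatch the $a_{11}$ term: deleting row and column $1$ leaves rows and columns $2,\dots,n$, and since the only wrap-around entries of $\mat{A}_n(\phi)$ sit in row/column $1$, this submatrix is exactly $\mat{S}_{n-1}(2,-\phi)$, contributing $2D_{n-1}$. For the $a_{12}$ term, deleting row $1$ and column $2$ leaves an $(n-1)\times(n-1)$ matrix whose first column carries a single stray $-\phi$ in its bottom row (coming from the corner entry $a_{n1}$); I would expand that minor again along this first column, obtaining two sub-minors—one equal to $\mat{S}_{n-2}(2,-\phi)$ (determinant $D_{n-2}$) and one triangular with $-\phi$ along its diagonal (determinant $(-\phi)^{n-2}$)—so that, once the cofactor signs $(-1)^{i+j}$ are carried through both expansions, the $a_{12}$ term contributes $-\phi^2 D_{n-2} - \phi^n$. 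The $a_{1n}$ term is treated symmetrically: deleting row $1$ and column $n$ and expanding again along the first column (which again carries a lone corner $-\phi$) produces an $\mat{S}_{n-2}(2,-\phi)$ sub-minor and a triangular sub-minor equal to $(-\phi)^{n-2}$, so this term also contributes $-\phi^2 D_{n-2} - \phi^n$. Adding the three contributions yields $2D_{n-1} - 2\phi^2 D_{n-2} - 2\phi^n$, as required; the small cases excluded by $n \ge 4$ (which are irrelevant to the $n\to\infty$ application in \cref{thm:identification-z}) can be checked by hand.

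The only real obstacle is bookkeeping: keeping the cofactor signs straight through the two nested expansions (it is easy to drop a factor of $(-1)^n$, and the sign of $(-\phi)^{n-2}$ must be tracked), and confirming in each secondary minor that the stray corner entry $a_{n1}$ lands in exactly one off-diagonal slot, so that one sub-minor really is tridiagonal-Toeplitz and the other really is triangular. An alternative, less elementary route writes $\mat{A}_n(\phi) = \mat{S}_n(2,-\phi) - \phi(\vec{e}_1\vec{e}_n^\trans + \vec{e}_n\vec{e}_1^\trans)$ and applies the matrix determinant lemma, using Cramer's rule for the corner entries of $\mat{S}_n^{-1}$ together with the continuant identity $D_{n-1}^2 - D_nD_{n-2} = (-1)^n\phi^{2n-2}$; but that approach divides by $D_n$ and requires a separate polynomial-identity argument when $D_n=0$, so I would favor the direct expansion.
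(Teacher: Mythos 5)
Your proposal is correct and follows essentially the same route as the paper's proof: a Laplace expansion of $\det[\mat{A}_n(\phi)]$ along the first row, with the two off-corner terms handled by a secondary expansion that splits each minor into an STDC block and a triangular block (the paper notes the two minors are transposes, which is your ``symmetric'' treatment). Your added observation that $\det[\mat{S}_k(2,\phi)]=\det[\mat{S}_k(2,-\phi)]$ is a worthwhile clarification of the sign convention in the lemma statement, but does not change the argument.
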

	
	\begin{proof}
		We use the Laplace expansion for computing the determinant.
		The first term of \eqref{eq:car-det-recurrence} comes directly from the first term of the expansion along the first row.
		The second and final terms (the only other non-zero terms) are equal because the minor matrices are transposes of each other, and their determinants may be computed via Laplace expansion along the first column of whichever matrix has a non-zero entry in the lower-left corner.
		In this latter expansion, one matrix is STDC and the other is triangular with a constant diagonal.
	\end{proof}
	
	\begin{lemma}
		\label{lem:stdc-det}
		For an STDC matrix $\mat{S}_n(2, -\phi)$ with $|\phi| < 1$,
		\begin{equation}
			\begin{aligned}
				\det[\mat{S}_n(2, -\phi)]
				&= 2 \det[\mat{S}_{n-1}(2, -\phi)] - \phi ^ 2 \det[\mat{S}_{n-2}(2, -\phi)], \\
				&= \frac{1}{2 \sqrt{1 - \phi^2}} \left[\left(1 + \sqrt{1-\phi^2}\right) ^ {n + 1} - \left(1 - \sqrt{1-\phi^2}\right)^{n+1}\right].
			\end{aligned}
		\end{equation}
	\end{lemma}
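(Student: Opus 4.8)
The plan is to prove the two displayed identities in turn: first the three-term recurrence by a cofactor expansion, and then the closed form by solving that recurrence as a linear recurrence with constant coefficients. For the recurrence, I would expand $\det[\mat{S}_n(2,-\phi)]$ along its last row, which has its only nonzero entries equal to $-\phi$ in column $n-1$ and $2$ in column $n$. The entry $2$ contributes $2\det[\mat{S}_{n-1}(2,-\phi)]$. The entry $-\phi$ contributes, after one additional expansion of the resulting minor along its last column, a term $-\phi^2\det[\mat{S}_{n-2}(2,-\phi)]$, since deleting the last row and column $n-1$ leaves a matrix whose last column has a single nonzero entry $-\phi$ sitting above the $(n-2)\times(n-2)$ STDC block $\mat{S}_{n-2}(2,-\phi)$. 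This is just the standard continuant recurrence for a tridiagonal matrix with constant diagonal $a=2$ and constant off-diagonals $b=c=-\phi$, namely $d_n = 2 d_{n-1} - \phi^2 d_{n-2}$, which is the first claimed identity. Direct evaluation supplies the base cases $d_0 = 1$ and $d_1 = 2$ (and, as a check, $d_2 = 4-\phi^2$).

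Next I would solve this recurrence. Its characteristic polynomial is $t^2 - 2t + \phi^2$, with roots $t_\pm = 1 \pm \sqrt{1-\phi^2}$; because $|\phi|<1$ we have $\sqrt{1-\phi^2}>0$, so the roots are distinct and the general solution is $d_n = A\, t_+^{\,n} + B\, t_-^{\,n}$. Imposing $d_0 = 1$ and $d_1 = t_+ + t_- = 2$ and solving the resulting $2\times 2$ system gives $A = t_+/(t_+ - t_-)$ and $B = -t_-/(t_+ - t_-)$, so that $d_n = (t_+^{\,n+1} - t_-^{\,n+1})/(t_+ - t_-)$. Substituting $t_\pm = 1 \pm \sqrt{1-\phi^2}$ and $t_+ - t_- = 2\sqrt{1-\phi^2}$ then yields exactly the closed form in the statement.

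I do not anticipate a genuine obstacle here: the argument is routine, and the only place the hypothesis $|\phi|<1$ is used is to guarantee that $t_+\neq t_-$ (equivalently $2\sqrt{1-\phi^2}\neq 0$), which makes the division legitimate and avoids any repeated-root modification of the solution formula. The one bit of bookkeeping to carry out carefully is the index shift — matching the base cases so that $t_\pm^{\,n}$ becomes $t_\pm^{\,n+1}$ in the final expression — but this is straightforward once $d_0$ and $d_1$ are pinned down.
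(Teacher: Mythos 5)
Your proposal is correct and follows essentially the same route as the paper: a Laplace (continuant) expansion yielding the recurrence $d_n = 2d_{n-1} - \phi^2 d_{n-2}$ with base cases $d_0=1$, $d_1=2$, solved by the characteristic-roots method with $t_\pm = 1\pm\sqrt{1-\phi^2}$. The only cosmetic difference is that you expand along the last row while the paper expands along the first; the resulting closed form is identical.
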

	
	\begin{proof}
		The recurrence relation can be obtained by computing the Laplace expansion along the first row, yielding two non-zero terms.
		The minor matrix in one non-zero term is $\mat{S}_{n-1}(2, -\phi)$, and in the other term the Laplace expansion along the first column has one non-zero term, whose minor matrix is $\mat{S}_{n-2}(2, -\phi)$.
		Initial conditions for the recurrence relation $r(n) = 2 r(n - 1) - \phi^2 r(n - 2)$ can be set to $r(0) = 1$ and $r(1) = 2$ by letting $\mat{S}_{0}(2, -\phi)$ be empty and $\mat{S}_{1}(2, -\phi) = 2$, and the characteristic roots technique yields the solution $r(n) = \frac{1 + \sqrt{1-\phi^2}}{2\sqrt{1-\phi^2}} \left(1 + \sqrt{1 - \phi^2}\right) ^ n - \frac{1-\sqrt{1-\phi^2}}{2\sqrt{1-\phi^2}} \left(1 - \sqrt{1 - \phi^2}\right) ^ n$.
	\end{proof}
	
	\begin{theorem}
		\label{thm:a-det}
		For an unscaled CAR precision matrix $\mat{A}_n(\phi)$ of a ring graph $\set{R}_n$,
		\begin{equation}
			\begin{aligned}
				\det[\mat{A}_n(\phi)]
				&= 
				\frac{1}{\sqrt{1-\phi^2}}\left[
				\left(1 + \sqrt{1-\phi^2}\right)^{n}
				-\left(1 - \sqrt{1-\phi^2}\right)^{n}
				\right] \\
				&\phantom{==}- 2 \left[ \frac{\phi ^ 2}{2\sqrt{1-\phi^2}}\left\{
				\left(1 + \sqrt{1-\phi^2}\right)^{n-1} -
				\left(1 - \sqrt{1-\phi^2}\right)^{n-1} \right\} +
				\phi ^ {n}
				\right].
			\end{aligned}
		\end{equation}
	\end{theorem}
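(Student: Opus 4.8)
The plan is to obtain the closed form by substituting the explicit formula for $\det[\mat{S}_m(2,-\phi)]$ from \cref{lem:stdc-det} directly into the recurrence of \cref{lem:car-det-recurrence}; essentially no new argument is needed beyond careful bookkeeping. The one small point to dispatch first is that the quantity appearing in \cref{lem:car-det-recurrence}, namely $\det[\mat{S}_m(2,\phi)]$, equals $\det[\mat{S}_m(2,-\phi)]$. This is because the determinant of an STDC matrix $\mat{S}_m(a,b)$ depends on $b$ only through $b^2$: conjugating by the diagonal sign matrix $\diag(1,-1,1,\dots,(-1)^{m-1})$ carries $\mat{S}_m(a,b)$ to $\mat{S}_m(a,-b)$ without changing the determinant — equivalently, the three-term recurrence $r(m)=a\,r(m-1)-b^2 r(m-2)$ established in the proof of \cref{lem:stdc-det} sees only $b^2$. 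Hence the closed form of \cref{lem:stdc-det} may be applied verbatim with $\phi$ in place of $-\phi$.

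With that in hand, I would write, using \cref{lem:car-det-recurrence},
\[
\det[\mat{A}_n(\phi)] = 2\det[\mat{S}_{n-1}(2,-\phi)] - 2\phi^2\det[\mat{S}_{n-2}(2,-\phi)] - 2\phi^{n},
\]
and then plug in \cref{lem:stdc-det} with $m=n-1$ and with $m=n-2$. The first term becomes
\[
2\det[\mat{S}_{n-1}(2,-\phi)] = \frac{1}{\sqrt{1-\phi^2}}\left[\left(1+\sqrt{1-\phi^2}\right)^{n} - \left(1-\sqrt{1-\phi^2}\right)^{n}\right],
\]
which is exactly the first bracketed line of the theorem (the index shift $m+1 = n$ is what makes the exponents come out right). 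The term $2\phi^2\det[\mat{S}_{n-2}(2,-\phi)]$ becomes $2\left[\frac{\phi^2}{2\sqrt{1-\phi^2}}\bigl\{(1+\sqrt{1-\phi^2})^{n-1} - (1-\sqrt{1-\phi^2})^{n-1}\bigr\}\right]$, and combining this with the residual $2\phi^n$ reproduces the second line of the theorem, completing the identification.

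There is no genuine obstacle here — the only thing to be careful about is aligning the two index shifts ($m\mapsto m+1$ inside \cref{lem:stdc-det} and $n\mapsto n-1,n-2$ from \cref{lem:car-det-recurrence}) and not overlooking the sign-flip observation of the first paragraph. As a sanity check I would verify the formula at $n=3$, where $\mat{A}_3(\phi)$ is a $3\times 3$ matrix whose determinant is directly computed to be $8-6\phi^2-2\phi^3$; writing $s=\sqrt{1-\phi^2}$, the proposed formula gives $\frac{1}{s}[(1+s)^3-(1-s)^3] - 2[\frac{\phi^2}{2s}\{(1+s)^2-(1-s)^2\}+\phi^3] = 2(3+s^2) - 2(2\phi^2+\phi^3) = (8-2\phi^2)-(4\phi^2+2\phi^3)$, which matches.
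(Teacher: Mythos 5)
Your proposal is correct and follows exactly the route the paper takes: the paper's proof simply states that the result follows immediately from Lemma~\ref{lem:car-det-recurrence} and Lemma~\ref{lem:stdc-det}, which is precisely the substitution you carry out. Your extra observation that $\det[\mat{S}_m(2,\phi)]=\det[\mat{S}_m(2,-\phi)]$ (needed to reconcile the sign conventions of the two lemmas) and the $n=3$ sanity check are welcome added care, but not a different argument.
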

	
	\begin{proof}
		The result follows immediately from Lemma~\ref{lem:car-det-recurrence} and Lemma~\ref{lem:stdc-det}.
	\end{proof}
	
	\begin{lemma}
		\label{lem:cofactors}
		Let $\{C_{ij}^{(n)}\}$ be the cofactors of $\mat{A}_n(\phi)$.
		Then, for $n > 3$,
		\begin{equation}
			\label{eq:cofactors}
			\begin{aligned}
				C_{1, 1}^{(n)} &= \det[\mat{S}_{n-1}(2, -\phi)], & &  \\
				C_{1, j}^{(n)} &= \phi^{j-1} \det[\mat{S}_{n-j}(2, -\phi)] + \phi^{n-j+1} \det[\mat{S}_{j-2}(2, -\phi)], & & j > 1 \\
			\end{aligned}
		\end{equation}
		Note that the second term of $C_{1, j}^{(n)}$ converges linearly with rate $\phi$ to 0 as $n$ increases.
	\end{lemma}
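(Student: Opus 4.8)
The plan is to compute $C_{1,j}^{(n)}=(-1)^{1+j}\det[\mat{B}_{1,j}]$, where $\mat{B}_{1,j}$ denotes the $(n-1)\times(n-1)$ submatrix obtained from $\mat{A}_n(\phi)$ by deleting row $1$ and column $j$, and to evaluate $\det[\mat{B}_{1,j}]$ through its Leibniz (permutation-sum) expansion. For $j=1$ the deletion removes every entry of $\mat{A}_n(\phi)$ that involves index $1$ — in particular both corner entries $(1,n)$ and $(n,1)$ — so $\mat{B}_{1,1}$ is exactly the STDC matrix $\mat{S}_{n-1}(2,-\phi)$ and $C_{1,1}^{(n)}=\det[\mat{S}_{n-1}(2,-\phi)]$, with trivial sign $(-1)^{1+1}=1$. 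For $j>1$, since $\mat{A}_n(\phi)$ is a circulant whose only nonzero off-diagonal entries are the (cyclic) sub- and super-diagonals, a nonzero term in the expansion of $\det[\mat{B}_{1,j}]$ corresponds to a bijection $\sigma$ from $\{2,\dots,n\}$ onto $\{1,\dots,n\}\setminus\{j\}$ with $\sigma(i)-i\in\{-1,0,1\}\pmod n$ for every $i$.

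First I would note that $\sigma(j)\neq j$, because $j$ lies in the domain but not the range; hence $\sigma(j)\in\{j-1,j+1\}$. I would then argue that each choice forces a \emph{monotone chain}. If $\sigma(j)=j-1$, then $j-1$ is occupied, which forces $\sigma(j-1)=j-2$, and inductively $\sigma(j-1)=j-2,\dots,\sigma(2)=1$: a non-wrapping chain of length $j-1$ that fills the hole at index $1$, after which the remaining indices $\{j+1,\dots,n\}$ must be permuted among themselves by a near-neighbor self-map, contributing exactly $\det[\mat{S}_{n-j}(2,-\phi)]$ (this is just the Leibniz expansion of a tridiagonal determinant). If instead $\sigma(j)=j+1$, the same reasoning forces $\sigma(j+1)=j+2,\dots,\sigma(n)=1$ — a wrapping chain of length $n-j+1$ — and the remaining indices $\{2,\dots,j-1\}$ contribute $\det[\mat{S}_{j-2}(2,-\phi)]$. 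The hypothesis $n>3$ is what guarantees the ring adjacency is non-degenerate (each vertex has two distinct neighbors), so these forced chains are genuine; the conventions $\det[\mat{S}_0(2,-\phi)]=1$ and $\det[\mat{S}_1(2,-\phi)]=2$ then cover the boundary values $j=2$ and $j=n$. Each chain of length $\ell$ contributes the factor $(-\phi)^{\ell}$ from the off-diagonal entries it traverses, so grouping the two families of terms gives $\det[\mat{B}_{1,j}]=\pm\phi^{j-1}\det[\mat{S}_{n-j}(2,-\phi)]\pm\phi^{n-j+1}\det[\mat{S}_{j-2}(2,-\phi)]$, and combining the parity of the chain-plus-complement permutation with the cofactor sign $(-1)^{1+j}$ must turn both signs into $+$, which is \eqref{eq:cofactors}.

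The delicate step — the one I would budget the most care for — is this last sign bookkeeping: tracking the parity of the permutation induced by a monotone chain when it is written relative to the row ordering $2,\dots,n$ and the column ordering $1,\dots,j-1,j+1,\dots,n$, and verifying that its product with the parity of the near-neighbor self-map on the complementary path, times $(-1)^{1+j}$, collapses to the clean coefficients $+\phi^{j-1}$ and $+\phi^{n-j+1}$. A convenient way to pin down the overall sign is to specialize to $j=2$ and require consistency with the first-row expansion $\det[\mat{A}_n(\phi)]=2\,C_{1,1}^{(n)}-\phi\,C_{1,2}^{(n)}-\phi\,C_{1,n}^{(n)}$ and Theorem~\ref{thm:a-det}. (Alternatively, since $\mat{A}_n(\phi)$ is circulant one could read the cofactors off its inverse via $C_{ij}^{(n)}=\det[\mat{A}_n(\phi)]\,(\mat{A}_n(\phi)^{-1})_{ij}$, trading the sign bookkeeping for an eigenvalue and partial-fraction computation of comparable length.) Finally, the stated linear convergence is immediate: for fixed $j$ and $|\phi|<1$, the factor $\det[\mat{S}_{j-2}(2,-\phi)]$ in the second summand does not depend on $n$, so $\phi^{n-j+1}\det[\mat{S}_{j-2}(2,-\phi)]\to 0$ and its ratio over consecutive values of $n$ equals $\phi$.
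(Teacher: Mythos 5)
Your combinatorial classification of the nonzero Leibniz terms is correct --- the hole at column $j$ does force exactly the two monotone chains you describe, and the complements do contribute $\det[\mat{S}_{n-j}(2,-\phi)]$ and $\det[\mat{S}_{j-2}(2,-\phi)]$ --- but the step you yourself flag as delicate is where the write-up stops being a proof. The overall sign of each family is the product of $(-1)^{1+j}$, the parity of the chain-plus-complement permutation expressed in the re-indexed row/column positions of the minor, and $(-1)^{\ell}$ from the $\ell$ factors of $-\phi$ along the chain. For the leftward chain this collapses cleanly (in position coordinates it is the identity on the first $j-1$ positions), but the wrapping chain does not block-decompose in position coordinates and its parity a priori depends on $j$. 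Checking consistency at $j=2$ against the first-row expansion of $\det[\mat{A}_n(\phi)]$ pins down the sign for $j=2$ only; it cannot rule out a $j$-dependent sign. To close the argument you would need either an explicit inversion count for the wrapping chain or your alternative route through the circulant inverse.

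For comparison, the paper sidesteps the parity question entirely: it expands the minor by Laplace along its \emph{first column}, which for $j>1$ has only two nonzero entries ($-\phi$ in the first and last positions), and observes that each of the two resulting $(n-2)\times(n-2)$ minors is block-triangular --- one pairing a triangular block with $-\phi$ on its diagonal against $\mat{S}_{n-j}(2,-\phi)$, the other pairing $\mat{S}_{j-2}(2,-\phi)$ against a triangular block. The signs then come for free from the standard cofactor signs and the determinants of triangular matrices. Your two permutation families are precisely the Leibniz expansions of these two block-triangular terms, so the arguments are equivalent in content; the Laplace/block-triangular organization is simply the cheaper way to do the sign bookkeeping you were budgeting for. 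The final remark on linear convergence of the second term is immediate in either formulation.
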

	
	\begin{proof}
		The first cofactor $C_{1,1}^{(n)}$ can be verified by inspection.
		In computing the other cofactors along the first row, for each corresponding minor matrix, the first column has $-\phi$ in the first and final positions, and $0$ elsewhere.
		We examine the minors corresponding to those non-zero positions.
		
		In the first position, each minor matrix is an upper-triangular block matrix: the upper-left $(j-1)\times(j-1)$ block is upper-triangular with $-\phi$ along the diagonal, the lower-left block is $\mat{S}_{n-j}(2, -\phi)$, and the lower-left block is $\mat{0}$.
		Thus the first term in the second line of \eqref{eq:cofactors}.
		
		In the second position, each minor matrix is a lower-triangular block matrix: the upper-left block is $\mat{S}_{j-2}(2, -\phi)$, the lower-right $(n-j+1)\times(n-j+1)$ block is lower-triangular with $-\phi$ along the diagonal, and the upper-right block is $\mat{0}$.
		Thus the second term in the second line of \eqref{eq:cofactors}.
	\end{proof}
	
	When $\mat{A}_n(\phi)$ is a CAR precision matrix, $C_{1,j}^{(n)}$ is proportional to the marginal correlation between locations $1$ and $j$ (due to Cramer's rule).
	The first term in the second line of \eqref{eq:cofactors} can be thought of as representing the correlation due to the ``leftward'' path from location $j$ to location $1$, and the second term as that due to the ``rightward'' path.
	For a fixed $j$, the ``rightward'' path becomes long as $n$ increases, thus the fact that the second term's limit is zero corresponds to the marginal correlation due to the long ``rightward'' path decreasing to 0.

	\begin{theorem}
		For $|\phi| < 1$,
		\begin{equation}
			\lim_{n\to\infty} \left[\mat{A}_{n}(\phi)^{-1}\right]_{i, j}
			= \frac{1}{2\sqrt{1-\phi^2}}\left(\frac{\phi}{1+\sqrt{1-\phi^2}}\right)^{|i-j|}.
		\end{equation}
	\end{theorem}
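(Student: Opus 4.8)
The plan is to combine Cramer's rule with the closed-form determinants and cofactors already established in \cref{lem:stdc-det}, \cref{thm:a-det}, and \cref{lem:cofactors}. Since $\set{R}_n$ is vertex-transitive, $\mat{A}_n(\phi)$ is a symmetric circulant matrix, hence so is $\mat{A}_n(\phi)^{-1}$; therefore $[\mat{A}_n(\phi)^{-1}]_{i,j}$ depends only on the circular lag between $i$ and $j$, which for fixed $d=|i-j|$ equals $d$ once $n>2d$. It thus suffices to evaluate $\lim_{n\to\infty}[\mat{A}_n(\phi)^{-1}]_{1,\,d+1}$, and by Cramer's rule together with the symmetry $C^{(n)}_{1,d+1}=C^{(n)}_{d+1,1}$ this equals $\lim_{n\to\infty} C^{(n)}_{1,d+1}\big/\det[\mat{A}_n(\phi)]$. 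The case $\phi=0$ is immediate since then $\mat{A}_n(0)=2\mat{I}_n$, so I would assume $0<|\phi|<1$ and set $r=\sqrt{1-\phi^2}\in(0,1)$, $a=1+r$, $b=1-r$, recording the elementary facts $0\le b<1<a$, $|\phi|<a$, and $ab=1-r^2=\phi^2$.

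The next step is to read off leading-order asymptotics. From \cref{lem:stdc-det}, $\det[\mat{S}_m(2,-\phi)]=(a^{m+1}-b^{m+1})/(2r)\sim a^{m+1}/(2r)$ because $b<a$. Substituting this into the expression of \cref{thm:a-det}, the $a$-power contributions to $\det[\mat{A}_n(\phi)]$ combine as $a^{n-1}(a-\phi^2)/r$, and the algebraic identity $a-\phi^2=(1+r)-(1-r^2)=r(1+r)=ra$ collapses them to exactly $a^n$; collecting the remaining terms one finds $\det[\mat{A}_n(\phi)]=a^n+b^n-2\phi^n\sim a^n$. Likewise, \cref{lem:cofactors} gives, for $d\ge 1$, $C^{(n)}_{1,d+1}=\phi^{d}\det[\mat{S}_{n-d-1}(2,-\phi)]+\phi^{n-d}\det[\mat{S}_{d-1}(2,-\phi)]$ (and $C^{(n)}_{1,1}=\det[\mat{S}_{n-1}(2,-\phi)]$ when $d=0$); since $|\phi|<a$ the second summand is $O(|\phi|^n)=o(a^n)$, while the first is $\sim\phi^{d}a^{n-d}/(2r)$.

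Dividing, the factors of $a^n$ cancel and one obtains
\[
\lim_{n\to\infty}\left[\mat{A}_n(\phi)^{-1}\right]_{i,j}
=\frac{\phi^{d}a^{n-d}/(2r)}{a^n}
=\frac{1}{2r}\left(\frac{\phi}{a}\right)^{d}
=\frac{1}{2\sqrt{1-\phi^2}}\left(\frac{\phi}{1+\sqrt{1-\phi^2}}\right)^{|i-j|},
\]
which is the asserted formula (the $d=0$ case yielding $1/(2\sqrt{1-\phi^2})$, as it should). I expect the only delicate point to be the bookkeeping of subdominant terms — verifying that $b^n$, $\phi^n$, and the second cofactor term are each $o(a^n)$ — which rests entirely on $0\le b<1<a$ and $|\phi|<a$ for $|\phi|<1$; one should also note the near-degenerate regime $\phi^2\uparrow 1$, where $r\downarrow 0$ and $a\downarrow 1$, in which the limit still holds pointwise for each fixed $\phi$ but convergence is slow. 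The one genuinely useful piece of algebra is the identity $a-\phi^2=ra$, which is exactly what makes $\det[\mat{A}_n(\phi)]$ asymptotically a clean power of $a$ and hence makes the ratio collapse to the stated closed form.
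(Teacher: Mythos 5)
Your proof is correct and follows essentially the same route as the paper's: Cramer's rule combined with the closed-form determinant of $\mat{A}_n(\phi)$ and the cofactor formulas, with the general $(i,j)$ case reduced to $i=1$ by the ring's translation symmetry (your circulant argument is the same fact the paper invokes via stationarity of the covariance). You simply make explicit the asymptotic bookkeeping --- in particular the identity $a-\phi^2=ra$ collapsing $\det[\mat{A}_n(\phi)]$ to $a^n+b^n-2\phi^n$ --- that the paper leaves to the reader, and this checks out.
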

	
	\begin{proof}
		The case $i = 1$ follows from Theorem~\ref{thm:a-det} and Lemma~\ref{lem:cofactors} via Cramer's rule.
		Other cases follow by noting that $\left[\mat{A}_n(\phi)^{-1}\right]_{ij} = \Cov[Z_i, Z_j]$ and thus depends only on $|i-j|$.
	\end{proof}
	
	\subsection{Lack of identifiability in the absence of spatial structure}
	\label{app:non-spatial}
	
	We illustrate the way in which the proposed method fails to adjust for confounding in the absence of spatial structure. In the interest of simplicity, we do not include measured covariates and $\vec X = (\vec 1 \ | \ \vec Z)$. 
	Suppose that $\mat{G} = \tau_U \mat{I}$, $\mat{H} = \tau_Z \mat{I}$, and $\mat{R} = \tau_\epsilon \mat{I}$ are all scalar matrices.
	We then have $\mat{Q} = -\rho \sqrt{\tau_U \tau_Z}$, and in the notation of \eqref{eq:res-lik}, $\mat{V} = (\tau_U^{-1} + \tau_\epsilon^{-1}) \mat{I}$, $\mat{A} = \tau_Z^{-1} (1 - \rho^2)^{-1} \mat{I}$, and $\mat{B} = \rho \sqrt{\frac{\tau_Z}{\tau_U}} \mat{I}$.
	Note that $\mat{A}$ does not depend on $\tau_U$.
	Various simplifications to \cref{eq:res-lik} are then available:
	\begin{align*}
		|\mat{V}| \cdot |\mat{A}| \cdot |\mat{X}^\trans \mat{V}^{-1} \mat{X}|
		&= (\tau_U^{-1} + \tau_\epsilon^{-1})^{n-p} \left[ \tau_Z^{-1} (1 - \rho^2)^{-1} \right]^{n} \left[ |\mat{X}^\trans \mat{X}| \right], \\
		\mat{V}^{-1} - \mat{V}^{-1} \mat{X} (\mat{X}^\trans \mat{V}^{-1} \mat{X})^{-1} \mat{X}^\trans \mat{V}^{-1}
		&= (\tau_U^{-1} + \tau_\epsilon^{-1})^{-1} [\mat{I} - \mat{X}(\mat{X}^\trans \mat{X})^{-1} \mat{X}^\trans ], \\
		\vec{Z}^\trans \mat{A}^{-1} \vec{Z}
		&= \tau_Z (1-\rho^2) \vec{Z}^\trans \vec{Z}.
	\end{align*}
	The log restricted likelihood is then
	\begin{align*}
			\log RL
			&= C - \frac{n-p}{2}\log(\tau_U^{-1} + \tau_\epsilon^{-1}) - \frac{n}{2}\log[\tau_Z^{-1} (1-\rho^2)^{-1}]\\
			&\phantom{==}- \frac{1}{2} \left[
			(\tau_U^{-1} + \tau_\epsilon^{-1})^{-1}
			\left( \vec{Y} - \rho \sqrt{\frac{\tau_Z}{\tau_U}} \vec{Z} \right)^\trans
			\left\{ \mat{I} - \mat{X}(\mat{X}^\trans \mat{X})^{-1} \mat{X}^\trans \right\}
			\left( \vec{Y} - \rho \sqrt{\frac{\tau_Z}{\tau_U}} \vec{Z} \right)
			\right] \\
			&\phantom{==}- \frac{1}{1} \tau_Z (1-\rho^2) \vec{Z}^\trans \vec{Z}, \\
			&= C - \frac{n-p}{2}\log(\tau_U^{-1} + \tau_\epsilon^{-1}) + \frac{n}{2} \log \tau_Z + \frac{n}{2} \log (1-\rho^2) \\
			&\phantom{==}- \frac{1}{2} (\tau_U^{-1} + \tau_\epsilon^{-1})^{-1} \vec{Y}^\trans \left\{ \mat{I} - \mat{X}(\mat{X}^\trans \mat{X})^{-1} \mat{X}^\trans \right\} \vec{Y} \\
			&\phantom{==}+(\tau_U^{-1} + \tau_\epsilon^{-1})^{-1} \rho \sqrt{\frac{\tau_Z}{\tau_U}} \vec{Z}^\trans \left\{ \mat{I} - \mat{X}(\mat{X}^\trans \mat{X})^{-1} \mat{X}^\trans \right\} \vec{Y} \\
			&\phantom{==}-\frac{1}{2} (\tau_U^{-1} + \tau_\epsilon^{-1})^{-1} \rho^2 \frac{\tau_Z}{\tau_U} \vec{Z}^\trans \left\{ \mat{I} - \mat{X}(\mat{X}^\trans \mat{X})^{-1} \mat{X}^\trans \right\} \vec{Z} \\
			&\phantom{==}-\frac{1}{2} \tau_Z (1-\rho^2) \vec{Z}^\trans \vec{Z} \\
			&= C - \frac{n-p}{2}\log(\tau_U^{-1} + \tau_\epsilon^{-1}) + \frac{n}{2} \log \tau_Z + \frac{n}{2} \log (1-\rho^2) \\
			&\phantom{==}- \frac{1}{2} (\tau_U^{-1} + \tau_\epsilon^{-1})^{-1} \vec{Y}^\trans \left\{ \mat{I} - \mat{X}(\mat{X}^\trans \mat{X})^{-1} \mat{X}^\trans \right\} \vec{Y} \\
			&\phantom{==}-\frac{1}{2} \tau_Z (1-\rho^2) \vec{Z}^\trans \vec{Z},
	\end{align*}
	where the last equation holds because $\vec{Z}$ is a column of $\mat{X}$, and therefore $\vec{Z}^\trans \left\{ \mat{I} - \mat{X}(\mat{X}^\trans \mat{X})^{-1} \mat{X}^\trans \right\} = \vec{0}$.
	Treating the restricted likelihood as a function of $\rho$, due to the term $+ \frac{n}{2} \log (1-\rho^2)$, the log restricted likelihood approaches $-\infty$ as $\rho \to \pm 1$, and so all maxima on $\rho \in [-1, 1]$ are in the interior.
	
	Writing $(\tau_U^{-1} + \tau_\epsilon^{-1}) = \sigma^2$ and $\phi = \tau_Z (1-\rho^2)$ we then have
	\begin{align*}
			\log RL
			&= C - \frac{n-p}{2}\log\sigma^{2} - \frac{1}{2}\sigma^{-2} \vec{Y}^\trans \left\{ \mat{I} - \mat{X}(\mat{X}^\trans \mat{X})^{-1} \mat{X}^\trans \right\} \vec{Y} +\frac{n}{2} \log \phi - \frac{1}{2} \phi \vec{Z}^\trans \vec{Z}, \\
			\frac{\partial}{\partial \sigma^2} \log RL
			&= -\frac{n-p}{2\sigma^2} + \frac{1}{2\sigma^4} \vec{Y}^\trans \left\{ \mat{I} - \mat{X}(\mat{X}^\trans \mat{X})^{-1} \mat{X}^\trans \right\} \vec{Y} \\
			\frac{\partial}{\partial \phi} \log RL
			&= \frac{n}{2\phi} - \frac{1}{2}\vec{Z}^\trans \vec{Z}.
	\end{align*}
	Thus $\tau_U$ and $\tau_\epsilon$ are not identifiable, and the global maximum is at $$\sigma^2 = (n-p)^{-1} \vec{Y}^\trans \left\{ \mat{I} - \mat{X}(\mat{X}^\trans \mat{X})^{-1} \mat{X}^\trans \right\} \vec{Y}.$$
	Similarly, $\tau_Z$ and $\rho$ are not identifiable, and the global maximum is at $\phi = n / \vec{Z}^\trans \vec{Z}$.
	As a final result, $\widebar{\vec{\beta}} = (\mat{X}^\trans \mat{V}^{-1} \mat{X})^{-1} \mat{X}^\trans \mat{V}^{-1} (\vec{Y} - \mat{B} \vec{Z}) = (\mat{X}^\trans \mat{X})^{-1} \mat{X}^\trans (\vec{Y} - \rho \sqrt{\frac{\tau_Z}{\tau_U}}\vec{Z})$ is undetermined.
	
	\section{Simulation results when unobserved confounder is correlated with observed confounder}
	\label{app:sim-xucorr}
	
	\begin{table}[H]
		\small
		\caption{Simulation results from 100 data sets of size $n = 300$. The -RS suffix indicates estimators with the restriction $\phi_Z \leq \phi_U$. \vspace{8pt}}
		\label{tab:sim}
		\centering
		\begin{tabularx}{0.6\textwidth}{rd{-2}d{-2}d{-2}d{-2}}
			\toprule
			\multicolumn{1}{c}{Estimator}  & \multicolumn{1}{P{5em}}{Bias} & \multicolumn{1}{P{5em}}{Std. Err.} & \multicolumn{1}{P{5em}}{RMSE} & \multicolumn{1}{P{5em}}{95\% CI \newline Coverage} \\
			\midrule
			Non-spatial       & 0.36 & 0.17 & 0.40 & 0.03 \\
			Spatial       & 0.35 & 0.12 & 0.37 & 0.08 \\
			Spatial-RS    & 0.35 & 0.12 & 0.37 & 0.05 \\
			Affine    & 0.21 & 0.42 & 0.46 & 0.91 \\
			Affine-RS & 0.11 & 0.26 & 0.28 & 0.95 \\
			\bottomrule
		\end{tabularx}
	\end{table}
	
	\section{Simulation results maximum a posteriori estimation}
	\label{app:sim-map}
	
	Here we present some simulation results for linear models for continuous outcomes. Estimation is based on the REML approach presented in the manuscript.
	The exact regularization prior is used because it does not add a substantial computational burden beyond that otherwise necessary for the rest of the MAP estimation procedure.
	
	\begin{table}[H]
		\small
		\caption{Simulation results from 1000 data sets of size $n = 100$. The -RS suffix indicates estimators with the restriction $\phi_Z \leq \phi_U$. \vspace{8pt}}
		\label{tab:sim}
		\centering
		\begin{tabularx}{0.9\textwidth}{rlrd{-2}d{-2}d{-2}d{-2}}
			\toprule
			&
			\multicolumn{1}{l}{Mechanism} & \multicolumn{1}{c}{Estimator}  & \multicolumn{1}{P{5em}}{Bias} & \multicolumn{1}{P{5em}}{Std. Err.} & \multicolumn{1}{P{5em}}{RMSE} & \multicolumn{1}{P{5em}}{95\% CI \newline Coverage} \\
			\midrule
			GM 1 & Independent
			& Non-spatial &  0.00 & 0.19 & 0.19 & 0.94 \\
			& & Spatial & 0.00 & 0.19 & 0.19 & 0.93 \\
			& & Spatial-RS & 0.00 & 0.19 & 0.19 & 0.93 \\
			& & Affine & 0.01 & 0.34 & 0.34 & 0.98 \\
			& & Affine-RS & 0.01 & 0.35 & 0.35 & 0.98 \\
			\addlinespace
			\addlinespace
			GM 2 & Large-scale 
			& Non-spatial & 0.67 & 0.15 & 0.69 & 0.00 \\
			& confounder & Spatial & 0.64 & 0.15 & 0.65 & 0.01 \\
			& & Spatial-RS & 0.63 & 0.14 & 0.64 & 0.01 \\
			& & Affine & 0.49 & 0.47 & 0.68 & 0.81 \\
			& & Affine-RS & 0.24 & 0.36 & 0.43 & 0.96 \\
			\addlinespace
			\addlinespace
			GM 3 & Large-scale 
			& Non-spatial & 0.56 & 0.12 & 0.57 & 0.01 \\
			& exposure & Spatial & 0.55 & 0.12 & 0.57 & 0.01 \\
			& & Spatial-RS & 0.54 & 0.13 & 0.56 & 0.02 \\
			& & Affine & 0.56 & 0.28 & 0.63 & 0.87 \\
			& & Affine-RS & 0.30 & 0.59 & 0.66 & 0.88 \\
			\addlinespace
			\addlinespace
			GM 4 & Same scales
			& Non-spatial & 0.60 & 0.14 & 0.62 & 0.01 \\
			& & Spatial & 0.59 & 0.14 & 0.61 & 0.01 \\
			& & Spatial-RS & 0.58 & 0.14 & 0.59 & 0.02 \\
			& & Affine & 0.55 & 0.37 & 0.66 & 0.85 \\
			& & Affine-RS & 0.31 & 0.47 & 0.56 & 0.92 \\
			\addlinespace
			\addlinespace
			GM 5 & Non-constant
			& Non-spatial & 0.37 & 0.09 & 0.38 & 0.03 \\
			& conditional & Spatial & 0.36 & 0.09 & 0.37 & 0.04 \\
			& correlation & Spatial-RS & 0.36 & 0.09 & 0.37 & 0.04 \\
			& & Affine & 0.21 & 0.32 & 0.38 & 0.95 \\
			& & Affine-RS & 0.09 & 0.27 & 0.28 & 0.98 \\
			\addlinespace
			\addlinespace
			GM 6 & Non-normal
			& Non-spatial & 1.15 & 0.50 & 1.25 & 0.01 \\
			& joint & Spatial & 1.03 & 0.45 & 1.13 & 0.01 \\
			& distribution & Spatial-RS & 1.07 & 0.48 & 1.17 & 0.01 \\
			& & Affine & 0.74 & 0.63 & 0.97 & 0.84 \\
			& & Affine-RS & 0.59 & 0.56 & 0.81 & 0.90 \\
			\bottomrule
		\end{tabularx}
	\end{table}

\clearpage

\bibliographystyle{abbrvnat}
\bibliography{notes}

\begin{thebibliography}{52}
\providecommand{\natexlab}[1]{#1}
\providecommand{\url}[1]{\texttt{#1}}
\expandafter\ifx\csname urlstyle\endcsname\relax
  \providecommand{\doi}[1]{doi: #1}\else
  \providecommand{\doi}{doi: \begingroup \urlstyle{rm}\Url}\fi

\bibitem[{American Heart Association}(2019)]{aha2019heart}
{American Heart Association}.
\newblock Heart disease and stroke statistics---2019 update: {A} report from
  the {American} {Heart} {Association}.
\newblock \emph{Circulation}, 139\penalty0 (10):\penalty0 e56--e528, 2019.

\bibitem[Antonelli et~al.(2019)Antonelli, Parmigiani, Dominici,
  et~al.]{antonelli2019high}
J.~Antonelli, G.~Parmigiani, F.~Dominici, et~al.
\newblock High-dimensional confounding adjustment using continuous spike and
  slab priors.
\newblock \emph{Bayesian Analysis}, 14\penalty0 (3):\penalty0 825--848, 2019.

\bibitem[Banerjee et~al.(2004)Banerjee, Carlin, and
  Gelfand]{banerjee2004hierarchical}
S.~Banerjee, B.~P. Carlin, and A.~E. Gelfand.
\newblock \emph{Hierarchical Modeling and Analysis for Spatial data}.
\newblock CRC Press, Boca Raton, Florida, United States, 2004.

\bibitem[Baron and Kenny(1986)]{Baron1986}
R.~M. Baron and D.~A. Kenny.
\newblock The moderator-mediator variable distinction in social psychological
  research: {C}onceptual, strategic, and statistical considerations.
\newblock \emph{Journal of Personality and Social Psychology}, 51\penalty0
  (6):\penalty0 1173--1182, 1986.

\bibitem[Belloni et~al.(2014)Belloni, Chernozhukov, and
  Hansen]{belloni2014inference}
A.~Belloni, V.~Chernozhukov, and C.~Hansen.
\newblock Inference on treatment effects after selection among high-dimensional
  controls.
\newblock \emph{The Review of Economic Studies}, 81\penalty0 (2):\penalty0
  608--650, 2014.

\bibitem[Besag(1974)]{besag1974spatial}
J.~Besag.
\newblock Spatial interaction and the statistical analysis of lattice systems.
\newblock \emph{Journal of the Royal Statistical Society, Series B (Statistical
  Methodology)}, 36\penalty0 (2):\penalty0 192--236, 1974.

\bibitem[{Centers for Disease Control and Prevention}(2018)]{cdc2018underlying}
{Centers for Disease Control and Prevention}.
\newblock Underlying cause of death 1999--2017 on {CDC WONDER} online database,
  2018.
\newblock URL \url{https://wonder.csc.gov/ucd-icd10.html}.

\bibitem[Chung et~al.(2013)Chung, Rabe-Hesketh, and Choi]{chung2013avoiding}
Y.~Chung, S.~Rabe-Hesketh, and I.~H. Choi.
\newblock Avoiding zero between-study variance stimates in random-effects
  meta-analysis.
\newblock \emph{Statistics in Medicine}, 32\penalty0 (23):\penalty0 4071--4089,
  2013.

\bibitem[Congdon(2013)]{Congdon2013}
P.~Congdon.
\newblock Assessing the impact of socioeconomic variables on small area
  variations in suicide outcomes in {E}ngland.
\newblock \emph{International Journal of Environmental Research and Public
  Health}, 10\penalty0 (1):\penalty0 158--177, 2013.

\bibitem[D'Amour(2019)]{damour2019multiple}
A.~D'Amour.
\newblock On multi-cause causal inference with unobserved confounding:
  {C}ounterexamples, impossibility, and alternatives.
\newblock \emph{Proceedings of the 22nd International Conference on Artificial
  Intelligence and Statistics (AISTATS)}, 89, 2019.

\bibitem[Dwyer-Lindgren et~al.(2014)Dwyer-Lindgren, Mokdad, Srebotnjak,
  Flaxman, Hansen, and Murray]{dwyer2014cigarette}
L.~Dwyer-Lindgren, A.~H. Mokdad, T.~Srebotnjak, A.~D. Flaxman, G.~M. Hansen,
  and C.~J. Murray.
\newblock Cigarette smoking prevalence in {US} counties: 1996-2012.
\newblock \emph{Population Health Metrics}, 12\penalty0 (1):\penalty0 5, 2014.

\bibitem[Gelfand and Vounatsou(2003)]{gelfand2003proper}
A.~E. Gelfand and P.~Vounatsou.
\newblock Proper multivariate conditional autoregressive models for spatial
  data analysis.
\newblock \emph{Biostatistics}, 4\penalty0 (1):\penalty0 11--15, 2003.

\bibitem[Gelfand et~al.(2005)Gelfand, Kottas, and
  MacEachern]{gelfand2005bayesian}
A.~E. Gelfand, A.~Kottas, and S.~N. MacEachern.
\newblock Bayesian nonparametric spatial modeling with {D}irichlet process
  mixing.
\newblock \emph{Journal of the American Statistical Association}, 100\penalty0
  (471):\penalty0 1021--1035, 2005.

\bibitem[Hahn(1998)]{hahn1998role}
J.~Hahn.
\newblock On the role of the propensity score in efficient semiparametric
  estimation of average treatment effects.
\newblock \emph{Econometrica}, pages 315--331, 1998.

\bibitem[Hahn et~al.(2018)Hahn, Carvalho, Puelz, He,
  et~al.]{hahn2018regularization}
P.~R. Hahn, C.~M. Carvalho, D.~Puelz, J.~He, et~al.
\newblock Regularization and confounding in linear regression for treatment
  effect estimation.
\newblock \emph{Bayesian Analysis}, 13\penalty0 (1):\penalty0 163--182, 2018.

\bibitem[Hanks et~al.(2015)Hanks, Schliep, Hooten, and
  Hoeting]{hanks2015restricted}
E.~M. Hanks, E.~M. Schliep, M.~B. Hooten, and J.~A. Hoeting.
\newblock Restricted spatial regression in practice: {G}eostatistical models,
  confounding, and robustness under model misspecification.
\newblock \emph{Environmetrics}, 26\penalty0 (4):\penalty0 243--254, 2015.

\bibitem[Hill(2011)]{hill2011bayesian}
J.~L. Hill.
\newblock Bayesian nonparametric modeling for causal inference.
\newblock \emph{Journal of Computational and Graphical Statistics}, 20\penalty0
  (1):\penalty0 217--240, 2011.

\bibitem[Hirano and Imbens(2004)]{hirano2004propensity}
K.~Hirano and G.~W. Imbens.
\newblock The propensity score with continuous treatments.
\newblock 2004.

\bibitem[Hodges and Reich(2010)]{Hodges2010}
J.~Hodges and B.~Reich.
\newblock Adding spatially-correlated errors can mess up the fixed effect you
  love.
\newblock \emph{The American Statistician}, 64\penalty0 (4):\penalty0 325--334,
  2010.

\bibitem[Hughes and Haran(2013)]{hughes2013dimension}
J.~Hughes and M.~Haran.
\newblock Dimension reduction and alleviation of confounding for spatial
  generalized linear mixed models.
\newblock \emph{Journal of the Royal Statistical Society: Series B (Statistical
  Methodology)}, 75\penalty0 (1):\penalty0 139--159, 2013.

\bibitem[Imbens and Rubin(2015)]{imbens_rubin_2015}
G.~W. Imbens and D.~B. Rubin.
\newblock \emph{Sensitivity Analysis and Bounds}, pages 496--510.
\newblock Cambridge University Press, 2015.
\newblock \doi{10.1017/CBO9781139025751.023}.

\bibitem[Keele et~al.(2015)Keele, Titiunik, and Zubizarreta]{Keele2015}
L.~Keele, R.~Titiunik, and J.~Zubizarreta.
\newblock Enhancing a geographic regression discontinuity design through
  matching to estimate the effect of ballot initiatives on voter turnout.
\newblock \emph{Journal of Royal Statistical Society, Series A (Statistics in
  Society)}, 178:\penalty0 223--239, 2015.

\bibitem[Keller and Szpiro(2019)]{keller2019selecting}
J.~P. Keller and A.~A. Szpiro.
\newblock Selecting a scale for spatial confounding adjustment.
\newblock \emph{arXiv preprint arXiv:1909.11161}, 2019.

\bibitem[Kelli et~al.(2017)Kelli, Hammadah, Ahmed, Ko, Topel, Samman-Tahhan,
  Awad, Patel, Mohammed, Sperling, et~al.]{kelli2017association}
H.~M. Kelli, M.~Hammadah, H.~Ahmed, Y.-A. Ko, M.~Topel, A.~Samman-Tahhan,
  M.~Awad, K.~Patel, K.~Mohammed, L.~S. Sperling, et~al.
\newblock Association between living in food deserts and cardiovascular risk.
\newblock \emph{Circulation: Cardiovascular Quality and Outcomes}, 10\penalty0
  (9):\penalty0 e003532, 2017.

\bibitem[Kelli et~al.(2019)Kelli, Kim, Samman~Tahhan, Liu, Ko, Hammadah,
  Sullivan, Sandesara, Alkhoder, Choudhary, et~al.]{kelli2019living}
H.~M. Kelli, J.~H. Kim, A.~Samman~Tahhan, C.~Liu, Y.-A. Ko, M.~Hammadah,
  S.~Sullivan, P.~Sandesara, A.~A. Alkhoder, F.~K. Choudhary, et~al.
\newblock Living in food deserts and adverse cardiovascular outcomes in
  patients with cardiovascular disease.
\newblock \emph{Journal of the American Heart Association}, 8\penalty0
  (4):\penalty0 e010694, 2019.

\bibitem[Kottas et~al.(2008)Kottas, Duan, and Gelfand]{kottas2008modeling}
A.~Kottas, J.~A. Duan, and A.~E. Gelfand.
\newblock Modeling disease incidence data with spatial and spatio temporal
  dirichlet process mixtures.
\newblock \emph{Biometrical Journal}, 50\penalty0 (1):\penalty0 29--42, 2008.

\bibitem[Laraia et~al.(2004)Laraia, Siega-Riz, Kaufman, and
  Jones]{laraia2004proximity}
B.~A. Laraia, A.~M. Siega-Riz, J.~S. Kaufman, and S.~J. Jones.
\newblock Proximity of supermarkets is positively associated with diet quality
  index for pregnancy.
\newblock \emph{Preventive Medicine}, 39\penalty0 (5):\penalty0 869--875, 2004.

\bibitem[Lee and Sarran(2015)]{Lee2015a}
D.~Lee and C.~Sarran.
\newblock Controlling for unmeasured confounding and spatial misalignment in
  long-term air pollution and health studies.
\newblock \emph{Environmetrics}, 26\penalty0 (7):\penalty0 477--487, 2015.

\bibitem[Morland et~al.(2002)Morland, Wing, Roux, and
  Poole]{morland2002neighborhood}
K.~Morland, S.~Wing, A.~D. Roux, and C.~Poole.
\newblock Neighborhood characteristics associated with the location of food
  stores and food service places.
\newblock \emph{American Journal of Preventive Medicine}, 22\penalty0
  (1):\penalty0 23--29, 2002.

\bibitem[Ogburn et~al.(2019)Ogburn, Shpitser, and Tchetgen]{ogburn2019comment}
E.~L. Ogburn, I.~Shpitser, and E.~J.~T. Tchetgen.
\newblock Comment on “blessings of multiple causes”.
\newblock \emph{Journal of the American Statistical Association}, 114\penalty0
  (528):\penalty0 1611--1615, 2019.

\bibitem[Paciorek(2010)]{paciorek2010importance}
C.~J. Paciorek.
\newblock The importance of scale for spatial-confounding bias and precision of
  spatial regression estimators.
\newblock \emph{Statistical Science}, 25\penalty0 (1):\penalty0 107--125, 2010.

\bibitem[Papadogeorgou et~al.(2018)Papadogeorgou, Choirat, and
  Zigler]{papadogeorgou2016adjusting}
G.~Papadogeorgou, C.~Choirat, and C.~M. Zigler.
\newblock Adjusting for unmeasured spatial confounding with distance adjusted
  propensity score matching.
\newblock \emph{Biostatistics}, Online, 2018.

\bibitem[Pearce et~al.(2008)Pearce, Hiscock, Blakely, and
  Witten]{pearce2008contextual}
J.~Pearce, R.~Hiscock, T.~Blakely, and K.~Witten.
\newblock The contextual effects of neighbourhood access to supermarkets and
  convenience stores on individual fruit and vegetable consumption.
\newblock \emph{Journal of Epidemiology \& Community Health}, 62\penalty0
  (3):\penalty0 198--201, 2008.

\bibitem[Powell et~al.(2007)Powell, Auld, Chaloupka, O’Malley, and
  Johnston]{powell2007associations}
L.~M. Powell, M.~C. Auld, F.~J. Chaloupka, P.~M. O’Malley, and L.~D.
  Johnston.
\newblock Associations between access to food stores and adolescent body mass
  index.
\newblock \emph{American Journal of Preventive Medicine}, 33\penalty0
  (4):\penalty0 S301--S307, 2007.

\bibitem[Robins(1986)]{robins1986new}
J.~Robins.
\newblock A new approach to causal inference in mortality studies with a
  sustained exposure period—application to control of the healthy worker
  survivor effect.
\newblock \emph{Mathematical Modelling}, 7\penalty0 (9-12):\penalty0
  1393--1512, 1986.

\bibitem[Robins and Rotnitzky(1995)]{robins1995semiparametric}
J.~M. Robins and A.~Rotnitzky.
\newblock Semiparametric efficiency in multivariate regression models with
  missing data.
\newblock \emph{Journal of the American Statistical Association}, 90\penalty0
  (429):\penalty0 122--129, 1995.

\bibitem[Rosenbaum(2002)]{Rosenbaum2002}
P.~R. Rosenbaum.
\newblock \emph{Sensitivity to Hidden Bias}, pages 105--170.
\newblock Springer New York, New York, NY, 2002.

\bibitem[Rosenbaum and Rubin(1983{\natexlab{a}})]{rosenbaum1983assessing}
P.~R. Rosenbaum and D.~B. Rubin.
\newblock Assessing sensitivity to an unobserved binary covariate in an
  observational study with binary outcome.
\newblock \emph{Journal of the Royal Statistical Society: Series B
  (Methodological)}, 45\penalty0 (2):\penalty0 212--218, 1983{\natexlab{a}}.

\bibitem[Rosenbaum and Rubin(1983{\natexlab{b}})]{rosenbaum1983central}
P.~R. Rosenbaum and D.~B. Rubin.
\newblock The central role of the propensity score in observational studies for
  causal effects.
\newblock \emph{Biometrika}, 70\penalty0 (1):\penalty0 41--55,
  1983{\natexlab{b}}.

\bibitem[Rubin(1974)]{Rubin1974}
D.~B. Rubin.
\newblock Estimating causal effects of treatments in randomized and
  nonrandomized studies.
\newblock \emph{Journal of Educational Psychology}, 66\penalty0 (5):\penalty0
  688--701, 1974.

\bibitem[Rubin(1980)]{Rubin1980}
D.~B. Rubin.
\newblock Randomization analysis of experimental data: {T}he {F}isher
  randomization test comment.
\newblock \emph{Journal of the American Statistical Association}, 75\penalty0
  (371):\penalty0 591--593, 1980.

\bibitem[Ruppert et~al.(2003)Ruppert, Wand, and
  Carroll]{ruppert2003semiparametric}
D.~Ruppert, M.~P. Wand, and R.~J. Carroll.
\newblock \emph{Semiparametric Regression}.
\newblock Cambridge University Press, 2003.

\bibitem[Sallis et~al.(1986)Sallis, Nader, Rupp, Atkins, and
  Wilson]{sallis1986san}
J.~F. Sallis, P.~R. Nader, J.~W. Rupp, C.~J. Atkins, and W.~C. Wilson.
\newblock San {D}iego surveyed for heart-healthy foods and exercise facilities.
\newblock \emph{Public Health Reports (1974-)}, pages 216--219, 1986.

\bibitem[Thaden and Kneib(2018)]{thaden2018structural}
H.~Thaden and T.~Kneib.
\newblock Structural equation models for dealing with spatial confounding.
\newblock \emph{The American Statistician}, 72\penalty0 (3):\penalty0 239--252,
  2018.

\bibitem[{United States Department of Agriculture Economic Research
  Service}(2012)]{usda2012food}
{United States Department of Agriculture Economic Research Service}.
\newblock {Food Environment Atlas}, 2012.
\newblock URL
  \url{https://www.ers.usda.gov/data-products/food-environment-atlas}.

\bibitem[VanderWeele and Ding(2017)]{VanDerWeele2017}
T.~J. VanderWeele and P.~Ding.
\newblock Sensitivity analysis in observational research: {I}ntroducing the
  e-value.
\newblock \emph{Annals of Internal Medicine}, 167\penalty0 (4):\penalty0
  268--274, 2017.
\newblock ISSN 15393704.
\newblock \doi{10.7326/M16-2607}.

\bibitem[Ver~Ploeg et~al.(2009)Ver~Ploeg, Breneman, Farrigan, Hamrick, Hopkins,
  Kaufman, Lin, Nord, Smith, Williams, et~al.]{ver2009access}
M.~Ver~Ploeg, V.~Breneman, T.~Farrigan, K.~Hamrick, D.~Hopkins, P.~Kaufman,
  B.-H. Lin, M.~Nord, T.~A. Smith, R.~Williams, et~al.
\newblock Access to affordable and nutritious food: {M}easuring and
  understanding food deserts and their consequences: {R}eport to {Congress}.
\newblock Technical report, 2009.

\bibitem[Verbitsky-Savitz and Raudenbush(2012)]{Verbitsky-savitz2012}
N.~Verbitsky-Savitz and S.~W. Raudenbush.
\newblock Causal inference under interference in spatial settings: {A} case
  study evaluating community policing program in {C}hicago.
\newblock \emph{Epidemiologic Methods}, 1\penalty0 (1):\penalty0 105--130,
  2012.

\bibitem[Wilson and Reich(2014)]{wilson2014confounder}
A.~Wilson and B.~J. Reich.
\newblock Confounder selection via penalized credible regions.
\newblock \emph{Biometrics}, 70\penalty0 (4):\penalty0 852--861, 2014.

\bibitem[Won et~al.(2013)Won, Lim, Kim, and Rajaratnam]{won2013condition}
J.~H. Won, J.~Lim, S.~J. Kim, and B.~Rajaratnam.
\newblock Condition-number-regularized covariance estimation.
\newblock \emph{Journal of the Royal Statistical Society, Series B (Statistical
  Methodology)}, 75\penalty0 (3):\penalty0 427--450, 2013.

\bibitem[Zhou et~al.(2019)Zhou, Elliott, and Little]{zhou2019penalized}
T.~Zhou, M.~R. Elliott, and R.~J. Little.
\newblock Penalized spline of propensity methods for treatment comparison.
\newblock \emph{Journal of the American Statistical Association}, 114\penalty0
  (525):\penalty0 1--19, 2019.

\bibitem[Zigler and Dominici(2014)]{zigler2014uncertainty}
C.~M. Zigler and F.~Dominici.
\newblock Uncertainty in propensity score estimation: {B}ayesian methods for
  variable selection and model-averaged causal effects.
\newblock \emph{Journal of the American Statistical Association}, 109\penalty0
  (505):\penalty0 95--107, 2014.

\end{thebibliography}

\end{document}